\newcommand{\hide}[1]{}
\newcommand{\dout}{{d^+}}
\newcommand{\doutG}{{d_G^+}}
\newcommand{\din}{{d^-}}
\newcommand{\dinG}{{d_G^-}}
\newcommand{\Nout}{{N^+}}
\newcommand{\NoutG}{{N_G^+}}
\newcommand{\Nin}{{N^-}}
\newcommand{\NinG}{{N_G^-}}
\newcommand{\pw}{{\rm pw}}
\newcommand{\wild}{{\rm wld}}
\newcommand{\prob}{{\rm\bf Pr}}
\newcommand{\ex}{{\rm\bf E}}
\newcommand{\Vinle}[1]{V^-_{\le {#1}}}
\newcommand{\Vinge}[1]{V^-_{\ge {#1}}}
\newcommand{\Voutle}[1]{V^+_{\le {#1}}}
\newcommand{\Voutge}[1]{V^+_{\ge {#1}}}
\begin{document}
\title{On the Pathwidth of \\Almost Semicomplete Digraphs}

\author{Kenta Kitsunai\inst{1}\and
Yasuaki Kobayashi\inst{2} \and Hisao Tamaki\inst{3}}
\authorrunning{K.~Kitsunai, Y~Kobayashi, and H.~Tamaki}
\institute{NTT DATA Corporation\\\email{mizuna0719@gmail.com}\and
Computer Center, Gakushuin University\\
\email{yasuaki.kobayashi@gakushuin.ac.jp} \and
Department of Computer Science, Meiji University\\
\email{tamaki@cs.meiji.ac.jp}}

\maketitle

\begin{abstract}
We call a digraph {\em $h$-semicomplete} if
each vertex of the digraph has at most $h$ non-neighbors, where
a non-neighbor of a vertex $v$ is a vertex $u \neq v$ such that 
there is no edge between $u$ and $v$ in either direction.
This notion generalizes
that of semicomplete digraphs which are $0$-semicomplete
and tournaments which are semicomplete and have no anti-parallel pairs
of edges.
Our results in this paper are as follows.
(1) We give an algorithm which, given an $h$-semicomplete
digraph $G$ on $n$ vertices and a positive integer $k$,
in $(h + 2k + 1)^{2k} n^{O(1)}$ time either constructs 
a path-decomposition of $G$ of width at most $k$ or concludes
correctly that the pathwidth of $G$ is larger than $k$.
(2) We show that there is a
function $f(k, h)$ such that every $h$-semicomplete digraph of 
pathwidth at least $f(k, h)$
has a semicomplete subgraph of pathwidth at least $k$.

One consequence of these results is that the problem of
deciding if a fixed digraph $H$ is topologically contained in a given
$h$-semicomplete digraph $G$ admits a polynomial-time algorithm
for fixed $h$.
\end{abstract}

\section{Introduction}
A {\em tournament} is a digraph obtained from a complete graph by
orienting each edge. A {\em semicomplete digraph} generalizes
a tournament, allowing each pair of distinct vertices to optionally
have two edges in both directions between them.
Tournaments and semicomplete digraphs are well-studied (see \cite{B-JG08},
for example) and have recently been attracting renewed interests in the
following context.

There are many problems on undirected graphs that admit polynomial
time algorithms but have digraph counterparts that are NP-complete.
For example, Robertson and Seymour \cite{RS95}, in their Graph Minors project,
proved that the $k$ disjoint paths problem (and the $k$ edge-disjoint paths
problem) can be solved in polynomial for fixed $k$. On the other hand, 
digraph versions of these problems are NP-complete even for $k = 2$ due to  
Fortune, Hopcroft, and Wyllie \cite{FHW80}.
Recently,
Chudnovsky, Scot, and Seymour \cite{CSS15} showed that the $k$ directed
disjoint paths problem can be solved in polynomial time for fixed $k$ if the
digraph is restricted to be semicomplete. The edge-disjoint version of the problem is
also polynomial time solvable on semicomplete digraphs, due to 
Fradkin and Seymour \cite{FS15}.
The situation is similar for the topological containment problem, which asks if a given graph (digraph)
contains a subgraph isomorphic to a subdivision of a fixed graph (digraph) $H$:
the undirected version is polynomial time solvable due to the disjoint paths
result and the directed version is NP-complete on general digraphs \cite{FHW80}, 
while the question on semicomplete digraphs is polynomial time solvable due
to Fradkin and Seymour \cite{FS13} and moreover is fixed-parameter
tractable due to Fomin and Pilipczuk \cite{FP13,Pilipczuk12}.
In addition to these algorithmic results, 
some well-quasi-order results that are similar to 
the celebrated 
Graph Minors theorem of Robertson and Seymour \cite{RS04} have
been proved on the class of semicomplete digraphs \cite{CS11,KS15}.
These developments seem to suggest that 
the class of semicomplete digraphs is a promising stage for pursuing
digraph analogues of the splendid outcomes,
direct and indirect, from the Graph Minors project.

Given this progress on semicomplete digraphs, it is natural to look for
more general classes of digraphs on which similar results hold.
Indeed, the results on disjoint paths problems cited above are proved for
some generalizations of semicomplete digraphs.  The
vertex-disjoint path algorithm given in \cite{CSS15} works for a digraph class 
called $d$-path dominant digraphs, which
contains semicomplete digraphs ($d = 1$) and 
digraphs with multipartite underlying graphs ($d = 2$).
The edge-disjoint path algorithm given in \cite{FS15} works for
digraphs with independence number (of the underlying graph) bounded by some
fixed integer. On the other hand, the results for
topological containment in \cite{FS13,FP13,Pilipczuk12} are strictly for the
class of semicomplete graphs.

The {\em pathwidth} of digraphs, which plays an essential role in
some of the above results, is defined as follows. 
Let $G$ be a digraph. A {\em path-decomposition} of $G$ is a sequence $(X_1,
\ldots, X_m)$ of vertex sets $X_i \subseteq V(G)$, called {\em bags}, such that
the following three conditions are satisfied:
\begin{enumerate}
  \item $\bigcup_{1 \leq i \leq m} X_i = V(G)$,
  \item for each edge $(u, v)$ of $G$, $u \in X_i$ and $v \in X_j$ for
  some $i \geq j$, and
  \item for every $v \in V(G)$, the set $\{i \mid v \in X_i\}$ of indices 
  of the bags containing $v$ forms a single integer interval.
\end{enumerate}
The first and the third conditions are the same as in the definition of
the pathwidth of undirected graphs; the second condition, on each edge,
is different and depends on the direction of the edge.
Note that some authors, including the present authors in previous work
in different contexts, reverse the direction of edges in this condition. 
We follow the convention of the papers cited above.
As in the case of undirected graphs, the {\em width} of a
path-decomposition $(X_1, \ldots, X_m)$ is 
$\max_{1 \leq i \leq m}|X_i| - 1$ and the {\em pathwidth} of $G$, denoted by
$\pw(G)$, is the smallest integer $k$ such that there is a path-decomposition of $G$ of width $k$.

Unlike for the pathwidth of undirected graphs, which is
linear-time fixed-parameter tractable \cite{Bod96}, no FPT-time algorithm is
known for computing the pathwidth of general digraphs: 
only XP-time algorithms (of running time $n^{O(k)}$) are
known. The third author of the current paper proposed one in \cite{Tam11},
which was unfortunately flawed and has recently been 
corrected in \cite{KKKTT15} by the current and two more authors. 
Another XP algorithm is
due to Nagamochi~\cite{Naga12}, which is formulated for a more general problem
of optimizing linear layouts in submodular systems.  

In this paper, we consider another direction of generalizing semicomplete
digraphs and study the pathwidth of digraphs in the generalized class.  
For non-negative integer $h$, we say that a simple digraph $G$ is $h$-semicomplete 
if each vertex of $G$ has at most $h$ non-neighbors,
where a non-neighbor of vertex $v$ is a vertex $u$ distinct from
$v$ such that there is no edge of $G$ between $u$ and $v$ in either direction.
Thus, semicomplete digraphs are 0-semicomplete.  Our main results are as
follows.

\begin{theorem}
\label{thm:alg}
There is an algorithm which, 
given an $h$-semicomplete
digraph $G$ on $n$ vertices and a positive integer $k$,
in $(h + 2k + 1)^{2k} n^{O(1)}$ time either constructs 
a path-decomposition of $G$ of width at most $k$ or concludes
correctly that the pathwidth is larger than $k$.
\end{theorem}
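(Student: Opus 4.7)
My plan is to reduce the pathwidth problem to finding a good linear ordering of the vertices, and then design a branching algorithm that exploits the $h$-semicomplete structure.

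First, I will use the standard reformulation of directed pathwidth in terms of vertex orderings: $G$ has a path-decomposition of width at most $k$ if and only if there is a linear ordering $v_1,\ldots,v_n$ of $V(G)$ such that, for every $j$, the set $X_j = \{v_i : i > j,\, N^-(v_i) \cap \{v_1,\ldots,v_j\} \neq \emptyset\}$ of ``active'' unprocessed vertices has cardinality at most $k$; the bag at step $j$ is then $\{v_j\} \cup X_j$, of size at most $k+1$. This reduces our task to searching for such an ordering.

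Second, I will construct the ordering incrementally from left to right. The state at any time is the pair $(P, X)$ where $P$ is the set of already-placed vertices and $X$ is the current boundary. Extending $P$ by $v \notin P$ transforms $X$ into $X' = (X \setminus \{v\}) \cup \{w \notin P \cup \{v\} : v \in N^-(w),\, N^-(w) \cap P = \emptyset\}$; the step is legal only if $|X'| \le k$. A naive branching over all candidates $v$ gives a useless super-polynomial algorithm, so the crux is to bound the number of branches at each step.

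Third, the key structural lemma I will prove is that the next vertex can always be chosen from a restricted pool of size at most $h + 2k + 1$. The intuition is that every vertex outside the current bag $B = X \cup \{v_j\}$ has at most $h$ non-neighbors in $B$, so outside vertices divide into few ``interaction types'' with respect to $B$, and within each type only a single representative need be tried; the pool essentially consists of $B$ itself together with the non-neighbors of bag vertices, plus slack for edge cases. Combining this with a structural argument that at most $2k$ successive branching decisions suffice before the remaining ordering is either forced by previous choices or computable by a polynomial-time completion procedure gives a search tree of size $(h+2k+1)^{2k}$ with polynomial work per leaf, matching the claimed runtime.

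The main obstacle is establishing these two quantitative bounds simultaneously: (i) that only $h + 2k + 1$ candidates suffice per branching step and (ii) that only $2k$ branching steps are needed before the algorithm can commit to a unique, or polynomial-time-completable, continuation. This will likely require a careful case analysis on how the boundary evolves under one step, separating the effect on ``old'' bag vertices (which may leave the bag for free) from that on ``new'' boundary vertices (which must come from the constrained pool). Showing that the branching depth is bounded by $2k$ rather than by $n$ is the most delicate part, and I expect it to rely on showing that after enough branching, the state of the algorithm becomes rigid enough that the remaining ordering can be completed greedily using the semicomplete interactions between the bag and the untouched vertices.
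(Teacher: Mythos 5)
Your claim (i) is fine: in the left-to-right vertex-separation framework, Proposition~\ref{prop:kb-bounded} indeed gives that the number of candidate next vertices $v$ with $d^+(P\cup\{v\})\le k$ is at most $h+2k+1$, so the branching factor can be bounded as you say.

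The gap is in claim (ii), the depth bound, and it is not a technicality — it is where the whole difficulty lives. In a purely left-to-right construction there is no measure that decreases at each non-forced step, so nothing caps the number of branching decisions at $O(k)$ rather than $\Theta(n)$: the boundary can sit at size exactly $k$ for arbitrarily long while genuinely distinct choices of the next vertex lead to genuinely distinct futures, and ``the remaining ordering becomes rigid'' after $2k$ branches is not something you can establish without a new idea. Your phrasing (``I expect it to rely on showing that \ldots the remaining ordering can be completed greedily'') acknowledges this, but the expectation is where the proof has to go, and no argument is sketched. The paper's solution to exactly this problem is to abandon the one-sided ordering in favor of a two-sided state $(S,T)$ (prefix and suffix), re-encoded as gapless separation chains to avoid the directional asymmetry of the $d^+$-based vertex-separation definition. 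The crucial ingredient this buys is a divide-and-conquer step: whenever there is a non-trivial minimum $S$--$T$ separation $(X,Y)$, Lemma~\ref{lem:divide} lets one split into the independent subproblems $(S,\,Y\setminus X)$ and $(X\setminus Y,\,T)$ at no cost in branching. Branching is needed only when both trivial separations are the only minimum ones, and then branching \emph{from both sides simultaneously} strictly increases the minimum separator order $\gamma(S,T)$, which is at most $k$; this is what gives the depth bound and hence the exponent $2k$. None of this has an analogue when you scan left to right: you cannot see the suffix $T$, so you cannot detect or exploit the non-trivial separator, and you cannot certify progress by an increase in $\gamma$. To repair your approach you would essentially have to rediscover the two-sided separation-chain framework and the accompanying $\mu$/$\gamma$ recursion analysis of Lemma~\ref{lem:num_rec}.
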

This theorem generalizes the $k^{O(k)} n^2$ time result of Pilipczuk
\cite{Pilipczuk12} on semicomplete digraphs. Compared on semicomplete
digraphs, his algorithm has smaller dependence on $n$ (our 
$O(1)$ exponent on $n$ is naively 4), while  
the hidden constant in the exponent on $k$ can be large.
\begin{theorem}
\label{thm:comb}
There is a function $f(h, k)$ on positive integers $h$ and $k$ such that
each $h$-semicomplete digraph with pathwidth at least
$f(h, k)$ has a semicomplete subgraph of pathwidth at least $k$.
\end{theorem}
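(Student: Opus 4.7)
The plan is to reduce Theorem~\ref{thm:comb} to a purely combinatorial lemma about combining path decompositions of semicomplete subgraphs, and then prove that lemma by induction. First, consider the non-neighbor graph $H$ on $V(G)$, whose edges are exactly the non-neighbor pairs of $G$. Since $G$ is $h$-semicomplete, $H$ has maximum degree at most $h$, so greedy coloring gives $\chi(H) \le h + 1$. This yields a partition $V(G) = V_1 \cup \cdots \cup V_{h+1}$ in which each color class induces a semicomplete subgraph of $G$. If some $\pw(G[V_i]) \ge k$, then $G[V_i]$ is itself the desired semicomplete subgraph; so we may assume $\pw(G[V_i]) < k$ for every $i$, and the task reduces to bounding $\pw(G)$ in terms of $h$ and $k$.

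Concretely, I would aim to prove the following key lemma: \emph{if $G$ is $h$-semicomplete and $V(G)$ partitions into $r$ sets $V_1, \ldots, V_r$ with each $G[V_i]$ semicomplete of pathwidth less than $k$, then $\pw(G) \le g(r,h,k)$ for an explicit function $g$.} Taking $r = h+1$ yields $f(h,k) := g(h+1,h,k) + 1$.

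The natural strategy for the lemma is induction on $r$. The base case $r = 1$ is trivial. For the inductive step, set $U = V_1 \cup \cdots \cup V_{r-1}$; since $G[U]$ is itself $h$-semicomplete and inherits a partition into $r - 1$ semicomplete parts, by induction $\pw(G[U]) \le g(r-1,h,k)$. It then remains to merge a path decomposition $P_U$ of $G[U]$ with a path decomposition $P_r$ of $G[V_r]$ of width less than $k$ into a path decomposition of $G$. The leverage is that every vertex has at most $h$ non-neighbors across the bipartition $(U, V_r)$, so almost every pair $(u,v)$ with $u \in U$, $v \in V_r$ is joined by at least one directed edge; consequently the relative placement of each $V_r$-vertex in the merged decomposition is tightly constrained by $P_U$. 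One can then interleave $P_r$ into $P_U$, inserting each $a \in V_r$ into the bags determined by the directions of its cross-edges, while absorbing its bounded set of non-neighbor witnesses as extra bag members. This should yield a recursion of the form $g(r,h,k) \le g(r-1,h,k) + k + O(h)$, and hence a polynomial bound on $f$.

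The main obstacle is executing this two-block merge rigorously: the result must be a valid directed path decomposition, so each vertex must occupy a contiguous interval of bags and the bag-order condition for directed edges must hold simultaneously for the edges of $G[U]$, the edges of $G[V_r]$, the well-behaved (near-complete) cross-edges, and the at most $O(h \cdot |V_r|)$ exceptional cross-edges involving non-neighbor witnesses. I expect this to require first refining both $P_U$ and $P_r$ into nice form, then processing the vertices of $V_r$ in an order compatible with the frontier structure of $P_U$, and carrying $O(h)$ exceptional vertices in each bag to cover non-neighbor pairs whose directions would otherwise block a direct embedding.
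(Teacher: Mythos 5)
Your reduction to the key lemma (``if every color class of a proper coloring of the non-neighbor graph induces a semicomplete subgraph of pathwidth $< k$, then $\pw(G)$ is bounded in terms of $h$ and $k$'') is where the proposal breaks down: that lemma is false. Take $V(G) = \{a_1,\ldots,a_n\}\cup\{b_1,\ldots,b_n\}$, with $a_i \to a_j$ and $b_i \to b_j$ for $i<j$, cross-edges $a_i \to b_j$ iff $j<i$ and $b_j \to a_i$ iff $j>i$, and $\{a_i,b_i\}$ a non-edge for each $i$. This $G$ is $1$-semicomplete, and the partition $V_1=\{a_i\}$, $V_2=\{b_j\}$ (a perfectly legitimate output of greedy coloring of the matching non-neighbor graph) gives two transitive tournaments of pathwidth $0$. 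Yet every vertex of $G$ has out-degree exactly $n-1$, so any permutation already incurs width $n-1$ at its first singleton prefix, and hence $\pw(G)\geq n-1$ is unbounded while $r=2$, $h=1$, $k=1$ are fixed. So no function $g(r,h,k)$ exists, and your merge step cannot succeed in general, no matter how cleverly you interleave $P_U$ and $P_r$: the obstruction lives entirely in the cross-edges and is not seen by either part's decomposition.

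A related red flag is that Theorem~\ref{thm:comb} nowhere promises the large-pathwidth semicomplete subgraph will be a color class of a fixed coloring. Indeed, in the example above the witnessing subgraph has to mix $a$'s and $b$'s: e.g.\ $S=\{a_1,b_2,a_3,b_4,\ldots\}$ is independent in the non-neighbor graph and, one can check, has out-degrees concentrated on two consecutive values, forming a degree tangle that forces $\pw(G[S]) = \Omega(n)$. The paper's route is necessarily different: it completes $G$ to a semicomplete $G'$ on the same vertex set, uses Pilipczuk's obstruction theory to locate a large degree tangle, matching tangle, or (a new, sampling-robust obstacle called a) spider in $G'$, and then samples a random independent set $I$ of the non-neighbor graph (with the concentration bound of Theorem~\ref{thm:sample-indep}) so that $G[I]$ is semicomplete and, with positive probability, inherits a scaled-down version of the obstacle. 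Your chromatic-number observation is a reasonable first instinct, but the argument has to be probabilistic or at least adaptively choose the independent set with the obstacle in view; a static coloring followed by a decomposition merge cannot work.
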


The topological containment result in \cite{FS13} is based
on two components.  One is a combinatorial result
that, for each fixed digraph $H$, there is a positive
integer $k$ such that every semicomplete digraph $G$ of
pathwidth larger than $k$ topologically contains $H$.
The second component is a dynamic programming
algorithm that, given a digraph $G$ on $n$ vertices together with a
path-decomposition of width $k$ and a digraph $H$ on $r$ 
vertices with $s$ edges,
decides if $G$ topologically contains $H$ in
$O(n^{3(k + rs) + 4})$ time.  Note that
this algorithm does not require $G$ to be semicomplete.
Theorem~\ref{thm:comb} enables us to generalize the
first component to $h$-semicomplete digraphs and
Theorem~\ref{thm:alg} gives us the path-decomposition
to be used in the dynamic programming.  Thus, we have
the following theorem. 

\begin{theorem}
For fixed positive integer $h$ and fixed digraph $H$, 
the problem of deciding if a given $h$-semicomplete digraph topologically
contains $H$ can be solved in polynomial time. 
\end{theorem}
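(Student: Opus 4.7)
The plan is to combine Theorems~\ref{thm:alg} and \ref{thm:comb} with the two components from Fradkin and Seymour \cite{FS13} that the introduction highlights. First, I invoke the combinatorial side: for the fixed digraph $H$, the result of \cite{FS13} supplies a constant $k_H$ such that every semicomplete digraph of pathwidth larger than $k_H$ topologically contains $H$. Applying Theorem~\ref{thm:comb} to $h$ and $k_H + 1$, I set $N = f(h, k_H + 1)$, so that any $h$-semicomplete digraph of pathwidth at least $N$ contains a semicomplete subgraph of pathwidth at least $k_H + 1$, and this subgraph—hence the whole digraph—topologically contains $H$. Since $h$ and $H$ are fixed, $N$ is a constant.

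Given the input $h$-semicomplete digraph $G$ on $n$ vertices, I then run the algorithm of Theorem~\ref{thm:alg} with parameter $k = N - 1$. In time $(h + 2N - 1)^{2(N-1)} n^{O(1)} = n^{O(1)}$, the algorithm either reports that $\pw(G) \ge N$, in which case the argument above lets me safely answer \textsc{Yes}, or it outputs a path-decomposition of $G$ of width at most $N - 1$.

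In the latter case I feed this path-decomposition into the dynamic programming algorithm of \cite{FS13}, which decides topological containment of $H$ in any digraph equipped with a path-decomposition of width $k$ in $O(n^{3(k + rs) + 4})$ time, where $r$ and $s$ are the number of vertices and edges of $H$. Since $N$, $r$, and $s$ are all constants depending only on $h$ and $H$, this step is also polynomial in $n$, and composing the two procedures yields the claimed polynomial-time algorithm.

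The real mathematical work has already been done by the two preceding theorems together with the tools imported from \cite{FS13}; the only subtle point I foresee is a formal one, namely emphasizing that the dynamic programming algorithm of \cite{FS13} applies to \emph{arbitrary} digraphs equipped with a path-decomposition and not just to semicomplete ones, so that plugging in the decomposition produced by Theorem~\ref{thm:alg}—whose output may well be far from semicomplete—is legitimate. The excerpt explicitly records this generality, so beyond citing it carefully no further combinatorial argument is required.
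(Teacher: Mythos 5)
Your proposal is correct and follows exactly the argument the paper sketches in the paragraph preceding the theorem: use Theorem~\ref{thm:comb} together with Fradkin--Seymour's combinatorial result to get a constant threshold above which the answer is automatically yes, use the algorithm of Theorem~\ref{thm:alg} to either detect that case or produce a bounded-width path-decomposition, and then invoke the Fradkin--Seymour dynamic programming routine (which, as you rightly emphasize, applies to arbitrary digraphs). No gaps; this matches the paper's own reasoning.
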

We should remark that extending the FPT result of \cite{FP13,Pilipczuk12}
in this direction using the approach of this paper appears difficult, 
as the FPT-time dynamic programming algorithm therein heavily relies on the 
strict semicompleteness of the input digraph.

\subsubsection{Techniques}
Our algorithm in Theorem~\ref{thm:alg} borrows the notion of
separation chains from \cite{Pilipczuk12} but the algorithm itself
is completely different from the one in \cite{Pilipczuk12}.
The advantage of our algorithm is that it works
correctly on general digraphs, in contrast to the one in \cite{Pilipczuk12}
which is highly specialized for semicomplete digraphs.
We need a property of $h$-semicomplete digraphs only
in the analysis of the running time.

Our algorithm is based on the one due to Nagamochi~\cite{Naga12} for
more general problem of finding an optimal linear layout for submodular
systems. Informally, his algorithm applied to the pathwidth computation works
as follows. Fix digraph $G$ and let $d^+(U)$ for each $U \subseteq V(G)$ denote
the number of out-neighbors of $U$. The {\em width} of permutation $\pi$ of
$V(G)$ is defined to be the maximum of $d^+(V(\pi'))$ where
$\pi'$ ranges over all the prefixes of $\pi$ and $V(\pi')$ denotes the
set of vertices in $\pi'$. The smallest integer $k$ such that
there is a permutation of width $k$ is called the {\em vertex separation number}
of $G$ and is equal to the pathwidth of $G$ \cite{YC08}.
Thus, our goal is to decide, given $k$, if there is a permutation of $V(G)$ of
width at most $k$.

Nagamochi's algorithm is a combination of divide-and-conquer and
branching from both sides of the permutation. For disjoint subsets $S$ and $T$ of $V(G)$, call
a permutation $\pi$ of $V(G$) an {\em $(S, T)$-permutation}, if it has a
prefix $\pi'$ with $V(\pi') = S$ and a suffix $\pi''$ with $V(\pi'') = T$.
A vertex set $X$ that minimize $d^+(X)$ subject to $S \subseteq X \subseteq
V(G) \setminus T$ is called a minimum {\em $(S, T)$-separator}. A crucial
observation, based on the submodularity of set function $d^+$ is the following.
Let $X$ be a minimum $(S, T)$-separator. Then, if there is an $(S, T)$-permutation of width 
at most $k$ then 
there is such a permutation that is an $(S, V(G) \setminus X)$-permutation and
an $(X, T)$-permutation at the same time.
Thus if there is a minimum $(S, T)$-separator distinct from both $S$ and
$V(G) \setminus T$, then we can divide the problem into two smaller subproblems.
When there is no minimum $(S, T)$-separator other than $S$ or $V(G) \setminus
T$, we need to branch on vertices to add to $S$ or $T$. 
For general digraphs, the running time is $n^{2k + O(1)}$: we need
to branch on $O(n)$ vertices from both sides, and the depth of branching is bounded by $k$, as the 
value $d^+(X)$ of the minimum separator $X$ increases at least by one after 
we branch from both sides.  

For $h$-semicomplete digraphs, 
we observe that the number of vertices $v$ such that
$d^+(S \cup \{v\}) \leq k$ is at most $h + 2k + 1$
(see Proposition~\ref{prop:kb-bounded})
and therefore, we need to branch on at most 
$h + 2k + 1$ vertices when extending from $S$.  
Unfortunately, we do not
have a similar bound on the number of vertices to branch on
from the side of $T$.  For example, if $|T| < k$,
then $d^+(V(G) \setminus (T \cup \{v\})) \leq k$ for
every $v \not\in T$ and
therefore we need to branch on every vertex not in $T \cup S \cup \Nout(S)$, 
where $\Nout(S)$ denotes the set of out-neighbors of $S$.

This asymmetry comes from the 
asymmetry inherent in the vertex separation number characterization:
the width of a permutation $\pi$ in $G$ is not equal in general
to the width of a reversal of $\pi$ in $G^{-1}$, the digraph 
obtained from $G$ by reversing all of its edges.
We use separation chains \cite{Pilipczuk12} to
give a symmetric characterization of pathwidth and
formulate a variant of Nagamochi's algorithm which
branches from each side on at most $(h + 2k + 1)$ vertices.
This is how we get the running time stated in Theorem~\ref{thm:alg}.
We remark that a similar result on cutwidth is an immediate 
corollary of the Nagamochi's result, since
we have the desired symmetry in the definition of cutwidth:
the cutwidth of a permutation $\pi$ in $G$ equals the
cutwidth of the reversal of $\pi$ in $G^{-1}$.

The scenario for the combinatorial result in Theorem~\ref{thm:comb}
is rather straightforward.  Given an $h$-semicomplete graph $G$
of pathwidth at least $f(h, k)$, we complete it into a semicomplete
graph $G'$ on $V(G)$, which must have pathwidth at least $f(h, k)$.
We then find an obstacle $T \subseteq V(G)$ in $G'$ 
for small pathwidth, of one of the types defined in \cite{Pilipczuk12}.
Then we consider a random semicomplete subgraph $G''$ of
$G$ and show that $G''$ inherits an obstacle $T'$ from $T$
with high probability such that the existence of $T'$ in $G''$
implies $\pw(G'') \geq k$.
We need to overcome, however, some difficulties in carrying out
this scenario. To be more specific, consider one type of
obstacles, namely {\em degree tangles} \cite{Pilipczuk12}.
An $(l, k)$-degree tangle of $G$ is a vertex set $T$
with $|T| = l$ such that 
$\max_{v \in T} d^+(v) - \min_{v \in T} d^+(v) \leq k$.
In order for a degree tangle $T$ in $G'$ to give rise
to a degree-tangle $T'$ of the random subgraph $G''$, we need
the out-degrees of vertices in $T'$ to ``shrink'' 
almost uniformly. To this end, we wish our sampling to
be such that (1) each vertex $v \in V(G)$ is in $V(G'')$
with a fixed probability $p$ and (2) for each vertex set 
$S \subseteq V(G)$, the intersection $S \cap V(G'')$
has cardinality sharply concentrated around its expectation
$p|S|$.  The following theorem, which may be of independent interest, 
makes this possible: we apply this theorem
to the complement of the underlying graph of $G$ with $d = h$. 

\begin{theorem}
\label{thm:sample-indep}
Let $G$ be an undirected graph on $n$ vertices 
with maximum degree $d$ or smaller.
Let $p = \frac{1}{2d + 1}$.
Then, it is possible to sample a set $I$ of independent vertices of $G$ so that
$\prob(v \in I) = p$ for each $v \in V(G)$ and,   
for each $S \subseteq V(G)$,
we have  
\begin{eqnarray*}
 \prob(|S \cap I| > p|S|+ t) <
 \exp\left(-\frac{t^2}{9|S|}\right)
 \end{eqnarray*}
 and 
 \begin{eqnarray*}
 \prob(|S \cap I| <p|S| - t) <
 \exp\left(-\frac{t^2}{9|S|}\right).
 \end{eqnarray*}
\end{theorem}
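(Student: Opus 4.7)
My plan is to sample $I$ by a first-in-closed-neighborhood rule composed with independent thinning. For each vertex $v$, draw two independent uniforms $T_v, U_v \in [0,1]$ and include $v$ in $I$ exactly when $T_v < T_u$ for every $u \in N(v)$ and $U_v < (\deg(v)+1)p$. The second threshold is admissible because $(\deg(v)+1)p \le (d+1)/(2d+1) \le 1$. The set $I$ is independent: if $v, u \in I$ with $u \in N(v)$, then one would need $T_v < T_u$ and $T_u < T_v$ simultaneously. The marginal is $\prob(v \in I) = \tfrac{1}{\deg(v)+1}\cdot(\deg(v)+1)p = p$, where the first factor is the probability that $T_v$ is the minimum among the $\deg(v)+1$ independent uniforms indexed by $N[v]$.

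\textbf{Concentration strategy.} Write $Y_v = \mathbf{1}[T_v < T_u \;\forall u \in N(v)]$ and $B_v = \mathbf{1}[U_v < (\deg(v)+1)p]$, so that $|S\cap I| = \sum_{v \in S} Y_v B_v$. I split the deviation from $p|S|$ via the conditional mean $M(T) = \sum_{v\in S} Y_v (\deg(v)+1)p$. Conditioning on $T$, the $B_v$'s are independent Bernoullis and Hoeffding's inequality yields $\prob(||S\cap I| - M(T)| > t \mid T) \le 2\exp(-2t^2/|S|)$, which integrates to the same unconditional bound. For $M(T) - p|S|$, I plan to integrate out the ranks to derive $\ex[Y_v Y_w] = 0$ for adjacent $v,w$ and
\[
\ex[Y_v Y_w] \;=\; \frac{\deg(v)+\deg(w)+2}{(\deg(v)+1)(\deg(w)+1)(\deg(v)+\deg(w)-c+2)}
\]
for non-adjacent pairs, where $c = |N(v)\cap N(w)|$. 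Combined with the prefactors $(\deg(v)+1)p \le 1$, the $\mathrm{poly}(d)$ growth of the pairwise correlations is absorbed by the factor $p^2 = \Theta(1/d^2)$, yielding $\mathrm{Var}(M(T)) = O(|S|)$.

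\textbf{Main obstacle.} The crux will be upgrading this variance bound into a Gaussian tail with a $d$-free universal constant, matching $\exp(-t^2/(9|S|))$. A straightforward McDiarmid bound on the $T_v$'s is too wasteful, since resampling one $T_v$ can flip as many as $\deg(v)+1$ of the $Y_u$'s and forces $\mathrm{poly}(d)$ Lipschitz constants into the tail. Instead I plan to exploit the local dependence structure of the family $\{Y_v\}_{v \in S}$—two indicators are stochastically dependent only when the corresponding vertices share a common neighbor in $G$—and invoke a Bernstein-type inequality for sums of locally dependent bounded random variables, so that the chromatic-number blow-up of the dependency graph is absorbed by the $p^2$ prefactor in exactly the same way as in the variance calculation. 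Combining this with the Hoeffding step through the triangle decomposition $|S\cap I| - p|S| = (|S\cap I| - M(T)) + (M(T) - p|S|)$ and a careful bookkeeping of constants should deliver the advertised two-sided tail.
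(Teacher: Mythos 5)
Your construction---draw i.i.d.\ priorities $T_v$, declare $v$ a local winner when $T_v$ is least in $N[v]$, then thin each winner with probability $(\deg(v)+1)p$---does achieve $\prob(v \in I) = p$ uniformly, the resulting $I$ is independent, and the conditional Hoeffding step for the thinning part is sound. This is a genuinely different construction from the paper's, which repeatedly samples a uniform vertex of a $d$-regular supergraph of the surviving induced subgraph (built via the Erd\H{o}s--Kelly regular completion theorem), retains it if it is a real vertex, and deletes its closed neighborhood.

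The gap is in the concentration of $M(T) = p\sum_{v\in S}Y_v(\deg(v)+1)$, and I do not believe it is closable along the lines you indicate. Bernstein/Chernoff-type inequalities for sums of locally dependent bounded variables (Janson and others) carry the fractional chromatic number $\chi^*$ of the dependency graph in the denominator of the exponent: roughly $\exp\bigl(-\Omega\bigl(t^2/(\chi^* b^2 |S|)\bigr)\bigr)$, where $b$ is the maximum range of a summand. Here $\chi^* = \Theta(d^2)$ (two indicators $Y_v$, $Y_w$ are dependent exactly when $N[v]\cap N[w]\neq\emptyset$, i.e.\ within distance two), while $b = \max_v(\deg(v)+1)p$ is $\Theta(1)$, \emph{not} $\Theta(1/d)$, so $\chi^* b^2 = \Theta(d^2)$ and the bound is only $\exp(-\Omega(t^2/(d^2|S|)))$, not $d$-free. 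The $p^2$ prefactor cancels the $\Theta(d^2)$ overcount in the \emph{variance}---each scaled pairwise covariance is $\Theta(p^2)$ and there are $\Theta(d^2)$ of them per vertex---but the $\chi^*$ factor in the \emph{tail} is governed by the worst-case range $b$, not by the variance, so that cancellation does not carry over. A McDiarmid bound fares no better: resampling a single $T_u$ can shift $M(T)$ by $\sum_{v\in N[u]\cap S}(\deg(v)+1)p = \Theta(d)$, and the squared Lipschitz constants sum to $\Theta(d^2|S|)$. The paper evades all of this at the level of the construction: because each step of its sequential sampler picks a uniform vertex of a $d$-regular padded supergraph and thereby removes \emph{exactly} $d+1$ vertices, the Doob martingale of $|S\cap I|$ has increments bounded by $3/2$ (and by $1$ after a reindexing into $3|S|$ blocks when $|S|$ is small), so Azuma's inequality directly yields the $d$-free $\exp(-t^2/(9|S|))$. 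The regular padding is precisely what makes the martingale increments $O(1)$, and your construction has no comparable ingredient, so I would not expect a $d$-free constant to follow from off-the-shelf local-dependence inequalities.
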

Even with this sampling method, it is still not clear if
we can have the desired ``uniform shrinking'' of out-degrees
of the vertices in the degree tangle, since  
if the set $S$ of out-neighbors of a vertex has cardinality
$\Omega(n)$, then the deviation of $|S \cap V(G'')|$ from
its expectation $p|S|$ is necessarily $\Omega(\sqrt{n})$.
To overcome this difficulty, we introduce several
types of obstacles that are robust against random sampling
and show that (1) if $G'$ has an obstacle of a type in \cite{Pilipczuk12}
then it has a robust obstacle and (2) each robust obstacle
in $G'$ indeed gives rise to a strong enough obstacle in $G(V'')$ with
high probability.

A conference version of this paper will appear as \cite{KKT15}.
The rest of this paper is organized as follows.
In Section~\ref{sec:prelim} we define some notation.
In Section~\ref{sec:alg}, we describe our algorithm and prove
Theorem~\ref{thm:alg}.
In Section~\ref{sec:comb}, we prove Theorem~\ref{thm:comb},
assuming Theorem \ref{thm:sample-indep}.
Finally in Section~\ref{sec:sample-indep}, we prove
Theorem~
\ref{thm:sample-indep}.

\section{Notation}
\label{sec:prelim}
Digraphs in this paper are simple: there are no self-loops 
and, between each pair of distinct vertices, there is at most one edge
in each direction. For digraph $G$, $V(G)$ denotes the set of
vertices of $G$ and $E(G) \subseteq V(G) \times V(G)$ the set of edges of $G$.
If $(u, v) \in E(G)$, then $v$ is an {\em out-neighbor} of $u$ and
$u$ is an {\em in-neighbor} of $v$.  
For each $v \in V(G)$, we denote
the set of in-neighbors of $v$ by 
$\NinG(v) = \{u \mid (u, v) \in E(G)\}$ and
write $\NinG[v]$ for $\NinG(v) \cup \{v\}$.
For $U \subseteq V(G)$, we define $\NinG[U] = \bigcup_{v \in U}
\NinG[v]$ and $\NinG(U) = \NinG[U] \setminus U$.
We define the notation for out-neighbors
$\Nout$ similarly.
In this paper, the {\em in-degree} and {\em out-degree} of vertex $v$
in $G$, denoted by $\dinG(v)$ and $\doutG(v)$, respectively, counts
the in-neighbors and out-neighbors rather than the incoming and outgoing edges:
$\dinG(v) = |\NinG(v)|$ and $\doutG(v) = |\NoutG(v)|$;
we also define $\dinG(U) = |\NinG(U)|$ and $\doutG(U) = |\NoutG(U)|$ 
for $U \subseteq V(G)$.
We omit the reference to $G$ from the above notation when it is clear from the context
which digraph is meant. 

\section{Algorithm}
\label{sec:alg}
In this section, we describe the algorithm claimed in Theorem~\ref{thm:alg},
prove its correctness, and analyze its running time.
As suggested in the introduction, our first task is to give a
symmetric characterization of pathwidth to which the Nagamochi's algorithm is
adaptable.
 
Let $G$ be a digraph.
A pair $(A,B)$ of vertex sets of $G$ is a {\em separation} of $G$ if
$A \cup B = V$ and there is no edge from
$A \setminus B$ to $B \setminus A$.
The {\em order} of separation $(A, B)$ is 
$|A \cap B|$. For $S,T \subseteq V$ such that $S \cap T = \emptyset$,
separation $(A, B)$ is an {\em $S$--$T$ separation} if
$S \cap B = \emptyset$ and  
$T \cap A = \emptyset$.
We call an $S$--$T$ separation $(A, B)$ {\em trivial}
if $B = V(G) \setminus S$ or $A = V(G) \setminus T$.

An important role in our algorithm is played by 
a {\em minimum $S$-$T$}
separation, which is defined to be an $S$--$T$ separation 
of the smallest order. 
Note that if a minimum $S$-$T$ separation is trivial, then
it must be either $(\Nout[S],\  V(G) \setminus S)$ or
$(V(G) \setminus T,\  \Nin[T])$. As will be seen later,
we may use non-trivial minimum $S$-$T$ separations to
divide-and-conquer subproblems in our pathwidth computation. 

A sequence of separations 
$((A_0,B_0),(A_1,B_1),\ldots,(A_r,B_r))$ is a {\em separation chain}
if $A_0 \subseteq A_1 \subseteq \ldots \subseteq A_r$ and 
$B_r \subseteq B_{r-1} \subseteq \ldots \subseteq B_0$.
The {\em order} of this separation chain is the maximum order of its member
separations. We use operator $+$ for concatenating sequences of
separations and for appending a separation to a sequence of separations:
for sequences $C$ and $C'$ of separations and a separation $(A, B)$,
$C$ + $C'$ is the concatenation of $C$ and $C'$, $(A, B) + C$ is the
sequence $C$ preceded by $(A, B)$, and $C + (A, B)$ is the sequence
$C$ followed by $(A, B)$.

Let $C = ((A_0,B_0), (A_1, B_2),\ldots,(A_r,B_r))$ be
a separation chain.
We say that $C$ is {\em gapless} if, 
for every $0 < i \leq r$, either $|A_{i} \setminus A_{i - 1}| \leq 1$ or
$|B_{i - 1} \setminus B_{i}| \leq 1$ holds.  Note that this definition
allows a repetition of an identical separation.
We say that $C$ is an {\em $S$--$T$ chain}, 
if $B_0 = V(G) \setminus S$ and $A_r = V(G) \setminus T$, that is,
both ends of $C$ are trivial $S$--$T$ separations.
Note that every separation in an $S$--$T$ chain is an $S$--$T$
separation.

As observed in \cite{Pilipczuk12}, 
\newline (1) if $(X_1,X_2,\ldots,X_r)$ is a path-decomposition of $G$ 
then $((A_0,B_0),(A_1,B_1),$ $\ldots,(A_r,B_r))$,
where $A_i = \bigcup_{j \leq i}X_j$ and $B_i = \bigcup_{i<j}X_j$, is an
$\emptyset$--$\emptyset$ chain in $G$, and 
\newline(2) if $((A_0,B_0),(A_1,B_1),\ldots,(A_r,B_r))$ is 
an $\emptyset$--$\emptyset$ chain in $G$, then 
$(W_1,W_2,\ldots,W_r)$, where $W_i = A_i \cap B_{i-1}$
for $1 \leq i \leq r$, is a path-decomposition of $G$.

These observations lead to the following characterization of
pathwidth by means of gapless separation chains.
\begin{lemma}\label{lem:pathwidth_sc}
Digraph $G$ has a path-decomposition of width $k$ if and only if 
it has a gapless $\emptyset$--$\emptyset$ chain of
order $k$.
\end{lemma}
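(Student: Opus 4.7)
The plan is to verify the two directions of Lemma \ref{lem:pathwidth_sc} using the constructions already attributed to \cite{Pilipczuk12}. The content of the lemma beyond those constructions is the precise correspondence between width $k$ and order $k$, which is mediated entirely by the gaplessness condition.

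For the forward direction, I would begin with a path-decomposition $(X_1, \ldots, X_r)$ of $G$ of width $k$ and first refine it in the standard way into a \emph{nice} path-decomposition, so that each consecutive pair $X_{i-1}, X_i$ differs by the insertion or removal of exactly one vertex; this refinement preserves the width $k$. Setting $A_i = \bigcup_{j \le i} X_j$ and $B_i = \bigcup_{j > i} X_j$ as in observation~(1), one has $A_i \setminus A_{i-1} \subseteq X_i \setminus X_{i-1}$, which has cardinality at most one, so the resulting $\emptyset$--$\emptyset$ chain is gapless. To see that its order is at most $k$, I would use the interval property of the path-decomposition to conclude that $A_i \cap B_i = X_i \cap X_{i+1}$, and then observe that in a nice decomposition of width $k$ consecutive bags differ in size by exactly one, so $\min(|X_i|,|X_{i+1}|) \le k$ and therefore $|A_i \cap B_i| \le k$.

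For the reverse direction, let $((A_0, B_0), \ldots, (A_r, B_r))$ be a gapless $\emptyset$--$\emptyset$ chain of order $k$ and define $W_i = A_i \cap B_{i-1}$ as in observation~(2). That $(W_1, \ldots, W_r)$ is a path-decomposition is observation~(2) itself; the point to verify is the width bound. Here I would use the two identities
\[
W_i = (A_{i-1} \cap B_{i-1}) \cup \bigl((A_i \setminus A_{i-1}) \cap B_{i-1}\bigr) = (A_i \cap B_i) \cup \bigl(A_i \cap (B_{i-1} \setminus B_i)\bigr),
\]
which yield $|W_i| \le k + |A_i \setminus A_{i-1}|$ and $|W_i| \le k + |B_{i-1} \setminus B_i|$ respectively. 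Gaplessness makes at least one of the right-hand sides at most $k+1$, so every bag has size at most $k+1$, and the width is $\le k$.

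The main obstacle, such as it is, is really just careful bookkeeping around the interval condition: verifying that the constructions of \cite{Pilipczuk12} produce bona fide path-decompositions and separation chains, and that the two identities above for $W_i$ are routine but used correctly. Once the expressions for $A_i \cap B_i$ and $A_i \cap B_{i-1}$ are in hand, no further ideas are required; the argument is essentially a size-counting exercise driven by gaplessness, and no property of $h$-semicomplete digraphs is used.
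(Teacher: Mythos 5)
Your proof is correct and follows essentially the same route as the paper: make the decomposition nice, use $A_i = \bigcup_{j\le i}X_j$, $B_i = \bigcup_{j>i}X_j$ to get a gapless chain, and in the reverse direction bound $|A_i\cap B_{i-1}|$ by whichever of $|A_{i-1}\cap B_{i-1}|$ or $|A_i\cap B_i|$ gaplessness lets you compare to. The only cosmetic difference is that you make the two set identities for $W_i$ explicit rather than stating the resulting inequalities directly, and in the forward direction you conclude order $\le k$ via $\min(|X_i|,|X_{i+1}|)\le k$ instead of asserting equality as the paper does; neither change affects the substance.
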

\begin{proof}
Suppose $G$ has a path-decomposition $(X_1, X_2, \ldots, X_r)$ of
width $k$. We may assume that this path-decomposition is nice:
$X_1 = X_r = \emptyset$ and, for $1 \leq i < r$, either 
$X_{i + 1} = X_i \cup \{v\}$ for some $v \in V(G) \setminus X_i$ or
$X_{i + 1} = X_i \setminus \{v\}$ for some $v \in X_i$.
If we set $A_i = \bigcup_{j \leq i}X_j$ and $B_i = \bigcup_{j > i}X_j$ for 
$0 \leq i \leq r$ as in observation (1), then $((A_0, B_0), (A_2,
B_2),\ldots, (A_r, B_r))$ is a gapless $\emptyset$--$\emptyset$
chain. The order of this separation chain is $\max_{0 \leq i \leq r}
|A_i \cap B_i| = \max_{1 \leq i \leq r - 1}|X_i \cap X_{i+1}| = k$. 
Conversely, suppose a gapless separation chain $((A_0, B_0), (A_1, B_1),\ldots,
(A_r, B_r))$ of order $k$ is given. We set $X_i = A_i  \cap B_{i - 1}$
for $1 \leq i \leq r$. Then, $(X_1, X_2, \ldots, X_r)$ is a
path-decomposition by observation (2).
Since our separation chain is gapless, we have either  $|A_i \setminus A_{i-1}|
\leq 1$ or $|B_{i-1} \setminus B_i| \leq 1$ for $1 \leq i \leq r$.
In the former case, we have $|A_i \cap B_{i-1}| \leq |A_{i-1} \cap B_{i-1}|+1 =
k+1$ and, in the latter case, we have 
$|A_i \cap B_{i-1}| \leq |A_i \cap B_i|+1 = k+1$.  Therefore, the width
of path-decomposition $(X_1, X_2, \ldots, X_r)$ is at most $k$ and
hence $G$ has a path-decomposition of width $k$.
\qed
\end{proof}

We say that a pair $(S, T)$ of vertex sets of $G$ is {\em
$k$-admissible} if $\Nout[S] \cap T = \emptyset$ (and hence $S \cap
\Nin[T] = \emptyset$), $\dout(S) \leq k$, and $\din(T) \leq k$.
It is clear that $(S, T)$ must be $k$-admissible in order
for $G$ to have a gapless $S$--$T$ chain of order at most $k$.
Our algorithm solves the following problem with parameter $k$: given
digraph $G$ and a $k$-admissible pair $(S, T)$, compute a gapless $S$--$T$
chain of order at most $k$ if one exists and otherwise report the non-existence.
The algorithm in Theorem~\ref{thm:alg} applies this algorithm to 
$(S, T) = (\emptyset, \emptyset)$ and, if it returns an $\emptyset$--$\emptyset$
chain of order $k$, converts it to a path-decomposition of width
at most $k$, using the proof of Lemma~\ref{lem:pathwidth_sc}.

The following lemma provides the base case for our
algorithm.
\begin{lemma}\label{lem:base}
If pair $(S,T)$ is $k$-admissible and satisfies 
$|V(G) \setminus (S \cup T)| \leq k + 1$ then 
$G$ has a gapless $S$--$T$ chain of order at most $k$.
\end{lemma}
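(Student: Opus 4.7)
The plan is to split on whether $|V(G) \setminus (S \cup T)| \le k$ or $|V(G) \setminus (S \cup T)| = k+1$ and to exhibit a very short gapless $S$--$T$ chain directly in each case. Write $M = V(G) \setminus (S \cup T)$.

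If $|M| \le k$, the natural choice is the length-one chain consisting of the single separation $(V(G) \setminus T,\ V(G) \setminus S)$. Its two ``private'' sides are exactly $S$ and $T$, between which $k$-admissibility forbids any edge since $\Nout[S] \cap T = \emptyset$. Its order is $|V(G) \setminus (S \cup T)| = |M| \le k$, both endpoints are trivial $S$--$T$ separations by construction, and the gapless condition is vacuous.

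If $|M| = k + 1$, I would build a length-two chain $((\Nout[S],\ V(G) \setminus S),\ (V(G) \setminus T,\ V(G) \setminus (S \cup \{u\})))$ for a carefully chosen $u \in M$. The first pair is the standard minimum trivial $S$--$T$ separation on the left side, with order $\dout(S) \le k$. For the second pair, the monotonicity $A_0 \subseteq A_1$ and $B_1 \subseteq B_0$ is immediate, the interface $A_1 \cap B_1 = M \setminus \{u\}$ has size $k$, and the gapless condition holds automatically since $|B_0 \setminus B_1| = |\{u\}| = 1$.

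The only nontrivial point is ensuring that $(A_1, B_1)$ is a valid separation. Its private sides are $A_1 \setminus B_1 = S \cup \{u\}$ and $B_1 \setminus A_1 = T$, so I need no edges from $S \cup \{u\}$ into $T$. Edges from $S$ to $T$ are already ruled out by $\Nout[S] \cap T = \emptyset$, and the task reduces to choosing $u$ with no out-edge into $T$, i.e., $u \notin \Nin(T)$. This is exactly where $k$-admissibility enters on the $T$-side via a pigeonhole step: $|\Nin(T)| = \din(T) \le k < k+1 = |M|$, so $M \setminus \Nin(T)$ is nonempty and a suitable $u$ exists. This choice of $u$ is the one nonroutine idea; the rest is verification against the definitions of separation, $S$--$T$ chain, and gaplessness.
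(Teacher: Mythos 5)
Your proof is correct and takes a genuinely different route from the paper. The paper proves the lemma by induction on $|V(G)\setminus(S\cup T)|$: it adds one vertex at a time to $S$ or $T$, applies the induction hypothesis to get a shorter chain, and appends a trivial separation to it, terminating when $V(G)\setminus(S\cup T)=\Nout(S)=\Nin(T)$. Your argument instead writes down an explicit chain of length one or two, depending on whether $|V(G)\setminus(S\cup T)|\leq k$ or $=k+1$, and checks the definitions directly. The pigeonhole choice of $u\notin\Nin(T)$ in the boundary case is the key observation; note $\Nin(T)\subseteq V(G)\setminus(S\cup T)$ because $k$-admissibility forces $S\cap\Nin[T]=\emptyset$, so $|\Nin(T)|\leq k<k+1$ indeed leaves room. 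Your construction is arguably more transparent and avoids the bookkeeping of an induction; the paper's inductive form has the mild advantage of producing a nice chain (one change per step), which is the shape used later in Lemma~\ref{lem:nice_and_tight}, but that property is not claimed in this lemma, so nothing is lost.
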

\begin{proof}
The proof is by induction on $|V(G) \setminus (S \cup T)|$.
The base case is where $V(G) \setminus (S \cup T) = \Nout(S) =
\Nin(T)$. The statement holds in this case, since
the separation $(\Nout[S], \Nin[T])$ alone forms a gapless
$S$--$T$ chain. Since $(S, T)$ is $k$-admissible, 
the order of this separation chain is at most $k$.
Therefore, the base case holds.

Suppose that either $V(G) \setminus (S \cup T) \neq \Nout(S)$ 
or $V(G) \setminus (S \cup T) \neq \Nin(T)$. Consider the
first case: we have some $v \not\in \Nout[S] \cup T$. 
If we set $T' = T \cup \{v\}$, then as $v \not\in \Nout(S)$,
we have $\Nin(T') \subseteq V(G) \setminus (S \cup T \cup \{v\})$ and 
hence we have $|\Nin(T')|
\leq k$. We also have $\Nout[S] \cap T' = \emptyset$
since $v \not\in \Nout(S)$. 
Therefore, $(S, T')$ is $k$-admissible.  
Moreover, we have $|V(G) \setminus (S \cup T')| < |V(G) \setminus (S \cup T)|
\leq k + 1$.  Therefore, we may apply the induction hypothesis to
$(S, T')$ and have a gapless $S$--$T'$ chain $C'$ of order at most $k$.
Let $(A, B)$ be the last separation of $C'$.
Then, since $A = (V(G) \setminus T') \subseteq (V(G) \setminus T)$ and 
$B \supseteq \Nin[T'] \supseteq \Nin[T]$, 
$C = C' + (V(G) \setminus T, \Nin[T])$ is an
$S$--$T$ chain.  Since $C'$ is gapless and $(V(G)
\setminus T) \setminus A = \{v\}$, $C$ is also gapless.
Moreover, since the order of $C'$ is at most $k$ and 
the order of $(V(G) \setminus T, \Nin[T])$ is at most $|\Nin(T)| \leq k$,
the order of $C$ is at most $k$. The second case is similar and symmetric
to the first case.
\qed
\end{proof}

We have two types of recurrences: divide-and-conquer and branching.
For the recurrence of first type, we need the following lemma.
\begin{lemma}\label{lem:submodular}
Suppose $(X,Y)$ is a minimum $S$--$T$ separation.
Then, for each $S$--$T$ separation $(A,B)$, both
$(A \cap X,\  B \cup Y)$ and 
$(A \cup X,\  B \cap Y)$ are 
$S$--$T$ separations and moreover 
neither of their orders exceed that of $(A, B)$.
\end{lemma}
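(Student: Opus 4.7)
The plan is to prove two things in turn: first, that both $(A \cap X, B \cup Y)$ and $(A \cup X, B \cap Y)$ are $S$--$T$ separations, and second, that the order of each is at most $|A \cap B|$. This is essentially the standard submodularity argument for the intersection-order function applied to the two given separations.

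For the separation property, I will verify the three requirements. Coverage, say for $(A \cap X, B \cup Y)$, follows from $A \cup B = V(G)$ and $X \cup Y = V(G)$: a vertex not in $X$ already lies in $Y \subseteq B \cup Y$, and a vertex in $X$ lies in either $A$ (hence in $A \cap X$) or in $B$ (hence in $B \cup Y$). The non-crossing edge condition is where the work is: suppose there were an edge $(u,v)$ with $u \in (A \cap X) \setminus (B \cup Y)$ and $v \in (B \cup Y) \setminus (A \cap X)$. Then $u \in (A \setminus B) \cap (X \setminus Y)$, while $v \in B \cup Y$ and $v \notin A \cap X$. A short case analysis on whether $v$ lies in $A$, in $B$, in $X$, or in $Y$ shows that any such $v$ lies either in $B \setminus A$ (so $(u,v)$ violates $(A,B)$) or in $Y \setminus X$ (so $(u,v)$ violates $(X,Y)$). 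The $S, T$ conditions are immediate from $S \cap B = S \cap Y = \emptyset$ and $T \cap A = T \cap X = \emptyset$. The pair $(A \cup X, B \cap Y)$ is handled symmetrically.

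For the order bound, the key inequality is
\begin{equation*}
|(A \cap X) \cap (B \cup Y)| + |(A \cup X) \cap (B \cap Y)| \leq |A \cap B| + |X \cap Y|,
\end{equation*}
which I verify pointwise: for each vertex $v$, the number of left-hand summands whose defining set contains $v$ is at most the number of right-hand summands whose defining set contains $v$. This reduces to a finite check over the memberships of $v$ in $A, B, X, Y$. Since $(X, Y)$ is a minimum $S$--$T$ separation and the previous step gives that both new pairs are $S$--$T$ separations, minimality yields $|X \cap Y| \leq |(A \cap X) \cap (B \cup Y)|$ and $|X \cap Y| \leq |(A \cup X) \cap (B \cap Y)|$. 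Subtracting each of these from the displayed inequality produces $|(A \cup X) \cap (B \cap Y)| \leq |A \cap B|$ and $|(A \cap X) \cap (B \cup Y)| \leq |A \cap B|$, as required. The only real obstacle is getting the case analysis right in the separation step; the submodular inequality and the extraction of the order bound from minimality of $(X,Y)$ are routine.
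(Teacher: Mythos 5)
Your proposal is correct and takes essentially the same approach as the paper: verify the separation property (no edge from one side to the other, plus the $S$ and $T$ disjointness conditions), establish the submodular bound on $|(A \cap X)\cap(B\cup Y)| + |(A\cup X)\cap(B\cap Y)|$ by a pointwise membership count, and then apply minimality of $(X,Y)$. The paper phrases the edge-crossing argument via an explicit partition (showing $(A\cap X)\setminus(B\cup Y) = (A\setminus B)\cap(X\setminus Y)$ and $(B\cup Y)\setminus(A\cap X) = (B\setminus A)\cup(Y\setminus X)$) and proves the submodular relation as an exact equality, but these are cosmetic differences from your case analysis and your $\leq$ bound.
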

\begin{proof}
Let $A_1 = A \setminus B$, $A_2 = A \cap B$, 
$A_3 = B \setminus A$,
$X_1 = X \setminus Y$, $X_2 = X \cap Y$, and  
$X_3 = Y \setminus X$.
Then, both $(A_1, A_2, A_3)$ and
$(X_1, X_2, X_3)$ partition of $V(G)$.
We have 
\begin{eqnarray*}
(A \cap X) \setminus (B \cup Y) & = & A_1 \cap X_1 \mbox{\ and}\\ 
(B \cup Y) \setminus (A \cap X) & = & 
A_3 \cup X_3
\end{eqnarray*}
and, since there is no edge from $A_1$ to $A_3$ and
no edge from $X_1$ to $X_3$, there is no edge from
$(A \cap X) \setminus (B \cup Y)$ to
$(B \cup Y) \setminus (A \cap X)$.
Therefore, $(A \cap X,\ B \cup Y)$ is a separation and,
similarly, $(A \cup X,\  B \cap Y)$ is a separation.
Since $S \cap B = \emptyset$ and 
$S \cap Y = \emptyset$, we have
$S \cap (B \cup Y) = \emptyset$ and
similarly $(A \cap X) \cap T = \emptyset$.
Therefore, $(A \cap X,\  B \cup Y)$ is an $S$--$T$
separation and, similarly,
$(A \cup X,\  B \cap Y)$ is an $S$--$T$ separation. 

To prove the claim on the orders of these separations,  
we first claim that 
\begin{eqnarray}
\label{eqn:ABXY}
|A \cap B| + |X \cap Y| =
 |(A \cap X) \cap (B \cup Y)|+|(A \cup X) \cap (B \cap Y)|.
\end{eqnarray}
To see this, note that
$A \cap B = A_2$ is partitioned into
$A_2 \cap X_1$, $A_2 \cap X_2$, and $A_2 \cap X_3$;
$X \cap Y = X_2$ is partitioned into
$A_1 \cap X_2$, $A_2 \cap X_2$, and $A_3 \cap X_2$.
On the other hand,
$(A \cap X) \cap (B \cup Y)$ is
partitioned into $A_1 \cap X_2$, 
$A_2 \cap X_2$, and $A_2 \cap X_1$;
$(A \cup X) \cap (B \cap Y)$ is partitioned
into $A_3 \cap X_2$, $A_2 \cap X_2$, and
$A_2 \cap X_3$. Comparing these lists,
we see that both sides of (\ref{eqn:ABXY}) count
the same set of vertices with the same multiplicity.  
Since $(X, Y)$ is a minimum $S$--$T$ separation, we have
$|X \cap Y| \leq |(A \cup X) \cap (B \cap Y)|$ and
hence $|(A \cap X) \cap (B \cup Y)| \leq |A \cap B|$
by (\ref{eqn:ABXY});
similarly we have 
$|(A \cup X) \cap (B \cap Y)| \leq |A \cap B|$.
\qed
\end{proof}

The following lemma, which corresponds to the main lemma in \cite{Naga12}
underlying the algorithm for submodular systems, provides 
the divide-and-conquer type recurrence.
\begin{lemma}\label{lem:divide}
Suppose $G$ has a gapless $S$--$T$ chain of order $k$ and
let $(X,Y)$ be a minimum $S$--$T$ separation of $G$.
Then $G$ has a gapless $S$--$T$ chain of order at most $k$ of
the form $C_1 + (X, Y) + C_2$, where
$C_1$ is a gapless $S$--$(Y \setminus X)$ chain and 
$C_2$ is a gapless $(X \setminus Y)$--$T$ chain.
\end{lemma}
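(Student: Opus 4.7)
The plan is to start with any gapless $S$--$T$ chain $C=((A_0,B_0),\ldots,(A_r,B_r))$ of order $k$ and, by ``folding'' it against the minimum $S$--$T$ separation $(X,Y)$, produce the required decomposed chain $C_1+(X,Y)+C_2$. Concretely, for each $i$ I will define
\begin{eqnarray*}
(A_i',B_i') &=& (A_i\cap X,\ B_i\cup Y),\\
(A_i'',B_i'')&=& (A_i\cup X,\ B_i\cap Y),
\end{eqnarray*}
set $C_1=((A_0',B_0'),\ldots,(A_r',B_r'))$ and $C_2=((A_0'',B_0''),\ldots,(A_r'',B_r''))$, and then verify that $C_1+(X,Y)+C_2$ is the desired chain. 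By Lemma~\ref{lem:submodular} applied to each $(A_i,B_i)$ and $(X,Y)$, each $(A_i',B_i')$ and $(A_i'',B_i'')$ is an $S$--$T$ separation whose order does not exceed $|A_i\cap B_i|\le k$; and $|X\cap Y|\le k$ because $(X,Y)$ is a minimum $S$--$T$ separation and $C$ itself contains a witnessing $S$--$T$ separation of order at most $k$.

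The routine verifications to carry out are then: (i) monotonicity of the $A$'s and $B$'s in each of $C_1$ and $C_2$, which follows immediately from $A_i\subseteq A_{i+1}$ and $B_{i+1}\subseteq B_i$; (ii) trivial endpoints, using that $Y\subseteq V(G)\setminus S$ (since $(X,Y)$ is an $S$--$T$ separation) and $X\subseteq V(G)\setminus T$, so that $B_0\cup Y=V(G)\setminus S$, $A_r\cap X=X$, $B_0\cap Y=Y$ and $A_r\cup X=V(G)\setminus T$; this identifies $C_1$ as an $S$--$(Y\setminus X)$ chain (since $V(G)\setminus(Y\setminus X)=X$) and $C_2$ as an $(X\setminus Y)$--$T$ chain; (iii) gaplessness of each subchain, using the containments
\begin{eqnarray*}
(A_{i+1}\cap X)\setminus(A_i\cap X)&\subseteq& A_{i+1}\setminus A_i,\\
(B_i\cup Y)\setminus(B_{i+1}\cup Y)&\subseteq& B_i\setminus B_{i+1},
\end{eqnarray*}
and their analogues for $C_2$, so that the gapless property of $C$ transfers to both $C_1$ and $C_2$; (iv) compatibility at the two splice points, which is automatic because the last separation of $C_1$ is $(X,\ B_r\cup Y)$ and the first separation of $C_2$ is $(A_0\cup X,\ Y)$, so inserting $(X,Y)$ in the middle keeps $A$ weakly increasing, keeps $B$ weakly decreasing, and introduces no new gap (the differences at the splice are $\emptyset$).

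I expect no deep obstacle: the whole argument is essentially an unpacking of Lemma~\ref{lem:submodular} iterated along the chain, combined with the identities $B_0\cup Y=V(G)\setminus S$ and $A_r\cap X=X$ (and their duals) that identify the trivial endpoints. The only step that requires a moment's care is (iii), gaplessness across $C_1$ and $C_2$, since one must recognize that intersecting with $X$ (resp.\ taking union with $Y$) can only shrink the symmetric differences between consecutive $A$'s (resp.\ $B$'s). Once that set-theoretic inequality is in place, the rest is bookkeeping.
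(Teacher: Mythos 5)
Your proposal is correct and follows essentially the same route as the paper's proof: fold the given chain against $(X,Y)$ by taking $(A_i\cap X,\,B_i\cup Y)$ and $(A_i\cup X,\,B_i\cap Y)$, invoke Lemma~\ref{lem:submodular} for the order bound, and check monotonicity, endpoints, gaplessness, and the splice. The only minor imprecision is the parenthetical ``the differences at the splice are $\emptyset$''---at each splice point only one of the two differences (the $A$-side at the $C_1$ end, the $B$-side at the $C_2$ end) is actually empty, which is still enough for gaplessness.
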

\begin{proof}
Let $C = ((A_0,B_0), (A_1, B_1),\ldots,(A_r,B_r))$ be
an arbitrary gapless $S$--$T$ chain of order at most $k$. 
Recall that $B_0 = V(G) \setminus S$ and $A_r = V(G) \setminus T$
by the definition of $S$--$T$ chains.
Consider the sequence of separations  
$C_1$ consisting of $(A_i \cap X, B_i \cup Y)$ for
$0 \leq i \leq r$. Since we have
$A_{i - 1} \cap X \subseteq A_i \cap X$ and
$B_i \cup Y \subseteq B_{i - 1} \cup Y$ for
$0 < i \leq r$, $C_1$ is a separation chain.
Since $X \cap T = \emptyset$ and $A_r = V(G) \setminus T$,
we have $A_r \cap X = X$.  Therefore,
$C_1$ is an $S$--$(Y \setminus X)$ chain, since 
we have $V(G) \setminus (B_0 \cup Y) = S$ and
$V(G) \setminus (A_r \cap X) = V(G) \setminus X = Y \setminus X$.
Since $C$ is gapless, we have, for each $0 < i \leq r$, 
either $|A_i \setminus A_{i - 1}| \leq 1$ or
$|B_{i - 1} \setminus B_i| \leq 1$.
In the former case, we have 
$|(A_i \cap X) \setminus (A_{i - 1} \cap X)| \leq 1$ and,
in the latter case, we have  
$|(B_{i - 1} \cup Y) \setminus (B_i \cup Y)| \leq 1$.
Therefore, the separation chain $C_1$ is gapless.
By Lemma~\ref{lem:submodular}, the order of $C_1$ is at most $k$.
We similarly construct a gapless $(X \setminus Y)$--$T$ chain $C_2$
of order at most $k$. 

Since the last separation of $C_1$ is $(A_r \cap X, B_r \cup Y)
= (X, B_r \cup Y)$ and the first separation
of $C_2$ is $(A_0 \cup X, B_0 \cap Y) = (A_0 \cup X, Y)$, 
the concatenation $C_1 + (X, Y) + C_2$ is a separation chain
and is moreover gapless.  Since this separation chain
is of order at most $k$ and is an $S$--$T$ chain, 
the lemma holds.
\qed
\end{proof}

We need some preparations before formulating the branching type recurrence.
We say that an $S$--$T$ separation
chain $C=((A_0,B_0),(A_1,B_1),\ldots,(A_r,B_r))$ is {\em nice} if,
for every $0 \leq i < r$, we have $|A_{i+1} \setminus A_i| \leq 1$ and $|B_i
\setminus B_{i+1}|\leq 1$.  We say $C$ is {\em tight} if 
$A_0 = \Nout[S]$ and $B_r = \Nin[T]$.

\begin{lemma}\label{lem:nice_and_tight}
If $G$ has a gapless $S$--$T$ chain of order at most $k$ then it has 
a tight, nice, and gapless $S$--$T$ chain of order at most $k$.
\end{lemma}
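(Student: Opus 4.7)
The plan is to modify the given gapless chain $C$ in two stages: first prepend and append short sequences of separations to make the chain tight, then locally refine each remaining step to be nice. For tightness, observe that in any $S$--$T$ chain the first separation $(A_0, B_0)$ satisfies $B_0 = V(G) \setminus S$ and, by the separation condition, $A_0 \supseteq \Nout[S]$. If $A_0 \supsetneq \Nout[S]$, I prepend the separation $(\Nout[S], V(G) \setminus S)$ (of order $|\Nout(S)| \leq k$ by the $k$-admissibility of $(S,T)$) and interpolate by adding the vertices of $A_0 \setminus \Nout[S]$ to the first coordinate one at a time, keeping $B = V(G) \setminus S$ fixed. Each intermediate $(A, V(G) \setminus S)$ with $\Nout[S] \subseteq A \subseteq A_0$ is a valid separation of order $|A \setminus S| \leq |A_0 \setminus S| \leq k$, and each sub-step is nice because it adds exactly one vertex to $A$ and removes none from $B$. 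The same device is applied symmetrically at the end to force $B_r = \Nin[T]$.

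For niceness, consider an internal step from $(A, B)$ to $(A', B')$, and let $\Delta A = A' \setminus A$, $\Delta B = B \setminus B'$. A direct check gives the partition $\Delta A \cup \Delta B = X_1 \sqcup X_2 \sqcup X_3$, where $X_3 = \Delta B \setminus \Delta A \subseteq A \cap B$, $X_1 = \Delta A \setminus \Delta B \subseteq B \setminus A$, and $X_2 = \Delta A \cap \Delta B \subseteq B \setminus A$. By gaplessness, I may assume $|\Delta A| \leq 1$, i.e.\ $|X_1| + |X_2| \leq 1$ (the case $|\Delta B| \leq 1$ is symmetric). If $X_1 = X_2 = \emptyset$, remove the vertices of $X_3$ from $B$ one at a time. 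If $X_3 = \emptyset$ the step is already nice. Otherwise $X_3 \neq \emptyset$ and either $X_1 = \{v\}$ or $X_2 = \{v\}$: pair $v$ with an arbitrary $w \in X_3$ and execute the combined sub-step $(A \cup \{v\}, B \setminus \{w\})$, then dispose of the remaining $X_3 \setminus \{w\}$ one by one, and in the $X_2$ case append a final sub-step that removes $v$ from $B$. Every intermediate separation so introduced has order at most $|A \cap B| \leq k$, and every sub-step is nice by construction.

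The main technical obstacle is verifying that the intermediate separations are valid: the separation condition on $(A, B)$ alone does not forbid edges from $A \cap B$ (where the $X_3$ vertices live) into $B \setminus A$, so naively shrinking $B$ need not yield a separation. The resolution is to exploit the separation condition on $(A', B')$ via the identity $(B \setminus A) \setminus \Delta A = B' \setminus A'$, which reduces every forbidden-edge requirement for the intermediate separations to one guaranteed by $(A', B')$ being a separation. The combined sub-step in the $X_1$ and $X_2$ cases is the key device that keeps the intersection at $|A \cap B| \leq k$ and prevents the order from transiently exceeding $k$, which would otherwise happen if $v$ were added to $A$ before anything were removed from $B$.
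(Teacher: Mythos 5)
Correct, and a genuinely different route from the paper's. The paper introduces a potential $\delta(C)$ that jointly penalizes non-tightness and non-niceness, picks a gapless $S$--$T$ chain of order at most $k$ minimizing $\delta$, and argues $\delta = 0$ by exhibiting a local modification that strictly decreases $\delta$ while preserving gaplessness and the order bound; crucially, each of the paper's modifications changes only one coordinate of one separation (e.g.\ inserting $(A_i \cup \{v\}, B_i)$ for an internal gap, or prepending $(A_0 \setminus \{v\}, B_0)$), so validity of the new separation follows almost immediately from an adjacent one. Your proof is instead a two-phase explicit construction: first interpolate from $(\Nout[S], V(G) \setminus S)$ up to $(A_0, B_0)$ (and symmetrically at the end) for tightness, then replace each remaining step wholesale by a nice sub-chain. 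The extra work your route requires --- and correctly identifies --- is that shrinking $B$ alone may fail to produce a separation, since nothing forbids edges from $A \cap B$ into $B \setminus A$; your combined sub-step $(A \cup \{v\}, B \setminus \{w\})$, justified via the separation condition on the target $(A', B')$ and the identity $(B \setminus A) \setminus \Delta A = B' \setminus A'$, is exactly the right repair and also keeps every intermediate order at or below $|A \cap B|$. One caution on ``the case $|\Delta B| \leq 1$ is symmetric'': the relevant symmetry is the full digraph-and-chain reversal (reverse every edge, replace each $(A_i, B_i)$ by $(B_i, A_i)$, and reverse the order of the chain), under which the combined sub-step lands at the \emph{end} of the local refinement rather than the beginning. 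That placement matters: executing $(A \cup \{v\}, B \setminus \{w\})$ first when $X_1$ is large would require the absence of edges from $w$ into $X_1 \setminus \{v\}$, which neither $(A,B)$ nor $(A',B')$ provides; adding $X_1 \setminus \{v\}$ to $A$ first and doing the combined step last avoids this.
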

\begin{proof}
To each $S$--$T$ chain $C = ((A_0,B_0),(A_1,
B_1),\ldots,(A_r,B_r))$, we assign a non-negative integer $\delta(C)$ by
\begin{eqnarray*}
	\delta(C) & = & |A_0 \setminus \Nout[S]| + |B_0 \setminus \Nin[T]| \\
	&& + \sum_{0 \leq i < r}(\max\{0,|A_{i+1} \setminus A_i|-1 \} + \max\{0,|B_i
	\setminus B_{i+1}|-1 \}).
\end{eqnarray*}
Choose a gapless $S$--$T$ chain $C = ((A_0,B_0),(A_1,
B_1),\ldots,(A_r,B_r))$ to minimize $\delta(C)$
subject to being of order at most $k$.
If $\delta(C)=0$ then $C$ is tight and nice and we
are done.
For contradiction, suppose $\delta(C) > 0$.
We first consider the case where there is some vertex $v \in A_0 \setminus
\Nout[S]$.  Let $C'$ be obtained from $C$ by adding
separation $(A_0 \setminus \{v\},B_0)$ before $C$. 
Then, $C'$ is a gapless $S$--$T$ chain.  The order of separation
$(A_0 \setminus \{v\},B_0)$ is smaller 
than that of $(A_0,B_0)$ and hence the order of $C'$ is at most $k$.
This contradicts the choice of $C$ since $\delta(C') = \delta(C)-1$.
We have similarly a contradiction if there is some $v \in B_r \setminus
\Nin[T]$.
Suppose finally that $|A_{i+1} \setminus A_i| \geq 2$ for some 
$0 \leq i < r$. Let $v$ and $v'$ be two distinct vertices in $A_{i+1}
\setminus A_i$.
Now, since $C$ is gapless, this assumption implies that  
$|B_i \setminus B_{i+1}| \leq 1$.
As neither $v$ nor $v'$ is in $A_i$ and hence both are in $B_i$, 
it follows that 
$|A_{i + 1} \cap B_{i + 1}| \geq |A_i \cap B_i| + 1$. 
Since $|(A_i \cup \{v\}) \cap B_i| = |A_i \cap B_i| + 1$, 
the order of separation $(A_i \cup \{v\}, B_i)$ is no greater than that of
$(A_{i+1},B_{i+1})$ and hence is at most $k$.
Therefore the $S$--$T$ chain $C'$ that is obtained from $C$ by
placing $(A_i \cup \{v \}, B_i)$ between $(A_i, B_i)$ and $(A_{i + 1}, B_{i + 1})$
is gapless and of order at most $k$.
We have 
\begin{eqnarray*}
		\delta(C')	& = & \delta(C) 
		- \max\{0,|A_{i+1} \setminus A_i|-1\} 
		- \max\{0,|B_i \setminus B_{i+1}|-1\}\\
		& & + \max\{0,|\{v\}|-1\} 
		+ \max\{0,|B_i \setminus B_i|-1\}\\
		& & + \max\{0,|A_{i+1} \setminus (A_i \cup \{v \})|-1\} 
		+ \max\{0,|B_i \setminus B_{i+1}|-1\}\\
		& = & \delta(C) - \max\{0,|A_{i+1} \setminus A_i|-1]\} 
		+ \max\{0,|A_{i+1} \setminus (A_i \cup \{v \})|-1\}.
		\end{eqnarray*}							
Since $|A_{i+1} \setminus A_i| > |A_{i+1} \setminus (A_i \cup \{v \})|> 0$, it
follows that $\delta(C') \leq \delta(C) - 1$, a contradiction.
We similarly obtain a contradiction from the case 
$|B_i \setminus B_{i+1}| \geq 2$ as well.
\qed
\end{proof}	

The following lemma provides our branching type recurrence.
\begin{lemma}
\label{lem:sc-add}
Suppose $G$ has a gapless $S$--$T$ chain of order at most $k$ and
suppose that $|V(G) \setminus (S \cup T)| \geq k + 2$ holds.
Then, there are a gapless $S$--$T$ chain
$((A_0, B_0), \ldots, (A_r, B_r))$ of order at most $k$ and a pair of distinct
vertices $u \in V(G) \setminus (S \cup \Nin[T])$ and 
$v \in V(G) \setminus (T \cup \Nout[S])$ such that the following holds:
\begin{enumerate}
\item $((A_1, B_1), \ldots, (A_r, B_r))$ is an 
$(S \cup \{u\})$--$T$ chain,
\item $((A_0, B_0), \ldots, (A_{r-1}, B_{r-1}))$ is an 
$S$--$(T \cup \{v\})$ chain, and 
\item $((A_1, B_1), \ldots, (A_{r-1}, B_{r-1}))$ is an 
$(S \cup \{u\})$--$(T \cup \{v\})$ chain.
\end{enumerate}
\end{lemma}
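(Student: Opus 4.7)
The plan is to start from a tight, nice, gapless $S$--$T$ chain $((A_0, B_0), \ldots, (A_r, B_r))$ of order at most $k$, which Lemma~\ref{lem:nice_and_tight} provides, and then truncate it on both ends. I would call step $\ell$ an \emph{A-step} if $|A_\ell \setminus A_{\ell-1}| = 1$ and a \emph{B-step} if $|B_{\ell-1} \setminus B_\ell| = 1$ (a single step may be both). Let $j$ be the smallest B-step index and $i$ the largest A-step index. Both exist because $k$-admissibility together with the hypothesis forces the total number $\alpha$ of A-steps to satisfy $\alpha = |V(G) \setminus (T \cup \Nout[S])| = |V(G) \setminus (S \cup T)| - |\Nout(S)| \geq 2$, with the analogous bound on B-steps. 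My candidate chain is the subchain $((A_{j-1}, B_{j-1}), \ldots, (A_i, B_i))$; it is a gapless $S$--$T$ chain of order at most $k$ because $B_{j-1} = V(G) \setminus S$ (no B-step preceded step $j$) and $A_i = V(G) \setminus T$ (all A-steps have been applied by step $i$, using tightness).

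The crux is to establish the strict inequality $j < i$, so that the subchain has at least three separations. If $j > i$, then at step $i$ both $A_i = V(G) \setminus T$ and $B_i = V(G) \setminus S$ already hold, giving order $|V(G) \setminus (S \cup T)| \geq k + 2$, contradicting the order bound. If $j = i$, then this step is simultaneously the first B-step and the last A-step, so the remaining $\alpha - 1$ A-steps all occur at earlier indices $i_1 < \cdots < i_{\alpha - 1} < j$, during which $B$ has not yet shrunk. Evaluating the order at index $i_{\alpha - 1}$ then gives $|\Nout(S)| + (\alpha - 1) = |V(G) \setminus (S \cup T)| - 1 \geq k + 1$, again a contradiction. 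This second contradiction is precisely where the $k + 2$ slack in the hypothesis is spent, and it is the most delicate point in the argument.

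With $j < i$ secured, I would re-index the subchain as $((A_0, B_0), \ldots, (A_r, B_r))$ with $r \geq 2$, and let $u$ be the unique element of $B_0 \setminus B_1$ and $v$ the unique element of $A_r \setminus A_{r-1}$. The identities $B_1 = V(G) \setminus (S \cup \{u\})$ and $A_{r-1} = V(G) \setminus (T \cup \{v\})$ then directly yield the three chain conditions. The membership $u \in V(G) \setminus (S \cup \Nin[T])$ follows from $\Nin[T] \subseteq B_r \subseteq B_1$ combined with $u \notin B_1$; the statement for $v$ is symmetric, using $\Nout[S] \subseteq A_0 \subseteq A_{r-1}$ inherited from tightness of the original chain. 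Finally, if $u = v$, that common vertex would be excluded from $A_\ell$ for all $\ell \leq r - 1$ and from $B_\ell$ for all $\ell \geq 1$, hence excluded from $A_\ell \cup B_\ell = V(G)$ at any index $1 \leq \ell \leq r - 1$---an impossibility that exploits $r \geq 2$. The main obstacle in the proof is thus concentrated entirely in securing the strict inequality $j < i$; the remaining verifications are routine consequences of the definitions and of tightness.
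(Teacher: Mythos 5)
Your proof is correct and follows essentially the same strategy as the paper: take a tight, nice, gapless $S$--$T$ chain (Lemma~\ref{lem:nice_and_tight}), locate the first B-step $j$ and the last A-step $i$, prove $j<i$ by exhibiting a separation whose order would otherwise exceed $k$, and let $u$, $v$ be the single vertices moved at those two steps. The paper establishes the inequality $j<i$ with a single contradiction (evaluating the order at index $j$, where $A_j = V(G)\setminus T$ and $B_j = (V(G)\setminus S)\setminus\{u\}$, yielding $|V(G)\setminus(S\cup T)|-1\geq k+1$) rather than your two-case split, and it builds the output chain as $(A_0,B_0)+C'+(A_r,B_r)$ around the middle subchain rather than taking the contiguous subchain directly, but these are cosmetic variations on the same argument.
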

\begin{proof}
Suppose $G$ has a gapless $S$--$T$ chain of order at most $k$.
By Lemma~\ref{lem:nice_and_tight}, $G$ has a gapless  
$S$--$T$ chain $C=((A_0,B_0), (A_1, B_1),\ldots,(A_r,B_r))$ of
order at most $k$ that is tight and nice.
Since $C$ is tight, we have $B_r = \Nin[T]$.
We also have $B_0 = V(G) \setminus S$ from the definition of an
$S$--$T$ chain.  Therefore, 
$B_0 \setminus B_r = V(G) \setminus (S \cup \Nin[T])$ and
this set contains at least two vertices as we are assuming
$|V(G) \setminus (S \cup T)| \geq k + 2$.
Similarly $A_r \setminus A_0 = V(G) \setminus (\Nout[S] \cup T)$ 
has at least two vertices. 
Let $i_1$ denote the smallest $i$ 
such that $0 < i \leq r$ and $|B_{i-1} \setminus B_i| = 1$
and $i_2$ the largest $i$ such that
$0 \leq  i < r$ and $|A_{i + 1} \setminus A_i| = 1$
Since $C$ is nice, the choice of $i_1$ and $i_2$ implies that
$B_i = B_0$ for $0 \leq i < i_1$ and
$A_i = A_r$ for $i_2 < i \leq r$.
Let $u$ be the unique vertex in $B_{i_1 - 1} \setminus B_{i_1}$ and
$v$ the unique vertex in $A_{i_2 + 1} \setminus A_{i_2}$.
We must have $i_1 \leq i_2$, since otherwise
$A_{i_1} \cap B_{i_1} = A_r \cap (B_0 \setminus \{u\})
= (V(G) \setminus T) \cap (V(G) \setminus (S \cup \{u\}))
= V(G) \setminus (S \cup T \cup \{u\})$ 
and hence $|V(G) \setminus (S \cup T)| \leq |A_{i_1} \cap B_{i_1}| + 1 \leq k +
1$, contradicting our assumption.

Since $u \not\in B_{i_1}$ and $v \not\in A_{i_1} \subseteq A_{i_2}$,
we must have $u \neq v$.  Let $C'$ be the separation
chain $((A_{i_1}, B_{i_1}), \ldots, (A_{i_2}, B_{i_2}))$
Then, $(A_0, B_0) + C' + (A_r, B_r)$ is a $S$--$T$ chain since
$B_0 = V(G) \setminus S$ and $A_r = V(G) \setminus T$,
it is gapless since $|B_0 \setminus B_{i_1}| = 1$ and 
$|A_r \setminus A_{i_2}| = 1$, and it is clearly of degree at most $k$.
Since $B_{i_1} = V(G) \setminus (S \cup \{u\})$ and  
$A_{i_2} = V(G) \setminus (T \cup \{v\})$, 
$(A_0, B_0) + C'$ is an $S$--$(T \cup \{v\})$ chain and
$C' + (A_r, B_r)$ is an $(S \cup \{u\})$--$T$ chain.
Therefore, the separation chain
$(A_0, B_0) + C' + (A_r, B_r)$ qualifies as the $S$--$T$ chain claimed in the lemma.
\qed
\end{proof}

Given these recurrences and the base case above, our algorithm is
straightforward. Suppose we are given a $k$-admissible pair $(S, T)$.
If $|V(G) \setminus (S \cup T)| \leq k + 1$ holds then we apply
Lemma~\ref{lem:base} and return the gapless $S$--$T$ chain 
it provides. Suppose otherwise. We test if there is
a minimum $S$--$T$ separation that is non-trivial: a minimum $S$--$T$ separation
$(X, Y)$ that is not equal to either $(\Nout[S],\  V(G) \setminus S)$ or
$(V(G) \setminus T,\  \Nin[T])$.
If we find one, we apply Lemma~\ref{lem:divide} and recurse on
subproblems $(S,\ Y \setminus X)$ and $(X \setminus Y,\ T)$.
If either of the recursive calls returns a negative answer, we
return a negative answer.  Otherwise, we concatenate the
solutions from the subproblems as prescribed in Lemma~\ref{lem:divide}
and return the result.  Finally suppose that
there is no minimum $S$--$T$ separation that is non-trivial. 
If $(\Nout[S], V(G) \setminus S)$ is the only minimum $S$--$T$ separation,
then we recurse on $(S \cup \{v\},\  T)$ for every $v \in V(G) \setminus (S \cup
T)$ such that $(S \cup \{v\},\  T)$ is $k$-admissible. 
If $(V(G) \setminus T, \Nin[T])$ is the only
minimum $S$--$T$-separation, then we similarly branch from $T$.
If both $(\Nout[S],\  V(G) \setminus S)$ and $(V(G) \setminus T,\  \Nin[T])$ are
the minimum $S$--$T$ separations, then we branch from both sides.
In either case, if any of the recursive call returns a gapless separation chain
of order at most $k$, 
we trivially extend the chain into a gapless $S$--$T$ separation of order at
most $k$ and return this chain. Otherwise, that is, if all the recursive calls return
negative answers, we return a negative answer.

The correctness of this algorithm is proved by a straightforward induction 
for which the above Lemmas provide the base case and the induction steps.

We analyze the running time of the algorithm.
The following observation extends the one in \cite{Pilipczuk12} that 
the number of vertices of out-degree at most $k$ in a semicomplete
digraph is at most $2k + 1$.

\begin{proposition}
\label{prop:kb-bounded}
Let $G$ be an $h$-semicomplete digraph and let $U \subseteq V(G)$.
Then the number of vertices $v \in V(G) \setminus U$ such that
$\dout(U \cup \{v\}) \leq k$ is at most $h + 2k + 1$ for every $k > 0$.
The similar statement with the out-degree replaced by the in-degree also holds.
\end{proposition}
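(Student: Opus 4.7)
Let $W = \{v \in V(G) \setminus U : \dout(U \cup \{v\}) \leq k\}$. My plan is to bound $|W|$ by double-counting edges in the subdigraph of $G$ induced on $W$, combining an upper bound on out-degrees obtained from the defining inequality with a lower bound on the number of neighboring pairs coming from the $h$-semicompleteness assumption.

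First, I would observe that for every $v \in W$ we have $\Nout(v) \cap W \subseteq \Nout(U \cup \{v\})$, because any $w \in \Nout(v) \cap W$ satisfies $w \neq v$ and $w \notin U$ (as $W \cap U = \emptyset$), and $w \in \Nout(v)$ implies $w \in (\Nout(U) \cup \Nout(v)) \setminus (U \cup \{v\})$. Hence $|\Nout(v) \cap W| \leq k$, and summing this bound over $v \in W$ gives at most $k|W|$ directed edges inside $W$.

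Next I would bound the number of edges from below using that $G$ is $h$-semicomplete. Each vertex of $W$ has at most $h$ non-neighbors in $G$, hence at most $h$ non-neighbors inside $W$, so it has at least $|W| - 1 - h$ vertices of $W$ to which it is joined by an edge in some direction. Summing and dividing by two, the number of unordered pairs of $W$ joined by at least one edge is at least $|W|(|W| - 1 - h)/2$, and since each such pair contributes at least one directed edge, the number of directed edges inside $W$ is at least $|W|(|W| - 1 - h)/2$.

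Combining the two estimates yields $|W|(|W| - 1 - h)/2 \leq k|W|$, which (assuming $|W| > 0$, else the bound is trivial) rearranges to $|W| \leq h + 2k + 1$, as desired. The in-degree version follows by applying the same argument to the digraph $G^{-1}$ obtained by reversing all edges of $G$, which is again $h$-semicomplete. The only moderately delicate point is the very first inclusion $\Nout(v) \cap W \subseteq \Nout(U \cup \{v\})$, which crucially uses that the vertices of $W$ lie outside $U$; everything else is a straightforward count.
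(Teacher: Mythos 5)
Your proof is correct and is essentially the same argument as the paper's: both bound the edges of $G$ within the set in question from below by $h$-semicompleteness and from above by the constraint $\dout(U \cup \{v\}) \leq k$ (via the inclusion $N^+(v) \cap W \subseteq N^+(U \cup \{v\})$, which is exactly the observation $N^+_{G[X]}(v) \subseteq N^+_G(U \cup \{v\})$ used in the paper). The paper phrases the edge comparison as an average-out-degree/pigeonhole contrapositive on an arbitrary $X \subseteq V(G)\setminus U$, while you run a direct double count on $W$ itself; this is only a presentational difference.
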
 
\begin{proof}
Fix $U$, let $X \subset V(G) \setminus U$ be arbitrary,
and set $|X| = b$.
By the definition of $h$-semicomplete digraphs, 
$G[X]$ contains at least $b(b - h - 1)/2$ edges and
hence the average out-degree of vertices in $G[X]$ is at least 
$(b - h - 1) / 2$. For each $v \in X$, $\NoutG(U \cup \{v\})$
contains $N^+_{G[X]}(v)$ and hence if $b > h + 2k + 1$ then
there is at least one $v \in X$ such that $|\NoutG(U \cup \{v\})| > k$.
This proves the first statement. The second statement is immediate by
symmetry.
\qed
\end{proof}

Thus, the number of vertices to branch on from each side in the above algorithm
is bounded by $h + 2k + 1$. 

To measure the ``size" of the problem instance $(S, T)$, we
introduce the following two functions.
Let $\gamma(S, T)$ denote the order of the minimum $S$--$T$ separation.
Let $\mu(S, T)$ be defined by 
\begin{eqnarray*}
\mu(S,T) & = & 2|V(G) \setminus (\Nout[S]\cup \Nin[T])| + 
|\Nout(S) \Delta \Nin(T)|,
		\end{eqnarray*}
where $X \Delta Y$ is the symmetric difference between $X$ and $Y$.
	
\begin{lemma}\label{lem:rec-bound}
Let $(X,Y)$ be a minimum $S$--$T$ separation.  Then, we have
\[	\mu(S,\ Y \setminus X) + \mu(X \setminus Y,\ T) = \mu(S,T).	\]
\end{lemma}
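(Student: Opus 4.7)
The plan is to reduce $\mu$ to a much simpler quantity and then exploit the minimality of $(X,Y)$ through a short structural lemma.

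First I would establish the identity
$$\mu(S,T) = 2|V(G)| - 2|S| - 2|T| - |\Nout(S)| - |\Nin(T)|$$
for every pair $(S,T)$ admitting an $S$--$T$ separation. This follows by rewriting $2|V(G)\setminus(\Nout[S]\cup\Nin[T])|$ by inclusion--exclusion and $|\Nout(S)\Delta\Nin(T)|$ via the symmetric-difference formula, and then observing that the sets $\Nout[S]\cap\Nin[T]$ and $\Nout(S)\cap\Nin(T)$ coincide. The latter holds because the existence of an $S$--$T$ separation forces $\Nout[S]\cap T=\emptyset$, which in turn implies $S\cap\Nin(T)=\emptyset$, so the ``cross'' terms from the two summands of $\mu$ cancel and the expression collapses to the stated form.

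Second, I would prove the structural identity
$$\Nin(Y\setminus X) \;=\; X\cap Y \;=\; \Nout(X\setminus Y),$$
which is where the minimality of $(X,Y)$ is used. The inclusions $\Nin(Y\setminus X)\subseteq X\cap Y$ and $\Nout(X\setminus Y)\subseteq X\cap Y$ are immediate from the separation property forbidding edges between $X\setminus Y$ and $Y\setminus X$ (in-neighbors of $Y\setminus X$ must lie in $Y$, and being outside $Y\setminus X$ they lie in $X\cap Y$). For the reverse inclusion, if some $v\in X\cap Y$ had no out-edge into $Y\setminus X$, then $(X,Y\setminus\{v\})$ would still be an $S$--$T$ separation, of order $|X\cap Y|-1$, contradicting the minimality of $(X,Y)$; the statement for $\Nout(X\setminus Y)$ is symmetric.

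Finally, the simplified formula applies to both subproblems $(S,Y\setminus X)$ and $(X\setminus Y,T)$: the separation property gives $\Nout[S]\subseteq X$ and $\Nout[X\setminus Y]\subseteq X$, so neither set overlaps with the corresponding ``$T$'' side $Y\setminus X$ or $T\subseteq Y\setminus X$. Substituting $\Nin(Y\setminus X)=\Nout(X\setminus Y)=X\cap Y$ and using the partition $|V(G)|=|X\setminus Y|+|X\cap Y|+|Y\setminus X|$, a short cancellation collapses $\mu(S,Y\setminus X)+\mu(X\setminus Y,T)$ to exactly $\mu(S,T)$. The only nonroutine step is the minimality argument in the second paragraph; once that is in place, the rest is bookkeeping.
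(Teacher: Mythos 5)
Your proof is correct and takes a genuinely different route from the paper's. The paper relies on the same key structural fact $\Nout(X\setminus Y)=\Nin(Y\setminus X)=X\cap Y$ (which it asserts from minimality without spelling out the argument), but then proceeds by partitioning $X\cap Y$ into three pieces $C_0, C_1, C_2$ according to membership in $\Nout(S)$ and $\Nin(T)$ and grinding through the two summands in the definition of $\mu$ directly. You instead first establish the closed form $\mu(S,T)=2|V(G)|-2|S|-2|T|-|\Nout(S)|-|\Nin(T)|$ (valid whenever an $S$--$T$ separation exists, since that forces $\Nout[S]\cap T=\emptyset$ and hence $\Nout[S]\cap\Nin[T]=\Nout(S)\cap\Nin(T)$), and then the identity reduces to a one-line cancellation using $|V(G)|=|X\setminus Y|+|X\cap Y|+|Y\setminus X|$. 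Your version is cleaner and arguably more illuminating about what $\mu$ is really measuring; it also has the advantage of actually proving the structural fact (a vertex $v\in X\cap Y$ with no out-edge into $Y\setminus X$ could be dropped from $Y$ to get a smaller separation), which the paper leaves to the reader. The two approaches are equivalent in content but your bookkeeping is considerably lighter.
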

\begin{proof}
Since $(X,Y)$ is a minimum $S$--$T$ separation, we have  
$\Nout(X \setminus Y) = \Nin(Y \setminus X) = X \cap Y$ and 
hence $\Nout[X \setminus Y] = X$ and 
$\Nin[Y \setminus X] = Y$.
We define pairwise disjoint vertex sets $C_0$, $C_1$, and $C_2$ by
\begin{eqnarray*}
C_0 & = & X \cap Y \setminus (\Nout(S) \cup \Nin(T))\\
C_1 & = & X \cap Y \cap (\Nout(S) \setminus \Nin(T))\\
C_2 & = & X \cap Y \cap (\Nin(T) \setminus \Nout(S)).
\end{eqnarray*}
Then, noting that $(X \cap Y)\setminus \Nin(T) = C_0 \cup C_1$
and that $\Nin(T) \setminus (X \cap Y) = \Nin(T) \setminus X$ since
$\Nin(T) \cap (X \setminus Y) = \emptyset$, we have
\begin{eqnarray*}
\mu(X \setminus Y,\  T) & = & 
2|V(G) \setminus (\Nout[X \setminus Y] \cup \Nin[T])| + |
\Nout(X \setminus Y) \Delta \Nin(T)|\\
& = & 2|V(G) \setminus (X \cup \Nin[T])| + 
|(X \cap Y) \Delta \Nin(T)| \\
& = & 2|V(G) \setminus (X \cup \Nin[T])|
	+ |C_0| + |C_1| + |\Nin(T) \setminus X|.
\end{eqnarray*}
Similarly, we have 
\begin{eqnarray*}
\mu(S,\  Y \setminus T) & = &  
2|V(G) \setminus (\Nout[S] \cup Y)| +
|\Nout(S) \Delta (X \cap Y)| \\
& = & 2|V(G) \setminus (\Nout[S] \cup Y)|
+ |C_0| + |C_2| + |\Nout(S) \setminus Y|.
\end{eqnarray*}
Moreover, we have 
\begin{eqnarray*}
|V(G) \setminus (\Nout[S] \cup \Nin[T])| & = & 
|V(G) \setminus (Y \cup \Nout[S])|
+ |V(G) \setminus (X \cup \Nin[T])|+ |C_0|
\end{eqnarray*}
and 
\begin{eqnarray*}
|\Nout(S) \Delta \Nin(T)| & = & |C_1| + |C_2| 
	+ |\Nout(S) \setminus Y| + |\Nin(T) \setminus X|.
\end{eqnarray*}
Therefore, we have
\begin{eqnarray*}
\mu(S,T) &=& 2|V(G) \setminus (\Nout[S] \cup \Nin[T])|+ |\Nout(S)
	\Delta \Nin(T)|	\\
	& = & 
2|V(G) \setminus (Y \cup \Nout[S])| +
2|(V(G) \setminus (X \cup \Nin[T])|\\
  & & + 2|C_0|
	+ |C_1| + |C_2|
	+ |\Nout(S) \setminus Y| + |\Nin(T) \setminus X|\\
	& = & \mu(S,\  Y \setminus X) + \mu(X \setminus Y,\  T)
\end{eqnarray*}
as claimed in the lemma.
\qed
\end{proof}

\begin{lemma}\label{lem:mu-split}
Let $(X, Y)$ be a non-trivial $S$--$T$ separation:
$X \setminus Y \neq S$ and $Y \setminus X \neq T$.
Then, we have $\mu(S,\ Y \setminus X) \ge 1$ and 
$\mu(X \setminus Y,\  T) \ge 1$.
\end{lemma}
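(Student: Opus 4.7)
The plan is to exhibit, for each of the two claimed inequalities, an explicit vertex that contributes at least $1$ to the corresponding $\mu$-value. The starting observation is that in any $S$--$T$ separation $(X,Y)$, the conditions $S \cap Y = \emptyset$ and $X \cup Y = V(G)$ force $S \subseteq X \setminus Y$, and symmetrically $T \subseteq Y \setminus X$. Non-triviality, i.e., $X \setminus Y \neq S$ and $Y \setminus X \neq T$, therefore yields vertices $v \in (X \setminus Y) \setminus S$ and $u \in (Y \setminus X) \setminus T$. These will be the witnesses.

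For the first inequality $\mu(S,\ Y \setminus X) \geq 1$, I plan to analyze the chosen $v$. First note that because $(X,Y)$ is a separation, there is no edge from $X \setminus Y$ to $Y \setminus X$; hence $v$ has no out-neighbor in $Y \setminus X$, which gives $v \notin \Nin(Y \setminus X)$, and clearly $v \notin Y \setminus X$ since $v \in X \setminus Y$. So $v \notin \Nin[Y \setminus X]$. Now split on whether $v$ lies in $\Nout[S]$. If $v \notin \Nout[S]$, then $v$ is counted in $V(G) \setminus (\Nout[S] \cup \Nin[Y \setminus X])$, contributing $2$ to $\mu(S, Y \setminus X)$. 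If $v \in \Nout[S]$, then since $v \notin S$ we have $v \in \Nout(S)$, and combined with $v \notin \Nin(Y \setminus X)$ we get $v \in \Nout(S) \Delta \Nin(Y \setminus X)$, contributing $1$. Either way $\mu(S,\ Y \setminus X) \geq 1$.

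The second inequality $\mu(X \setminus Y,\ T) \geq 1$ is completely symmetric. Using $u \in (Y \setminus X) \setminus T$, the same separation property gives $u \notin \Nout(X \setminus Y)$ (no edges from $X \setminus Y$ to $Y \setminus X$) and $u \notin X \setminus Y$, hence $u \notin \Nout[X \setminus Y]$. Splitting on whether $u \in \Nin[T]$ then puts $u$ either in $V(G) \setminus (\Nout[X \setminus Y] \cup \Nin[T])$ or in $\Nout(X \setminus Y) \Delta \Nin(T)$.

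The argument is essentially bookkeeping, so I do not foresee a genuine obstacle; the only subtlety that needs to be pinned down carefully is the structural observation $S \subseteq X \setminus Y$ (and symmetrically $T \subseteq Y \setminus X$), which converts the hypothesis ``$X \setminus Y \neq S$'' into the existence of a usable witness vertex outside $S$.
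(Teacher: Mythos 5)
Your proof is correct and takes essentially the same approach as the paper: pick a witness vertex $v \in (X \setminus Y) \setminus S$, show $v \notin \Nin[Y \setminus X]$ via the separation property, and case-split on whether $v \in \Nout(S)$ to see it contributes to either the $\Nout(S) \Delta \Nin(Y \setminus X)$ term or the $V(G) \setminus (\Nout[S] \cup \Nin[Y \setminus X])$ term. The only cosmetic difference is that the paper derives $v \notin \Nin[Y \setminus X]$ from the inclusion $\Nin[Y \setminus X] \subseteq Y$ together with $v \notin Y$, whereas you argue directly that $v$ has no out-neighbor in $Y \setminus X$; these are equivalent.
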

\begin{proof}
Due to the symmetry it suffices to prove the first inequality.		
From the assumption, there is some vertex $v \in (X \setminus Y) \setminus
S$. Since $\Nin[Y \setminus X] \subseteq Y$, we have 
$v \not\in \Nin[Y \setminus X]$.
If $v \in \Nout(S)$ then $v \in \Nout(S) \Delta \Nin(Y \setminus X)$ and 
otherwise $v \in V(G) \setminus (\Nout[S]\cup \Nin[Y \setminus X])$.
Therefore, in either case, we have  
\begin{eqnarray*}
\mu(S,\ Y \setminus X) & = & 2|V(G) \setminus (\Nout[S]\cup \Nin[Y \setminus
X])|
\\
& & +|\Nout(S) \Delta \Nin(Y \setminus X)|\\
& \ge & 1.
\end{eqnarray*}
\qed
\end{proof}

Let $R(S,T)$ denote the number of problem instances recursively considered
when we solve the instance $(S, T)$, not counting the instances in the base
case, 
but counting the instance $(S, T)$ itself unless it is in the base case.
Let $\mu'(S, T) = \max\{0, 2\mu(S, T) - 1\}$.

\begin{lemma}
\label{lem:num_rec}
Let $G$ be an $h$-semicomplete digraph and $k$ a positive integer.
Then, for each $k$-admissible pair $(S, T)$, 	
we have 
\begin{eqnarray*}
\label{eqn:R}
	R(S,T) \leq \mu'(S, T) \cdot (h + 2k + 1)^{2(k-\gamma(S,T))}
\end{eqnarray*}
\end{lemma}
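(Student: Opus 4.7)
The plan is to prove this by strong induction on $\mu(S,T)$, tracking how $\mu$, $\gamma$, and the branching factor interact with the two recurrences used by the algorithm.

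For the base case $\mu(S,T)=0$, I would observe that $|V(G)\setminus(\Nout[S]\cup\Nin[T])|=0$ and $\Nout(S)=\Nin(T)$, so $V(G)\setminus(S\cup T)\subseteq\Nout(S)\cap\Nin(T)$. By $k$-admissibility, $|\Nout(S)|\leq k$, hence $|V(G)\setminus(S\cup T)|\leq k\leq k+1$ and the algorithm immediately invokes Lemma~\ref{lem:base}. Therefore $R(S,T)=0=\mu'(S,T)\cdot(h+2k+1)^{2(k-\gamma(S,T))}$.

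For the inductive step, I would split on which recurrence the algorithm executes. In Case~1 (a non-trivial minimum $S$--$T$ separation $(X,Y)$ exists), Lemma~\ref{lem:rec-bound} gives $\mu(S,Y\setminus X)+\mu(X\setminus Y,T)=\mu(S,T)$, and Lemma~\ref{lem:mu-split} ensures each summand is at least $1$, so both subinstances are strictly smaller in $\mu$. A short argument shows $\gamma(S,Y\setminus X)=\gamma(X\setminus Y,T)=\gamma(S,T)$: every $S$--$(Y\setminus X)$ separation is an $S$--$T$ separation (since $T\subseteq Y\setminus X$), while $(X,Y)$ itself is an $S$--$(Y\setminus X)$ separation of order $\gamma(S,T)$; the other side is symmetric. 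Applying the inductive hypothesis and using $\mu'(S,Y\setminus X)+\mu'(X\setminus Y,T)=2\mu(S,T)-2$ produces
\[
R(S,T) \leq 1 + (2\mu(S,T)-2)\,(h+2k+1)^{2(k-\gamma(S,T))} \leq \mu'(S,T)\cdot(h+2k+1)^{2(k-\gamma(S,T))},
\]
since the power factor is at least $1$.

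In Case~2 (all minimum $S$--$T$ separations are trivial) the algorithm branches, and Proposition~\ref{prop:kb-bounded} caps at $h+2k+1$ the number of candidate branch vertices on each side, so in the both-trivial subcase there are at most $(h+2k+1)^2$ subinstances (and at most $h+2k+1$ in the one-sided subcases). For each subinstance I would argue that $\gamma$ strictly increases whenever the branched vertex escapes $\Nout[S]$ (on the $S$ side) or $\Nin[T]$ (on the $T$ side): otherwise the corresponding trivial minimum separation would survive as a minimum separation in the subinstance. When $\gamma$ strictly increases, the power $(h+2k+1)^{2(k-\gamma)}$ drops by a factor $(h+2k+1)^2$, absorbing the branching factor. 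When $\gamma$ is preserved, the branched vertex lies in $\Nout(S)$ (respectively $\Nin(T)$), and I would show by direct computation that $\mu$ strictly decreases in the subinstance, so the inductive hypothesis with the smaller $\mu$ closes the bound.

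The main obstacle will be the bookkeeping in Case~2, specifically verifying that in each branching subinstance where $\gamma$ does not strictly increase, $\mu$ drops by enough to cover the $+1$ from the current call together with the $(h+2k+1)^2$ (or $(h+2k+1)$) branching factor. This reduces to a finite case split on whether the branched vertex $u$ (resp.\ $v$) lies in $\Nout(S)$ (resp.\ $\Nin(T)$) or outside, tracking the resulting changes in $|V(G)\setminus(\Nout[S]\cup\Nin[T])|$ and $|\Nout(S)\,\Delta\,\Nin(T)|$; the analysis should be symmetric between the two sides, so it suffices to handle one.
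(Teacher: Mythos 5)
Your base case and divide-and-conquer case are essentially what the paper does (you prove $\gamma$ is preserved in the divide step, the paper only uses $\gamma(S,T')\geq\gamma(S,T)$, but either suffices), and the overall induction on $\mu$ is a reasonable reorganization of the paper's ``induction on the structure of recursive calls.'' However, your treatment of the branching case contains a genuine gap.

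You posit a dichotomy --- when the branched vertex $u$ escapes $\Nout[S]$, $\gamma$ strictly increases; when $u\in\Nout(S)$, $\gamma$ is preserved and you fall back on a $\mu$-decrease argument. The dichotomy is illusory: in the branching case $\gamma$ \emph{always} strictly increases, regardless of whether $u$ lies in $\Nout(S)$. The reason is that every $(S\cup\{u\})$--$T$ separation $(A,B)$ satisfies $u\notin B$, whereas $u\in V(G)\setminus S$; hence $(A,B)\neq(\Nout[S],V(G)\setminus S)$, and likewise (in the two-sided branch) $(A,B)\neq(V(G)\setminus T,\Nin[T])$ because $v\in V(G)\setminus T$ but $v\notin A$. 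Since the algorithm only branches when the minimum $S$--$T$ separations are exactly these trivial ones, every $(S\cup\{u\})$--$(T\cup\{v\})$ separation is a \emph{non-minimum} $S$--$T$ separation, so its order exceeds $\gamma(S,T)$. Your claim that ``otherwise the corresponding trivial minimum separation would survive'' is therefore wrong, because the trivial separation $(\Nout[S],V(G)\setminus S)$ never survives as an $(S\cup\{u\})$--$T$ separation once $u\notin S$ has been forced into $A$.

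This matters because your fallback is unsound. If $\gamma$ were preserved across $b^2$ branches with $b=h+2k+1$, a drop of $\mu$ by one gives $\mu'$ a drop of two, yielding $R(S,T)\leq 1+b^2\,(\mu'(S,T)-2)\,b^{2(k-\gamma)}$, which is \emph{not} bounded by $\mu'(S,T)\,b^{2(k-\gamma)}$ once $\mu'\geq 3$ and $b\geq 2$: the $\mu$-decrease cannot absorb the $b^2$ branching factor. The paper instead uses $\gamma(S\cup\{u\},T\cup\{v\})>\gamma(S,T)$ to knock the exponent down by $2$, so the $b^2$ new instances are exactly cancelled by the $b^{-2}$ in the power factor, and then uses $\mu'(S\cup\{u\},T\cup\{v\})<\mu'(S,T)$ only to cover the additive $+1$ from the current call. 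You should replace the dichotomy with the universal strict increase of $\gamma$ upon branching; the rest of your argument then closes.
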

\begin{proof}
The proof is by induction on the structure of recursive calls.		
If instance $(S, T)$ belongs to the base case 
$|V(G) \setminus (S \cup T)| \leq k + 1$, then $R(S, T) = 0$
by definition and inequality (\ref{eqn:R}) trivially holds.
Note that if $\mu(S, T) =
0$ then $V(G) \setminus (S \cup T) = \Nin(S) = \Nout(T)$ and
$(S, T)$ belongs to the base case.
We next consider the case where, in processing the instance  
$(S, T)$, the ``divide-and-conquer" recurrence is applied
and instances $(S,T')$ and $(S',T)$ are recursed on.
We have a non-trivial minimum separation $(X, Y)$ of $(S, T)$
such that $S' = X \setminus Y$ and $T' = Y \setminus X$.
By Lemma~\ref{lem:rec-bound}, we have 
$\mu(S, T) = \mu(S, T') + \mu(S', T)$.
Moreover, by Lemma~\ref{lem:mu-split},
we have $\mu(S, T') \ge 1$ and $\mu(S', T) \ge 1$.
Therefore, we have 
\begin{eqnarray*}
	\mu'(S, T) & = & 2\mu(S, T) - 1 \\
	& = & (2\mu(S, T') - 1) + (2\mu(S', T) - 1) + 1\\
	& = & \mu'(S, T') + \mu'(S', T) + 1.
\end{eqnarray*}
Moreover, we have $\gamma(S, T') \geq \gamma(S, T)$ since
every $S$--$T'$ separation is a $S$--$T$ separation 
and similarly $\gamma(S', T) \geq \gamma(S, T)$.
Applying the induction hypothesis to the instances
$(S, T')$ and $(S', T)$, we have 
\begin{eqnarray*}
	R(S,T) & =	&	1+R(S,T')+R(S',T) \\
	& \leq & 1 + (\mu'(S, T') + \mu'(S', T)) \cdot b^{2(k-\gamma(S,T))}\\
	& \leq & 1 + (\mu'(S, T) - 1) \cdot b^{2(k-\gamma(S,T))}\\
	& \leq & \mu'(S, T) \cdot b^{2(k-\gamma(S,T))},\\
\end{eqnarray*}
where $b = h + 2k + 1$, 
that is, inequality (\ref{eqn:R}).
We next consider the case where the branching recurrence is applied.
We have three cases to consider: (1) $(\Nout[S], V(G) \setminus S)$
and $(V(G) \setminus T, \Nin[T])$ are the only minimum $S$--$T$ separators,
(2) $(\Nout[S], V(G) \setminus S)$ is the only minimum $S$--$T$ separator,
and (3) $(V(G) \setminus T, \Nin[T])$ is the only minimum $S$--$T$ separator.
First consider case (1).  In this case, 
for each pair of vertices $u \in V \setminus (S \cup \Nin[T])$
and $v \in V \setminus (\Nout[S] \cup T)$ such that
the pair $(S \cup \{u\}, T \cup \{v\})$ is $k$-admissible, 
the instance $(S \cup \{u\}, T \cup \{v\})$ is recursed on.
By Proposition~\ref{prop:kb-bounded}, the number of such pair
is at most $b^2 = (h + 2k + 1)^2$.
For each pair of $u$ and $v$, we have by the induction hypothesis 
\begin{eqnarray*}
	R(S \cup \{u\}, T \cup \{v\})  
	\leq  \mu'(S \cup \{u\}, T \cup \{v\}) \cdot 
	b^{2(k-\gamma(S \cup \{u\}, S \cup
	\{v\}))}.
	\end{eqnarray*}
Since no $(S \cup \{u\}$--$(T \cup \{v\})$ separation is a minimum $S$--$T$
separation from the assumption of this case,  	
we have $\gamma(S \cup \{u\}, T \cup \{v\}) > \gamma(S, T)$.
Moreover, since $\mu(S \cup \{u\}, T \cup \{v\}) < \mu(S, T)$ and 
$\mu(S, T) > 0$, we have 
$\mu'(S \cup \{u\}, T \cup \{v\}) < \mu'(S, T)$. 
Therefore, we have  
\begin{eqnarray*}
	R(S, T) &\leq & 1 + \sum_{u, v}R(S \cup \{u\}, T \cup \{v\})\\
	&\leq & 1 + b^2 \cdot (\mu'(S, T) - 1) \cdot b^{2(k-\gamma(S,T) - 1)} \\
	&\leq &  \mu'(S, T)\cdot b^{2(k-\gamma(S,T))}, \\
\end{eqnarray*}
that is, inequality (\ref{eqn:R}).
Cases (2) and (3) are similar and somewhat simpler.
\qed
\end{proof}

The time for processing each pair $(S, T)$ excluding the time
consumed by subsequent recursive calls is dominated by the
time for finding minimum $S$--$T$ separation and for deciding
if there is a minimum $S$--$T$ separation that is not trivial.
This can be done in $n^{O(1)}$ time by the repeated use
of a standard augmenting path algorithm for a minimum $S$--$T$ cut.
Since $\mu'(\emptyset, \emptyset) = O(n)$, 
we have the running time claimed in Theorem~\ref{thm:alg}.

\section{Tame obstacles survive random sampling: proof of
Theorem~\ref{thm:comb}}
\label{sec:comb}
We prove Theorem~\ref{thm:comb} in this section.

Let $G$ be a semicomplete digraph with $n$ vertices.
For $0 \leq d \leq n$, let
$\Vinle{d}(G)$, $\Vinge{d}(G)$, $\Voutle{d}(G)$, and $\Voutge{d}(G)$ 
denote the set of vertices $v$ with
$\dinG(v) \leq d$, $\dinG(v) \geq d$, $\doutG(v) \leq d$,
and $\doutG(v) \geq d$, respectively.
We omit the reference to $G$ and write $\Vinle{d}$ etc. 
when $G$ is clear from the context.

\begin{proposition}
For every $0 \leq d < n$, we have
$\Voutle{d} \subseteq \Vinge{n - d - 1}$
and  $\Vinle{d} \subseteq \Voutge{n - d - 1}$.
\end{proposition}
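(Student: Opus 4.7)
The plan is to reduce the two claimed inclusions to a single degree inequality that is immediate from semicompleteness. Recall that $G$ is semicomplete, so for every pair of distinct vertices $u,v \in V(G)$ there is at least one edge between them in some direction. Equivalently, for every $v \in V(G)$, the set of non-neighbors of $v$ is empty, so $\NinG(v) \cup \NoutG(v) = V(G) \setminus \{v\}$, a set of size $n-1$.

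The key step is then a simple counting observation: since $|\NinG(v) \cup \NoutG(v)| = n-1$, we have
\[
\dinG(v) + \doutG(v) \;\geq\; |\NinG(v) \cup \NoutG(v)| \;=\; n-1
\]
for every $v \in V(G)$.

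Given this inequality, the two inclusions are immediate. If $v \in \Voutle{d}$, then $\doutG(v) \leq d$, so $\dinG(v) \geq n - 1 - d$, which means $v \in \Vinge{n-d-1}$. Symmetrically, if $v \in \Vinle{d}$, then $\dinG(v) \leq d$ gives $\doutG(v) \geq n-1-d$, so $v \in \Voutge{n-d-1}$. There is no real obstacle here; the only thing to be careful about is that $\NinG(v)$ and $\NoutG(v)$ may overlap (when anti-parallel edges are present), which is why the inequality $\dinG(v) + \doutG(v) \geq n-1$ is in general not tight, but this only strengthens the two desired inclusions.
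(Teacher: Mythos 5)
Your proof is correct and is the natural argument; the paper states this proposition without a proof, and your reduction to the inequality $\dinG(v) + \doutG(v) \geq n-1$ (valid since a semicomplete digraph has no non-neighbors, with equality only in the tournament case) is exactly what is implicitly being used.
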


\begin{definition}
\label{def:tangles}
\cite{Pilipczuk12}
Let $G$ be a semicomplete digraph and
let $d \geq 0$, $l > 0$ and $k > 0$ be integers. 
A {\em $(d, l, k)$-degree tangle} 
of $G$ is a vertex set $T \subseteq \Voutge{d} \cap \Voutle{d + k}$
with $|T| =  l$.
An {\em $(d, l, k)$-matching tangle} 
of $G$ is a pair of vertex sets $(T_1, T_2)$
with $|T_1| =|T_2| = l$ such that:
\begin{enumerate}
  \item 
  $T_1 \subseteq \Voutle{d}$, 
  $T_2 \subseteq \Voutge{d + k + 1}$, and
  \item there is some bijection $\phi:
T_1 \rightarrow T_2$ such that 
$(v, \phi(v)) \in E(G)$ for every $v \in T_1$.
\end{enumerate}
We will often refer to a $(d, l, k)$-degree (-matching) tangle as an $(l,
k)$-degree (-matching) tangle without specifying $d$.
\end{definition}

\begin{lemma}
\label{lem:degree-interval}
Let $G$ be a semicomplete digraph on $n$ vertices.
Then, 
for each pair $d_1$ and $d_2$ of non-negative integers
such that $d_1 + d_2 < n$,
we have $|\Voutge{d_1} \cap \Vinge{d_2}| \leq n - (d_1 + d_2) + 2\pw(G)$.
\end{lemma}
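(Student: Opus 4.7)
The plan is to fix a nice path-decomposition $\PP=(X_1,\ldots,X_r)$ of $G$ of width $k=\pw(G)$ and read everything off its event timeline. For each vertex $v$, let $a_v$ and $b_v$ be the indices of its first and last bags; since $\PP$ is nice, let $p(v)$ denote the introduction rank of $v$ (its position in the sequence of $n$ introduction events) and $q(v)$ its forget rank. The structural fact driving the proof is that semicompleteness upgrades interval-disjointness into edge information: if $b_u<a_v$ then no bag contains both $u$ and $v$, so the edge condition forbids $(u,v)\in E(G)$, and semicompleteness forces $(v,u)\in E(G)$, i.e.\ $u\in\Nout(v)$. Symmetrically, $b_v<a_u$ yields $u\in\Nin(v)$.

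First I would translate this into degree--rank inequalities. Every out-neighbor $u$ of $v$ satisfies $a_u\leq b_v$ by the edge condition applied to $(v,u)$, and such a $u\neq v$ must be either one of the at most $k$ other vertices of $X_{b_v}$ or one of the $q(v)-1$ vertices forgotten strictly before $v$. Hence $d^+(v)\leq q(v)+k-1$, equivalently $q(v)\geq d^+(v)-k+1$; the dual estimate is $p(v)\leq n-d^-(v)+k$. Setting $W=\Voutge{d_1}\cap\Vinge{d_2}$, every $v\in W$ therefore satisfies $p(v)\leq M:=n-d_2+k$ and $q(v)\geq\Lambda:=d_1-k+1$.

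The hard part will be bounding $|\{v : p(v)\leq M,\ q(v)\geq\Lambda\}|$, and I plan to do this by a case split on the relative order, along the event timeline, of the $M$-th introduction and the $\Lambda$-th forget. In the first case the $\Lambda$-th forget happens at or before the $M$-th introduction; letting $F\geq\Lambda$ be the number of forget events by the time of the $M$-th introduction, the qualifying vertices partition into the $M-F$ currently in the bag (each with forget rank exceeding $F\geq\Lambda$) and the $F-\Lambda+1$ already forgotten with forget rank in $[\Lambda,F]$, giving exactly $M-\Lambda+1$ in total. In the other case the $\Lambda$-th forget is strictly after the $M$-th introduction, so every qualifying $v$ is still in the bag at the moment of the $M$-th introduction (introduced but not yet forgotten), and the count is at most the bag size $k+1$. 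Thus $|W|\leq\max(M-\Lambda+1,\ k+1)$, and since the hypothesis $d_1+d_2<n$ yields $M-\Lambda+1=n-(d_1+d_2)+2k\geq k+1$, we obtain the desired $|W|\leq n-(d_1+d_2)+2k$.
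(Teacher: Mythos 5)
Your proof is correct, and it reaches the bound by a genuinely different route than the paper's. The paper fixes the \emph{extremal} events for the set $W=\Voutge{d_1}\cap\Vinge{d_2}$: it takes the first forget (say of $v_0\in W$, at index $i_0+1$) and the last introduce (of $v_1\in W$, at index $i_1$), argues via $\Nout[v_0]\subseteq Y_0:=\bigcup_{j\le i_0}X_j$ and $\Nin[v_1]\subseteq Y_1:=\bigcup_{j\ge i_1}X_j$ that $|Y_0|\ge d_1+1$ and $|Y_1|\ge d_2+1$, and then closes by an inclusion--exclusion count using $Y_0\cap Y_1 = X_{i_0}\cap X_{i_1}$ and the fact that every $W$-vertex not ``interior'' to $(i_0,i_1)$ lies in $X_{i_0}\cup X_{i_1}$. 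You instead prove the \emph{per-vertex} rank inequalities $q(v)\ge d^+(v)-k+1$ and $p(v)\le n-d^-(v)+k$ (valid for every vertex $v$, not just extremal ones of $W$), which turn membership in $W$ into the box constraint $p(v)\le M$, $q(v)\ge\Lambda$, and then count the number of vertices satisfying this constraint directly on the event timeline. Both approaches invoke the bag-size bound $k+1$ twice; the paper does so at the two special bags $X_{i_0},X_{i_1}$, while you fold it into the two rank inequalities. Your argument is arguably more modular: the rank inequalities are clean local lemmas and the final count is a self-contained combinatorial fact about two sequences of events, whereas the paper's $Y_0,Y_1,Z$ bookkeeping is more ad hoc. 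The only cosmetic gap is that you did not justify that a nice path-decomposition of optimal width always exists, but this is standard (and the paper also uses it without proof), so it is not a real issue.
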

\begin{proof}
Fix an optimal nice path-decomposition $X_0, X_1, \ldots, X_{2n}$
of $G$, where $n = |V(G)|$.
We say that vertex $v$ is {\em introduced at $i$} if
$X_{i} \setminus X_{i - 1} = \{v\}$ and {\em forgotten at $i$} if
$X_{i - 1} \setminus X_i = \{v\}$.
Let $i_0$ denote the smallest index $i$ such that
a vertex in $\Voutge{d_1} \cap \Vinge{d_2}$ is forgotten at $i + 1$; 
we let $v_0$ denote this forgotten vertex.
Similarly, let $i_1$ be the largest index $i$ such that 
a vertex in $\Voutge{d_1} \cap \Vinge{d_2}$ is introduced at $i$; 
we let $v_1$ denote this vertex.
If $i_0 \geq i_1$ then $\Voutge{d_1} \cap \Vinge{d_2} \subseteq X_{i_0}$ and
hence $|\Voutge{d_1} \cap \Vinge{d_2}| \leq \pw(G) + 1$; we are done.
So suppose that $i_0 < i_1$.
Let $Y_0 = \bigcup_{j \leq i_0} X_{j}$ and
$Y_1 = \bigcup_{j \geq i_1} X_{j}$.
Since $\Nout[v_0] \subseteq Y_0$, by the definition of path-decompositions, 
and $\dout(v_0) \geq d_1$, we have $|Y_0| \geq d_1 + 1$.
Similarly, since $\Nin[v_1] \subseteq Y_1$ and 
$\din(v_1) \geq d_2$ we have $|Y_1| \geq d_2 + 1$.
Let $Z$ be the set of vertices in $\Voutge{d_1} \cap \Vinge{d_2}$ 
that are introduced at some $i > i_0$ and 
forgotten at some $i' < i_1$.
Then, each vertex in $(\Voutge{d_1} \cap \Vinge{d_2}) \setminus Z$ 
must be in $X_{i_0}$ if it is introduced at some $i \leq {i_0}$
and in $X_{i_1}$ if it is forgotten at some $i > {i_1}$.
As $Y_0 \cup Y_1 \subseteq V(G) \setminus Z$, 
we have 
\begin{eqnarray*}
|Y_0 \cup Y_1| & \leq & 
n - |\Voutge{d_1} \cap \Vinge{d_2}| + 
|(\Voutge{d_1} \cap \Vinge{d_2}) \setminus Z|\\
& \leq & n - |\Voutge{d_1} \cap \Vinge{d_2}| +
|X_{i_0} \cup X_{i_1}|.
\end{eqnarray*}
We have $Y_0 \cap Y_1 = X_{i_0} \cap X_{i_1}$ from
the definition of a path-decomposition and hence
$|Y_0| + |Y_1| \leq n - |\Voutge{d_1} \cap \Vinge{d_2}| + |X_{i_0}| +
|X_{i_1}|$.
Combining with the bounds on $|Y_0|$ and $|Y_1|$
above, we have 
\begin{eqnarray*}
|\Voutge{d_1} \cap \Vinge{d_2}| & \leq & n - (d_1 +
1) - (d_2 + 1) + |X_{i_0}| + |X_{i_1}|\\
& \leq & n - (d_1 + d_2) + |X_{i_0}| + |X_{i_1}| - 2\\
& \leq & n - (d_1 + d_2) + 2\pw(G).
\end{eqnarray*}
\qed
\end{proof}

\begin{corollary}
\label{cor:degree-tangle}
If $G$ has an $(l, k)$-degree tangle then
$\pw(G) \geq (l - k - 1)/2$. 
\end{corollary}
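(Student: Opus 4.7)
The plan is to chain the preceding proposition with Lemma~\ref{lem:degree-interval}, applied to the degree tangle's degree window. Concretely, let $T$ be a $(d,l,k)$-degree tangle of the semicomplete digraph $G$ on $n$ vertices. By definition every $v \in T$ satisfies $\dout(v) \ge d$ and $\dout(v) \le d+k$, so $T \subseteq \Voutge{d}$ and $T \subseteq \Voutle{d+k}$.

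Next I would convert the out-degree upper bound into an in-degree lower bound using the proposition immediately above Definition~\ref{def:tangles}: since $\Voutle{d+k} \subseteq \Vinge{n-(d+k)-1}$, we obtain
\[
 T \;\subseteq\; \Voutge{d}\cap \Vinge{n-d-k-1}.
\]
This is exactly the kind of set whose size Lemma~\ref{lem:degree-interval} bounds, provided the two degree thresholds sum to less than $n$. Setting $d_1 = d$ and $d_2 = n-d-k-1$, their sum is $n-k-1 < n$ (as $k \ge 1$), so the lemma applies and gives
\[
 l \;=\; |T| \;\le\; |\Voutge{d_1}\cap \Vinge{d_2}| \;\le\; n - (d_1+d_2) + 2\pw(G) \;=\; (k+1) + 2\pw(G).
\]
Rearranging yields $\pw(G) \ge (l-k-1)/2$, which is the desired inequality.

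There is no real obstacle here: the content is entirely in Lemma~\ref{lem:degree-interval}, and the corollary is just the observation that a degree tangle is a concrete instance of the set $\Voutge{d_1}\cap \Vinge{d_2}$ to which the lemma applies, via the trivial proposition that turns small out-degree into large in-degree in a semicomplete digraph. The only care needed is verifying the hypothesis $d_1+d_2<n$ of the lemma, which follows from $k \ge 1$.
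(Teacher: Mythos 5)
Your proof is correct and takes essentially the same route as the paper: reduce to Lemma~\ref{lem:degree-interval} via the proposition converting out-degree upper bounds into in-degree lower bounds, then rearrange. Your explicit check of the hypothesis $d_1+d_2<n$ is a small, sensible addition that the paper omits.
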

\begin{proof}
Let $T$ be a $(l, k)$-degree tangle.
Then, $T \subseteq \Voutge{d} \cap \Voutle{d + k}
\subseteq \Voutge{d} \cap \Vinge{n - (d + k) -1}$
for some $d$ and hence $l \leq n - (n - k - 1) + 2\pw(G) = k + 1 + 2\pw(G)$
by Lemma~\ref{lem:degree-interval}. The corollary follows.
\qed
\end{proof}
\begin{remark}
The lemma in \cite{Pilipczuk12} states that if
$G$ has a $(5k+2, k)$-degree tangle then $\pw(G) > k$.
The above corollary implies a slightly stronger statement that if $G$ has
a $(3k+2, k)$-degree tangle then $\pw(G) > k$. 
\end{remark}

The following lemma generalizes the analysis of
on matching tangles in \cite{Pilipczuk12}.
We need this generalization when we introduce another
obstacle for small pathwidth. 
\begin{lemma}
\label{lem:disjoint}
Let $G$ be a semicomplete digraph on $n$ vertices and
let $l$, $k$, $d$ positive integers.
Suppose $G$ has a set of $l$ pairwise vertex-disjoint
directed paths from $\Voutle{d}$ to 
$\Voutge{d + k}$. Then, $\pw(G) \geq \min\{l, k\}$.
\end{lemma}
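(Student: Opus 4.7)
The proof proceeds by contradiction, adapting the counting template of Lemma~\ref{lem:degree-interval} separately to each of the $l$ paths. Assume $\pw(G) = w < \min\{l, k\}$ and fix an optimal nice path-decomposition $(X_0, \dots, X_r)$ of $G$ of width $w$; write $\mu_v, \nu_v$ for the first and last bag indices containing $v$, and recall that $(u,v) \in E(G)$ implies $\mu_v \leq \nu_u$. Along each path $P_j = s_j \to u_2^j \to \cdots \to t_j$ this chains to $\mu_{u_{r+1}^j} \leq \nu_{u_r^j}$ for every $r$.

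For each path $P_j$, set $A_j := \bigcup_{i \leq \nu_{t_j}} X_i$ and $B_j := \bigcup_{i \geq \mu_{s_j}} X_i$. Since $t_j \in \Voutge{d+k}$ forces every out-neighbor $u$ of $t_j$ to satisfy $\mu_u \leq \nu_{t_j}$, we get $\{t_j\} \cup \Nout(t_j) \subseteq A_j$ and hence $|A_j| \geq d+k+1$; semicompleteness yields $\din(s_j) \geq n-1-d$, whose in-neighbors similarly populate $B_j$, giving $|B_j| \geq n-d$. Inclusion–exclusion with $|A_j \cup B_j| \leq n$ yields $|A_j \cap B_j| \geq k+1$. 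The case analysis now branches on the arrangement of $\mu_{s_j}$ and $\nu_{t_j}$: (i) if $\mu_{s_j} > \nu_{t_j}$ for some $j$, any $v \in A_j \cap B_j$ has a contiguous bag-interval spanning the whole gap $[\nu_{t_j}, \mu_{s_j}]$, so $A_j \cap B_j \subseteq X_{\nu_{t_j}}$ and $|X_{\nu_{t_j}}| \geq k+1$, forcing $w \geq k$, a contradiction; (ii) if some pair $j_1 \neq j_2$ has $\nu_{t_{j_1}} < \mu_{s_{j_2}}$, the same reasoning applied to $A_{j_1}$ and $B_{j_2}$ again gives $|X_{\nu_{t_{j_1}}}| \geq k+1$, a contradiction; (iii) otherwise $\max_j \mu_{s_j} \leq \min_j \nu_{t_j}$, and the intervals $I_j := [\mu_{s_j}, \nu_{t_j}]$ share a common bag index $t^\star$.

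In case (iii), for every $t \in I_j$ the path $P_j$ must contribute at least one vertex to $X_t$. Otherwise the sets $V_- := \{v \in V(P_j) : \nu_v < t\}$ and $V_+ := \{v \in V(P_j) : \mu_v > t\}$ would both be non-empty (they contain $s_j$ and $t_j$ respectively, since $\mu_{s_j} \le t \le \nu_{t_j}$), and a crossing path-edge $(u_r, u_{r+1})$ with $u_r \in V_-$ and $u_{r+1} \in V_+$ would give $\mu_{u_{r+1}} > t > \nu_{u_r} \geq \mu_{u_{r+1}}$, which is absurd. Vertex-disjointness of the $l$ paths then yields $|X_{t^\star}| \geq l$.

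The main obstacle is pushing this bound up by one, to $|X_{t^\star}| \geq \min\{l,k\}+1$, which is what is actually needed for the final contradiction with $w < \min\{l,k\}$. The plan is to choose $t^\star$ extremally within the common intersection (for instance as $\max_j \mu_{s_j}$, so that at least one $s_{j^\star}$ is freshly introduced there) and to exploit the surplus $|A_{j^\star} \cap B_{j^\star}| \geq k+1$ to produce, in $X_{t^\star}$, at least one vertex lying outside the $l$ distinct path-representatives already counted. Converting this $k+1$ surplus of ``$I_{j^\star}$-dual-active'' vertices into an additional occupant of the specific bag $X_{t^\star}$ is the delicate quantitative heart of the argument, and is where the min with $k$ enters.
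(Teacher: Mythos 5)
You handle cases (i) and (ii) correctly, but case (iii) has a genuine gap that you yourself flag: the argument only yields $|X_{t^\star}| \geq l$, hence $\pw(G) \geq l - 1$, which is one short of what is needed, and the proposed ``push by one'' is not carried out and as stated is not an argument. The gap is repairable inside your framework, but it needs a further case split. If $\bigcap_j I_j$ contains two consecutive indices $t^\star$ and $t^\star + 1$, the same crossing-edge reasoning shows each $P_j$ has a vertex in $X_{t^\star} \cap X_{t^\star+1}$, and niceness of the decomposition gives $|X_{t^\star} \cap X_{t^\star + 1}| \leq \pw(G)$, so $l \leq \pw(G)$. If instead $\bigcap_j I_j$ is the singleton $\{t^\star\}$, take $j_1$ with $\nu_{t_{j_1}} = t^\star$ and $j_2$ with $\mu_{s_{j_2}} = t^\star$; then every $v \in A_{j_1} \cap B_{j_2}$ satisfies $\mu_v \le t^\star \le \nu_v$, so $A_{j_1} \cap B_{j_2} \subseteq X_{t^\star}$ and $|X_{t^\star}| \geq k+1$, contradicting $\pw(G) < k$.

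The paper's proof avoids all of this with a single global cut, and it is worth seeing why your per-path cutoffs are the source of the off-by-one. Set $A_i = \bigcup_{j \le i} X_j$ and $B_i = \bigcup_{j > i} X_j$; then $A_i \cap B_i = X_i \cap X_{i+1}$, and by niceness $|A_i \cap B_i| \le \pw(G)$ rather than $\pw(G)+1$. Under the assumption $\pw(G) \le k - 1$ one gets $|A_i| + |B_i| = n + |A_i \cap B_i| \le n + k - 1$, so some index $i$ satisfies $|A_i| \le d + k$ and $|B_i| \le n - d - 1$ simultaneously, which forces $\Voutle{d} \subseteq A_i$ and $\Voutge{d+k} \subseteq B_i$. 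Each of the $l$ disjoint paths must then meet $A_i \cap B_i$ (otherwise some path edge would cross from $A_i \setminus B_i$ to $B_i \setminus A_i$), giving $l \le |A_i \cap B_i| \le \pw(G)$ with no slack to recover. This one uniform cut subsumes your three cases and the two sub-cases above in a single stroke.
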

\begin{proof}
Let $Q$ be a set of $l$ pairwise vertex-disjoint
directed paths from $\Voutle{d}$ to 
$\Voutge{d + k}$. 
We assume $\pw(G) \leq k - 1$ and show that
$\pw(G) \geq l$.
Let $X_0$, \ldots, $X_{2n}$ be a nice path-decomposition of $G$ of 
optimal width (which is $k - 1$ or smaller).
Let $A_i = \bigcup_{j \leq i} X_j$ and $B_i = \bigcup_{j > i} X_j$ for
$0 \leq i < 2n$.  Since $|A_i| + |B_i| 
= n + |A_i \cap B_i| \leq n + k - 1$
holds for $0 \leq i < 2n$, there is some $i$ such that 
$|A_i| \leq d + k$
and $|B_i| \leq n - d - 1$. Fix such $i$. 
For each $v \not\in A_i$, $\Nin[v] \subseteq B_i$ and
hence $\din(v) \leq n - d - 2$.
Therefore, we have $\Voutle{d} \subseteq \Vinge{n - d - 1} \subseteq A_i$.   
Similarly, 
for each $v \not\in B_i$, $\Nout[v] \subseteq A_i$ and
hence $\dout(v) \leq d + k - 1$. Therefore we have 
$\Voutge{d + k} \subseteq B_i$.
Therefore, each path in $Q$ from $\Voutle{d}$ to 
$\Voutge{d + k}$ is from $A_i$ to $B_i$ and 
must have at least one vertex in $A_i \cap B_i$ since 
$(A_i, B_i)$ is a separation and hence there is no edge 
from $A_i \setminus B_i$ to $B_i \setminus A_i$.
As the $l$ paths in $Q$  
are pairwise vertex-disjoint, 
we have $l \leq |A_i \cap B_i| \leq \pw(G)$.
\qed
\end{proof}

\begin{corollary}\cite{Pilipczuk12}
If a semicomplete digraph $G$ has a 
$(l, k)$-matching tangle, then
$\pw(G) \geq \min\{l, k + 1\}$.
\end{corollary}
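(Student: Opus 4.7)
The plan is to derive the corollary as an immediate consequence of Lemma~\ref{lem:disjoint} applied with the degree gap $k' = k+1$. Given an $(l,k)$-matching tangle $(T_1, T_2)$ with witnessing bijection $\phi : T_1 \to T_2$, I would use the $l$ length-one directed paths $v \to \phi(v)$ for $v \in T_1$ as the required family of vertex-disjoint paths from $\Voutle{d}$ to $\Voutge{d+k+1}$.

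First, I would verify disjointness: since $T_1 \subseteq \Voutle{d}$ and $T_2 \subseteq \Voutge{d+k+1}$, and every vertex in $\Voutge{d+k+1}$ has out-degree at least $d+k+1 > d$, the sets $T_1$ and $T_2$ are disjoint; combined with the fact that $\phi$ is a bijection, the $l$ single-edge paths are pairwise vertex-disjoint. Next, the endpoints of each such path lie in $\Voutle{d}$ and $\Voutge{d+k+1}$ respectively, by definition of the matching tangle.

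Then I would invoke Lemma~\ref{lem:disjoint} with parameters $l$ and $k+1$ (playing the role of $k$ in that lemma), which immediately yields $\pw(G) \geq \min\{l, k+1\}$. There is no real obstacle here — the whole point of stating Lemma~\ref{lem:disjoint} in the more general form with arbitrary vertex-disjoint paths (rather than just matchings) is precisely to make this corollary a one-line deduction. The only minor sanity check is that the indexing conventions match: the matching tangle uses the gap $d+k+1$ while Lemma~\ref{lem:disjoint} uses $d+k$, so one must apply the lemma with its $k$ set to $k+1$ to recover the stated bound $\min\{l, k+1\}$.
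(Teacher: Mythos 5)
Your proposal is correct and follows exactly the paper's argument: extract the $l$ pairwise vertex-disjoint single-edge paths from $\Voutle{d}$ to $\Voutge{d+k+1}$ given by the matching, then invoke Lemma~\ref{lem:disjoint} with the gap parameter set to $k+1$. The extra checks you make (disjointness of $T_1$ and $T_2$, and the reindexing of the lemma's $k$) are accurate and match what the paper leaves implicit.
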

\begin{proof}
Let $(T_1, T_2)$ be a $(l, k)$-matching tangle
and let $d$ be such that
$T_1 \subseteq \Voutle{d}$ and 
$T_2 \subseteq \Voutge{d + k + 1}$.
Apply Lemma~\ref{lem:disjoint} to
the set of $l$ vertex-disjoint paths from
$\Voutle{d}$ to $\Voutge{d + k + 1}$
provided by the matching edges.
\qed
\end{proof}

We follow the scenario described in the introduction.
Given an $h$-semicomplete
digraph $G$ of pathwidth at least $f(h, k)$, we complete it into a semicomplete
digraph $G'$ on $V(G)$, in which we find a large obstacle, say a degree tangle
$T$. Then, we apply Theorem~\ref{thm:sample-indep} to obtain a
random independent set $I$ of the complement of the underlying graph of $G$.
We hope that $T \cap I$ is a tangle of $G[I]$
that is strong enough to conclude $\pw(G[I]) \geq k$.
For this to happen, we need to have the out-degrees $|N^+_{G'}(v) \cap I|$
of $v$, for $v \in T \cap I$, to be close to each other.

As observed in \cite{Pilipczuk12}, the optimal vertex separation sequence
lists the vertices roughly in the order of increasing out-degrees and 
therefore each vertex has most vertices of smaller degree as its
out-neighbors, except for some exceptions.  
The following notion of the wildness of vertices measures how
exceptional a vertex is.
\begin{definition}
\label{def:wildness}
For each vertex $v \in G$, we define the {\em wildness}
$\wild(v)$ of $v$ by 
\begin{eqnarray*}
\wild(v) = |\Voutle{\dout(v)} \setminus \Nout(v)|.
\end{eqnarray*}
\end{definition}

\begin{lemma}
\label{lem:wild}
Let $G$ be semicomplete and $v$ an arbitrary vertex of $G$. 
Then, for each integer $w \geq 0$, 
we have 
\begin{eqnarray*}
|\Voutle{\dout(v) - w} \cap \Nin(v)| \geq \wild(v) - w - 2\pw(G) - 1
\end{eqnarray*}
and 
\begin{eqnarray*}
|\Voutge{\dout(v) + w} \cap \Nout(v)| \geq \wild(v) - w - 2\pw(G).
\end{eqnarray*}
\end{lemma}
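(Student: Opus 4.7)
Let $d = \dout(v)$. The plan is to exploit two ingredients: the semicompleteness of $G$, which forces every vertex outside $\Nout(v) \cup \{v\}$ to lie in $\Nin(v)$; and Lemma~\ref{lem:degree-interval}, which caps the number of vertices whose out-degree lies in a narrow interval.

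First I would record the following corollary of Lemma~\ref{lem:degree-interval} to be invoked several times. Because every vertex $u$ in a semicomplete digraph satisfies $\din(u) \geq n-1-\dout(u)$, we have $\Voutle{b} \subseteq \Vinge{n-1-b}$, and Lemma~\ref{lem:degree-interval} applied with $d_1 = a$ and $d_2 = n-1-b$ yields
\begin{eqnarray*}
|\Voutge{a} \cap \Voutle{b}| & \leq & b - a + 1 + 2\pw(G)
\end{eqnarray*}
whenever $0 \leq a \leq b < n$. Setting $a = 0$ gives the global bound $|\Voutle{b}| \leq b + 1 + 2\pw(G)$; setting $(a,b) = (d-w+1,\, d)$ bounds the number of vertices whose out-degree lies in the interval $(d-w,\, d]$ by $w + 2\pw(G)$.

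For the first inequality, semicompleteness gives $\Voutle{d} \setminus \Nout(v) \subseteq \{v\} \cup \Nin(v)$ (any $u \neq v$ missing from $\Nout(v)$ must lie in $\Nin(v)$), so $|\Voutle{d} \cap \Nin(v)| \geq \wild(v) - 1$. Subtracting off the at most $w + 2\pw(G)$ vertices of out-degree in $(d-w,\, d]$ gives
\begin{eqnarray*}
|\Voutle{d-w} \cap \Nin(v)| & \geq & |\Voutle{d} \cap \Nin(v)| - |\Voutge{d-w+1} \cap \Voutle{d}| \\
& \geq & \wild(v) - 1 - w - 2\pw(G),
\end{eqnarray*}
as required.

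For the second inequality, I would use the dual observation that for $w \geq 1$ we have $\Voutle{d+w-1} \supseteq \Voutle{d}$, so $|\Voutle{d+w-1} \setminus \Nout(v)| \geq \wild(v)$; combining with $|\Nout(v)| = d$ and $|\Voutle{d+w-1}| \leq d + w + 2\pw(G)$ gives
\begin{eqnarray*}
|\Voutge{d+w} \cap \Nout(v)| & = & d - |\Voutle{d+w-1} \cap \Nout(v)| \\
& \geq & d - |\Voutle{d+w-1}| + \wild(v) \\
& \geq & \wild(v) - w - 2\pw(G).
\end{eqnarray*}
The case $w = 0$ of this inequality is the one I expect to be the main nuisance, since the clean dual argument requires $\Voutle{d+w-1}$ to contain $\Voutle{d}$ and this fails at $w = 0$. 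I plan to recover it by inspecting the vertices of out-degree exactly $d$ separately, using that Lemma~\ref{lem:degree-interval} bounds $|\{u : \dout(u) = d\}| \leq 1 + 2\pw(G)$; monotonicity $|\Voutge{d} \cap \Nout(v)| \geq |\Voutge{d+1} \cap \Nout(v)|$ combined with the already-established $w = 1$ bound provides a fallback that loses at most an additive constant.
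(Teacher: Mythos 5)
Your argument is the same as the paper's: reduce both inequalities to Lemma~\ref{lem:degree-interval} via the containments $\Voutle{b} \subseteq \Vinge{n-1-b}$, and use semicompleteness to convert ``$\not\in \Nout(v)$'' into ``$\in \Nin(v) \cup \{v\}$'' for the first inequality. The first inequality and the $w \geq 1$ part of the second are correct and match the paper line for line.

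Where you diverge from the paper, you are actually sharper than it is. The paper's proof of the second inequality contains the chain
$|\Voutle{\dout(v)+w-1} \setminus \Nout(v)| \geq |\Voutle{\dout(v)} \setminus \Nout(v)|$,
which presupposes $\Voutle{\dout(v)+w-1} \supseteq \Voutle{\dout(v)}$ and hence $w \geq 1$; the paper makes a parenthetical remark handling $w=0$ only for the first inequality and silently ignores it for the second. You correctly flagged this. In fact, the second inequality as stated is false at $w=0$: in any transitive tournament $v_1 \to v_2 \to \cdots \to v_n$ (which has pathwidth $0$), every vertex $v_i$ has $\wild(v_i)=1$ but $\Voutge{\dout(v_i)} \cap \Nout(v_i) = \emptyset$, violating the claimed $|\Voutge{\dout(v)} \cap \Nout(v)| \geq \wild(v) - 2\pw(G)$. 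Your monotonicity fallback $|\Voutge{d} \cap \Nout(v)| \geq |\Voutge{d+1} \cap \Nout(v)| \geq \wild(v) - 1 - 2\pw(G)$ therefore cannot be improved to the stated bound; $\wild(v) - w - 2\pw(G) - 1$ is the correct constant for all $w \geq 0$. This extra $-1$ is harmless everywhere the lemma is applied (the strict hypothesis $\wild(v) > 3l + w + 2\pw(G)$ in Lemma~\ref{lem:degree-spider}, and $w > 0$ in Lemmas~\ref{lem:matching-spider} and~\ref{lem:spider}, give one unit of slack), so the downstream results are unaffected, but the lemma itself should be restated with the extra $-1$ in the second inequality or restricted to $w \geq 1$.
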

\begin{proof}
For the first inequality, first observe that
\begin{eqnarray*}
|\Voutle{\dout(v)} \cap \Nin(v)| 
& \geq & |\Voutle{\dout(v)} \setminus \Nout(v)| - 1\\
& = & \wild(v) - 1, 
\end{eqnarray*}
since each vertex not in $\Nin(v)$ must be in 
$\Nout(v) \cup \{v\}$.
Since 
\begin{eqnarray*}
|\Voutle{\dout(v)} \setminus \Voutle{\dout(v) - w}|
& \leq & |\Vinge{n - \dout(v) - 1} \cap \Voutge{\dout(v) - w + 1}| \\
& \leq & w + 2\pw(G)
n\end{eqnarray*}
by Lemma~\ref{lem:degree-interval} (or trivially holding when
$w = 0$ and hence Lemma~\ref{lem:degree-interval} is not applicable), 
we obtain the first inequality.
 
For the second inequality, we have 
$|\Voutle{\dout(v) + w - 1}| \leq  |\Vinge{n - (\dout(v) + w)}| \leq  
\dout(v) + w + 2\pw(G)$ by Lemma~\ref{lem:degree-interval} and hence
\begin{eqnarray*} 
|\Voutle{\dout(v) + w - 1} \cap \Nout(v)| &\leq & \dout(v) + w + 2\pw(G) -
|\Voutle{\dout(v) + w - 1} \setminus \Nout(v)| \\
& \leq & 
\dout(v) + w + 2\pw(G) -
|\Voutle{\dout(v)} \setminus \Nout(v)| \\
& = & \dout(v) + w + 2\pw(G) - \wild(v).
\end{eqnarray*}
Therefore, of the $\dout(v)$ vertices in 
$\Nout(v)$, at least $\wild(v) - w - 2\pw(G)$ must
belong to $\Voutge{\dout(v) + w}$.
\qed
\end{proof}

If the vertices of a degree-tangle $T$ have small wildness,
then most of their out-neighbors are shared and we
may expect that their degrees in the sampled subgraph $G[I]$ 
will be close to each other. We call such a degree-tangle tame.

\begin{definition}
\label{def:tame}
We say that an $(l, w)$-degree tangle $T$ of $G$ is {\em
tame} (relative to the parameters $l$ and $w$), 
if $\wild(v) \leq 3l + w + 2\pw(G)$ for each $v \in T$.
\end{definition}

A degree-tangle is not necessarily tame, but 
a large number of wild vertices in a degree-tangle are themselves
an evidence of large pathwidth. We capture this fact by another
type of obstacles we call spiders.

\begin{definition}
\label{def:spider}
Let $G$ be a semicomplete digraph and
let $d \geq 0$, $l > 0$, and $w > 0$ be integers.
A {\em $(d, l, w)$-spider} is a triple $(T,
L, R)$, where $T$ is a vertex set with $|T| \geq l$,
$L$ is a family $\{L_v \mid v \in T\}$ of vertex sets,
and $R$ is a family $\{R_v \mid v \in T\}$ of vertex sets,
such that 
the following holds for each $v \in T$:
\begin{enumerate}
     \item $L_v \subseteq \Nin(v)$, 
     \item $|L_v| \geq 3l$,
     \item $\dout(u) \leq d$ for each $u \in L_v$, 
     \item $R_v \subseteq \Nout(v)$, 
     \item $|R_v| \geq 3l$, and  
     \item $\dout(u) \geq d + w$ for each $u \in R_v$.
\end{enumerate}
We will sometimes refer to a $(d, l, w)$-spider as
an $(l, w)$-spider, without specifying $d$.
\end{definition}

\begin{lemma}
\label{lem:spider_lb}
If a semicomplete digraph $G$ has an
$(l, w)$-spider then $\pw(G) > \min\{l, w\}$. 
\end{lemma}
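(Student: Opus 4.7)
The plan is to invoke Lemma~\ref{lem:disjoint} after extracting from the spider $(T, L, R)$ a family of $l$ pairwise vertex-disjoint directed paths from $\Voutle{d}$ to $\Voutge{d+w}$ through distinct vertices of $T$, and then to sharpen the resulting non-strict bound to $\pw(G) > \min\{l, w\}$ by exploiting the full force of the $3l$-thickness hypothesis on the $L_v$'s and $R_v$'s.

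First, I would fix an arbitrary $T' \subseteq T$ with $|T'| = l$ (possible because $|T| \ge l$). Processing $T'$ in any order, for each $v \in T'$ I would greedily pick $\ell_v \in L_v$ and $r_v \in R_v$ disjoint from all previously chosen endpoints $\ell_u, r_u$ and from the other $l-1$ vertices of $T'$, so that the path interiors do not collide. At most $2(l-1) + (l-1) = 3l - 3$ vertices are forbidden at any step, which is strictly less than $|L_v|$ and $|R_v|$, so the greedy selection succeeds. The $l$ length-two paths $\ell_v \to v \to r_v$ that result are pairwise vertex-disjoint, start in $\Voutle{d}$, and end in $\Voutge{d+w}$. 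Applying Lemma~\ref{lem:disjoint} with $l$ paths and gap $k = w$ gives $\pw(G) \ge \min\{l, w\}$.

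To strengthen the inequality to strict, I would assume for contradiction $\pw(G) \le p := \min\{l, w\}$ and, from an optimal path-decomposition, extract a separation $(A, B)$ with $\Voutle{d} \subseteq A$, $\Voutge{d+w} \subseteq B$ and $|A \cap B| \le p$ exactly as in the proof of Lemma~\ref{lem:disjoint}. The key additional observation is that every $v \in T$ must lie in $A \cap B$: if $v$ were in $A \setminus B$, its $\ge 3l$ out-neighbours in $R_v \subseteq B$ would be forbidden from $B \setminus A$ (no edge runs from $A \setminus B$ to $B \setminus A$) and would have to crowd into $A \cap B$, forcing $|A \cap B| \ge 3l > p$; symmetrically for $v \in B \setminus A$, using $L_v$. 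Hence $T \subseteq A \cap B$ and $|A \cap B| \ge |T| \ge l$, so $A \cap B = T$ exactly. The chosen $\ell_v$'s then all lie in $A \setminus B$ (they are in $A$ but outside $T$), and combining this with the $\ge 2l$ vertices in $L_v \setminus T$ and the bound $|A| \le d + w$ should force an extra vertex into the separator beyond $T$, contradicting $|A \cap B| = l$.

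The main obstacle is carrying out that final step cleanly: Lemma~\ref{lem:disjoint} and the ``$T$ lies in the separator'' argument separately each deliver only the non-strict bound $l$, and the extra constraint required for strictness must be harvested from the distribution of $\bigcup_v (L_v \setminus T)$ across $A \setminus B$ together with the interaction between the low out-degree of each $\ell_v$ and the cap $|A| \le d + w$.
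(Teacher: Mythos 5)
Your greedy path extraction and the appeal to Lemma~\ref{lem:disjoint} is exactly the paper's proof, down to the same greedy selection and vertex-budget count. The one difference is in reading off the conclusion: the paper treats the $R_v$ vertices as lying in $\Voutge{d+w+1}$---that is, it reads condition~6 of Definition~\ref{def:spider} as if it said $\dout(u) > d + w$---and then reports $\pw(G) > \min\{l, w\}$ directly from Lemma~\ref{lem:disjoint}, whereas you correctly observe that the definition as written only places $R_v \subseteq \Voutge{d+w}$, so that Lemma~\ref{lem:disjoint} with gap $w$ gives $\pw(G) \geq \min\{l, w\}$.

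Your attempted sharpening does not, however, close this off-by-one, and you say as much in your last paragraph. The steps leading to $A \cap B = T$ force $|T| = l = \min\{l,w\}$, and they presuppose a separation $(A,B)$ with $\Voutle{d} \subseteq A$ and $\Voutge{d+w} \subseteq B$; such a separation is extracted from a path-decomposition (as in the proof of Lemma~\ref{lem:disjoint}) only under the hypothesis $\pw(G) \leq w-1$, which is not available in the borderline case $l = w$ where strictness is in doubt. The ``extra'' separator vertex you want to force in beyond $T$ is never exhibited. Fortunately, strictness is not what the paper actually uses: the lemma is only ever applied to a $(k,k)$-spider to certify pathwidth at least $k$, for which the non-strict $\pw(G) \geq \min\{l, w\}$ suffices. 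So the clean resolution is to keep your first paragraph (which is the paper's argument) and either interpret condition~6 of Definition~\ref{def:spider} as a strict inequality or weaken the lemma's conclusion to $\geq$, rather than to pursue the incomplete second half.
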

\begin{proof}
Let $(T, L, R)$ be a $(d, l, w)$-spider of $G$.
Let $T'$ be an arbitrary subset of $T$ with
$|T'| = l$.
For each $v \in T'$, select $l_v \in L_v$ and
$r_v \in R_v$ so that,
for each distinct pair $u, v \in T'$,
we have 
$\{u, l_u, r_u\} \cap \{v, l_v, r_v\} = \emptyset$.
Since $|L_v| \geq 3l$ and 
$|R_v| \geq 3l$ for each $v \in T$, 
such a selection can trivially be done in 
a greedy manner. 
We have a set of $l$ pairwise vertex-disjoint
paths from $\Voutle{d}(G)$ to
$\Voutge{d + w + 1}(G)$ and hence
by Lemma~\ref{lem:disjoint},
we have $\pw(G) > \min\{l, w\}$.
\qed
\end{proof}

The following lemma shows that spiders capture what we intend
them to capture. 
\begin{lemma}
\label{lem:degree-spider}
Suppose $G$ has a $(2l,w)$-degree tangle $T$.
Then, $G$ has either a tame $(l, w)$-degree tangle or
an $(l, w)$-spider.
\end{lemma}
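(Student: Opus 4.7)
The plan is to dichotomize $T$ according to wildness: vertices with small $\wild$ give us a tame degree tangle directly, and vertices with large $\wild$ are exactly what Lemma~\ref{lem:wild} needs to build the legs of a spider.

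More concretely, fix $d$ such that $T \subseteq \Voutge{d} \cap \Voutle{d+w}$, and partition $T = T_{\mathrm{tame}} \cup T_{\mathrm{wild}}$ where $T_{\mathrm{tame}} = \{v \in T \mid \wild(v) \leq 3l + w + 2\pw(G)\}$. Since $|T| = 2l$, one of the two parts has cardinality at least $l$. If $|T_{\mathrm{tame}}| \geq l$, any $l$-subset $T'$ of $T_{\mathrm{tame}}$ is still contained in $\Voutge{d} \cap \Voutle{d+w}$, so $T'$ is a $(d, l, w)$-degree tangle, and it is tame by construction. We are then done.

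Otherwise $|T_{\mathrm{wild}}| \geq l+1$, and I would produce a $(d, l, w)$-spider with vertex set $T_{\mathrm{wild}}$ by defining, for each $v \in T_{\mathrm{wild}}$,
\[
L_v \;=\; \Nin(v) \cap \Voutle{d}, \qquad R_v \;=\; \Nout(v) \cap \Voutge{d+w}.
\]
Conditions (1), (3), (4), (6) in Definition~\ref{def:spider} are immediate from these definitions. For (2) and (5), I would invoke Lemma~\ref{lem:wild} with the shift parameter $w' = w$. Since $v \in T$ implies $d \leq \dout(v) \leq d+w$, we have $\dout(v) - w \leq d$ and $\dout(v) + w \geq d+w$, so
\[
|L_v| \;\geq\; |\Voutle{\dout(v)-w} \cap \Nin(v)| \;\geq\; \wild(v) - w - 2\pw(G) - 1,
\]
\[
|R_v| \;\geq\; |\Voutge{\dout(v)+w} \cap \Nout(v)| \;\geq\; \wild(v) - w - 2\pw(G).
\]
Because $v$ is wild, $\wild(v) \geq 3l + w + 2\pw(G) + 1$, and both bounds yield at least $3l$. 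This completes the spider.

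I do not see a serious obstacle; the only subtlety is choosing the right shift $w'$ in Lemma~\ref{lem:wild} so that the slack absorbed by the interval $[d, d+w]$ containing $\dout(v)$ exactly cancels the $-w$ term coming from the lemma, leaving the desired $3l$. The threshold $3l + w + 2\pw(G)$ in the definition of tameness is calibrated precisely to make this cancellation work, which is what makes the dichotomy clean.
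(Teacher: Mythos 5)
Your proof is correct and takes essentially the same approach as the paper's: the same wildness dichotomy with threshold $3l + w + 2\pw(G)$, the same definitions of $L_v$ and $R_v$, and the same invocation of Lemma~\ref{lem:wild} (the paper parameterizes the shift as $\dout(v) - d$ rather than $w$ and bounds it by $w$, whereas you take the shift $w$ and use the containment $\Voutle{\dout(v)-w} \subseteq \Voutle{d}$; these are equivalent).
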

\begin{proof}
Let $U = \{v \in T \mid \wild(v) \leq 3l + w + 2\pw(G)\}$.
If $|U| \geq l$ then $U$ contains a tame $(l, w)$-degree
tangle and we are done. So, suppose otherwise.
Let $d$ be such that $T \subseteq \Voutge{d} \cap \Voutle{d + w}$. 
For each $v \in T \setminus U$,
let $L_v = \Voutle{d} \cap \Nin(v)$ and 
$R_v = \Voutge{d + w} \cap \Nout(v)$.
Fix $v \in T \setminus U$. 
As $\wild(v) > 3l  + w + 2\pw(G)$, we have, 
by Lemma~\ref{lem:wild},
\begin{eqnarray*}
|L_v| &\geq &
\wild(v) - (\dout(v) - d)  - 2\pw(G) - 1\\
& \geq & \wild(v) - w  - 2\pw(G) - 1\\
& \geq & 3l
\end{eqnarray*}
and similarly $|R_v| \geq 3l$. 
Therefore, the triple $(T \setminus U, L, R)$ is a $(d, l, w)$-spider.
\qed
\end{proof}

We similarly define the tameness of matching tangles. 
\begin{definition}
\label{def:matching-tame}
We say that a $(d, l, w)$-matching tangle $(T_1, T_2)$ of
$G$ is tame if  
\begin{enumerate}
  \item $\wild(v) \leq 3l + d + w - \dout(v) + 2\pw(G)$
  for each $v \in T_1$ and
  \item $\wild(v) \leq 3l + \dout(v) - d + 2\pw(G)$
  for each $v \in T_2$. 
\end{enumerate}
\end{definition}

\begin{lemma}
\label{lem:matching-spider}
Suppose $G$ has a $(d, 3l, w)$-matching tangle $(T_1, T_2)$. 
Then, $G$ has either a tame $(d, l, w)$-matching tangle or
a $(d, l, w)$-spider.
\end{lemma}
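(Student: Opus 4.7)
The plan is to follow the same schema as Lemma~\ref{lem:degree-spider}, now applied to both sides of the matching tangle. Let $(T_1, T_2)$ with bijection $\phi : T_1 \to T_2$ witness the $(d, 3l, w)$-matching tangle, and split each side by tameness. Define
\[
U_1 = \{v \in T_1 : \wild(v) \leq 3l + d + w - \dout(v) + 2\pw(G)\}
\]
and
\[
U_2 = \{v \in T_2 : \wild(v) \leq 3l + \dout(v) - d + 2\pw(G)\}.
\]
The branch is on whether both $U_1$ and $U_2$ are large, or not.

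In the ``both large'' branch, suppose $|U_1| \geq 2l$ and $|U_2| \geq 2l$. Since $|T_1| = 3l$, we have $|T_1 \setminus U_1| \leq l$ and $|\phi^{-1}(T_2 \setminus U_2)| = |T_2 \setminus U_2| \leq l$, so $W := U_1 \cap \phi^{-1}(U_2)$ satisfies $|W| \geq 3l - l - l = l$. Any $l$-subset $T_1' \subseteq W$, together with $T_2' := \phi(T_1')$ and the restriction of $\phi$, satisfies both clauses of Definition~\ref{def:matching-tame}, yielding a tame $(d, l, w)$-matching tangle.

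In the remaining branch, $|T_1 \setminus U_1| > l$ or $|T_2 \setminus U_2| > l$; by symmetry treat the first. Pick any $l$-subset $T \subseteq T_1 \setminus U_1$, and for each $v \in T$ set $L_v = \Voutle{\dout(v)} \cap \Nin(v) \subseteq \Voutle{d}$ (using $\dout(v) \leq d$) and $R_v = \Voutge{d+w} \cap \Nout(v)$. Applying Lemma~\ref{lem:wild} with shift $0$ and using $\wild(v) > 3l + d + w - \dout(v) + 2\pw(G)$ gives $|L_v| \geq \wild(v) - 2\pw(G) - 1 > 3l + (d - \dout(v)) + w - 1 \geq 3l$; applying Lemma~\ref{lem:wild} with shift $d + w - \dout(v) \geq w \geq 1$ gives $|R_v| \geq \wild(v) - (d + w - \dout(v)) - 2\pw(G) > 3l$. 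Thus $(T, \{L_v\}, \{R_v\})$ meets all six conditions of Definition~\ref{def:spider} and is a $(d, l, w)$-spider. The symmetric subcase $|T_2 \setminus U_2| > l$ swaps the roles of the two parts of Lemma~\ref{lem:wild}: for $v \in T_2 \setminus U_2$ one has $\dout(v) \geq d + w + 1$, so shift $\dout(v) - d \geq w + 1$ yields $L_v = \Voutle{d} \cap \Nin(v)$ of size $\geq 3l$, and shift $0$ yields $R_v = \Voutge{\dout(v)} \cap \Nout(v) \subseteq \Voutge{d+w}$ of size $\geq 3l$.

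The only delicate point is numerical: the additive $3l$ in the tameness inequalities of Definition~\ref{def:matching-tame} and the lower bound $3l$ in Definition~\ref{def:spider} have been calibrated so that, after the additive loss $2\pw(G) + 1$ from Lemma~\ref{lem:wild} and the shift chosen to push $\dout$-values to $d$ or $d+w$, the surviving quantity is still at least $3l$. This calibration is precisely what drives the factor-three blow-up from the $(l, w)$-conclusion to the $(3l, w)$-hypothesis, and once it is checked the rest of the argument is routine bookkeeping.
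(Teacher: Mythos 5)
Your proof is correct and follows essentially the same approach as the paper: split each side of the matching tangle into tame and wild vertices, produce a tame $(d,l,w)$-matching tangle when both wild sets are small (at most $l$), and otherwise build a spider from the wild side using Lemma~\ref{lem:wild} with the same $L_v$, $R_v$ constructions. The only superficial differences are that you explicitly spell out the symmetric $T_2$ case (which the paper dispatches with ``similarly'') and you trim the spider vertex set to an $l$-subset rather than keeping all of $I_1$ (both satisfy $|T|\geq l$).
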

\begin{proof}
Let 
$I_1
 = \{v \in T_1 \mid \wild(v) > 3l + d + w - \dout(v) + 2\pw(G)\}$ and
$I_2 = \{v \in T_2 \mid \wild(v) > 3l + \dout(v) - d + 2\pw(G)\}$.
If $|I_1| \leq l$ and $|I_2| \leq l$ then
there is some $T_1' \subseteq T_1 \setminus I_1$ and
$T_2', \subseteq T_2 \setminus I_2$ with $|T_1'| = |T_2'| = l$
such that there is a matching from $T_1'$ to $T_2'$ by
edges of $G$: $(T_1', T_2')$ is a tame $(d, l, w)$-matching tangle.

Suppose otherwise.
We first consider the case where $|I_1| > l$.
For each $v \in I_1$, 
let $L_v = \Voutle{\dout(v)} \cap \Nin(v)$ and
$R_v = \Voutge{d + w} \cap \Nout(v)$. 
Applying Lemma~\ref{lem:wild} and using the assumption
$\wild(v) > 3l + d + w - \dout(v) + 2\pw(G)$ , we have
\begin{eqnarray*}
|L_v| & \geq & \wild(v) - 2\pw(G) - 1 \\
& \geq & 3l + d + w - \dout(v)\\ 
& \geq & 3l 
\end{eqnarray*}
and 
\begin{eqnarray*}
|R_v| & \geq & \wild(v) - (d + w - \dout(v)) - 2\pw(G) \\
& \geq & 3l.\\
\end{eqnarray*}
Therefore, $(I_1, L, R)$ is a $(d, l, w)$-spider.
In the case $|I_2| > l$, we have a
$(d, l, w)$-spider similarly constructed on $I_2$.
\qed
\end{proof}

We also need to define the tameness of spiders. 
\begin{definition}
\label{def:spider-tame}
Let $(T, L, R)$ be a $(d, l, w)$-spider.
We say that a vertex $u \in \bigcup_{v \in T} L_v$
is {\em tame}
(relative to the parameters $d$, $l$, and $w$) 
if $\wild(u) \leq 3l + d + w - \dout(u) + 2\pw(G)$.
Similarly, $u \in \bigcup_{v \in T} R_v$
is {\em tame} if 
$\wild(u) \leq 3l + \dout(u) - d + 2\pw(G)$.
We let $L_v^{\rm tame}$ and 
$R_v^{\rm tame}$ denote the set of
tame vertices in $L_v$ and $R_v$ respectively.
We say that a $(d, l, w)$-spider $(T, L, R)$
is {\em tame} if $|L_v^{\rm tame}| \geq 2l$
and $|R_v^{\rm tame}| \geq 2l$ for
every $v \in T$.

We say that a $(l, w)$-spider is tame, if it is a tame
$(d, l, w)$-spider for some $d$.
\end{definition}

\begin{lemma}
\label{lem:spider}
Let $G$ be a semicomplete digraph
and suppose that $G$ has an $(l, w)$-spider,
where $w > 0$.
Then, $G$ has a tame $(l, w')$-spider for
some $w' \geq w$.
\end{lemma}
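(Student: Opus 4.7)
My plan is to prove this by induction on $n - w$, where $n = |V(G)|$. At each step the given spider is either already tame (base case) or can be replaced by a spider whose gap parameter has been at least doubled, so the induction terminates because $w \leq n - 1$.

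I would first dispose of the easy case: if the given $(d, l, w)$-spider $(T, L, R)$ is already tame, we are done with $w' = w$. Otherwise, by Definition~\ref{def:spider-tame}, some $v \in T$ has either $|L_v^{\rm tame}| < 2l$ or $|R_v^{\rm tame}| < 2l$. These two cases are symmetric, so I would focus on the first. Since $|L_v| \geq 3l$, the set $T' := L_v \setminus L_v^{\rm tame}$ of wild vertices in $L_v$ satisfies $|T'| \geq l + 1$, which is large enough to serve as the vertex set of a new spider.

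The heart of the argument is the construction of a new spider on $T'$. For each $u \in T'$, untameness gives $\wild(u) > 3l + d + w - \dout(u) + 2\pw(G)$. I would set $w_0 := d + w - \dout(u)$, which is at least $w > 0$ since $u \in L_v$ forces $\dout(u) \leq d$. Feeding $w_0$ into Lemma~\ref{lem:wild} yields sets $L_u' := \Voutle{\dout(u) - w_0} \cap \Nin(u)$ and $R_u' := \Voutge{\dout(u) + w_0} \cap \Nout(u)$, both of size at least $3l$. The cancellations $\dout(u) + w_0 = d + w$ and $\dout(u) - w_0 = 2\dout(u) - d - w \leq d - w$ give uniform bounds: every $u' \in L_u'$ satisfies $\dout(u') \leq d - w$, while every $u' \in R_u'$ satisfies $\dout(u') \geq d + w$. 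Setting $d' := d - w$ and $w' := 2w$ then makes $(T', L', R')$ a valid $(d', l, w')$-spider with $w' > w$. The inductive hypothesis applied to this new spider produces a tame $(l, w'')$-spider with $w'' \geq w' \geq w$, completing the induction.

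The main technical point to watch is exactly the uniformity of the new parameters $d'$ and $w'$ across $u \in T'$: since each $u$ gives a different threshold $w_0(u) = d + w - \dout(u)$, one might worry that no single global $(d', w')$ works. The resolution is the telescoping observation that $\dout(u) + w_0(u)$ is identically $d + w$ (pinning the right-degree lower bound) and $\dout(u) - w_0(u) = 2\dout(u) - d - w$ is uniformly at most $d - w$ (pinning the left-degree upper bound). The symmetric case $|R_v^{\rm tame}| < 2l$ is handled analogously by swapping the roles of $L$ and $R$; the corresponding computation gives $\dout(u) - w_0 = d$ and $\dout(u) + w_0 \geq d + 2w$ for $w_0 := \dout(u) - d$, and one takes $d'' := d$, $w'' := 2w$.
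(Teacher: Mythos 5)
Your proof is correct and takes essentially the same approach as the paper. The paper simply packages the iteration differently: it assumes WLOG that $w$ is maximal among all $(l, w')$-spiders in $G$ and derives a contradiction from untameness by constructing a spider with gap $2w$, whereas you run the same construction as an explicit induction on $n - w$. Your identification of the telescoping identities $\dout(u) + w_0 = d+w$ and $\dout(u) - w_0 \leq d - w$ as the crux of the uniformity issue matches the paper's key computation, and your explicit treatment of the symmetric $R$-side case (yielding $d'' = d$, $w'' = 2w$) correctly fills in a step the paper dismisses as ``similar.''
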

\begin{proof}
Suppose $G$ has a $(d, l, w)$-spider $(T, L, R)$.
We may assume that $w$ is the largest possible given $l$:
for every $(d', l, w')$-spider of $G$, we have
$w' \leq w$. Under this assumption, we show that
the spider $(T, L, R)$ is tame.
For contradiction, suppose not. 
We consider the case where there is
some $v \in T$ such that $|L_v^{\rm tame}| < 2l$;
the case where there is
some $v \in T$ such that $|R_v^{\rm tame}| < 2l$
is similar.
Fix such $v$ and 
let $U = L_v \setminus L_v^{\rm tame}$.
Since $|L_v| \geq 3l$ by the definition
of a spider, we have $|U| \geq l$.
Let $u$ be an arbitrary member of $U$ and let 
$w_u = d + w - \dout(u)$.
Since $\dout(u) \leq d$ by the definition of a spider,
we have $w_u \geq w$.

Since $u$ is not tame, we have 
$\wild(u) > 3l + d + w - \dout(u) + 2\pw(G)
= 3l + w_u + 2\pw(G)$.
Let $L'_u = \Voutle{\dout(u) - w_u} \cap \Nin(u)$ and
$R'_u = \Voutle{\dout(u) + w_u} \cap \Nout(u)$.
We apply Lemma~\ref{lem:wild} and have
\begin{eqnarray*}
|L'_u| & \geq & \wild(u) - w_u - 2\pw(G) - 1\\
& \geq & 3l
\end{eqnarray*}
and 
\begin{eqnarray*}
|R'_u| & \geq & \wild(u) - w_u - 2\pw(G) \\
& \geq & 3l.
\end{eqnarray*}
Since $\dout(u) - w_u = d + w - 2w_u \leq d - w$ and 
$\dout(u) + w_u = d + w$
holds for every $u \in U$, 
we have $L'_u \subseteq \Voutle{d - w}$ and
$R'_u \subseteq \Voutge{d + w}$ for every $u \in U$.
Therefore, $(U, L', R')$ is a $(d - w, l, 2w)$-spider, 
contradicting the choice of $w$.
\qed
\end{proof}

To continue our scenario, we invoke the following result due
to Pilipczuk.

\begin{lemma}(\cite{Pilipczuk12}, Theorem~32)
\label{lem:tangle-Pil}
There exists an algorithm, which given a semicomplete digraph $G$ and integers
$k$ and $l \geq 5k$, in time $O(k|V(G
)|^2)$ outputs one of the following:
\begin{itemize}
  \item an $(l + 2, k)$-degree tangle in $G$;
  \item a $(k + 1, k)$-matching tangle in $G$;
  \item a path decomposition of $G$ of width at most $(l + 2k)$.
\end{itemize}
\end{lemma}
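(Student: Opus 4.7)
The plan is to sort the vertices of $G$ by out-degree as $v_1, \ldots, v_n$ with $\dout(v_1) \leq \cdots \leq \dout(v_n)$ and to distinguish the three cases by a single sweep. Computing out-degrees and sorting fits in $O(|V(G)|^2)$ time, well within the stated bound. First I would slide a window of length $l + 2$ through the sorted list: if some index $i$ satisfies $\dout(v_{i+l+1}) - \dout(v_i) \leq k$, the set $\{v_i, \ldots, v_{i+l+1}\}$ is a $(\dout(v_i), l+2, k)$-degree tangle and we return it. Otherwise we get a \emph{spread} property: for every $d$, $|\Voutge{d} \cap \Voutle{d+k}| \leq l + 1$.

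Given spread, I would use the sorted permutation as a vertex-separation ordering and bound the cut $|\Nout(S_t) \setminus S_t|$ at each prefix $S_t = \{v_1, \ldots, v_t\}$. I would split this cut by out-degree into a ``near'' part (out-degree in $[\dout(v_t), \dout(v_t) + k]$, of size at most $l+1$ by spread) and a ``far'' part $F_t$ (out-degree $\geq \dout(v_t) + k + 1$). If every $F_t$ is small enough, the total cut is at most $l + 2k$ at every prefix, and Lemma~\ref{lem:pathwidth_sc} converts the induced separation chain into a path-decomposition of width at most $l + 2k$. If on the other hand some $F_t$ is large, I would attempt to extract a $(k+1, k)$-matching tangle with $d = \dout(v_t)$ by finding $k+1$ vertex-disjoint edges in $G$ from $S_t \subseteq \Voutle{d}$ to $F_t \subseteq \Voutge{d + k + 1}$.

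The main obstacle is this matching extraction: a naive Hall-type argument fails because a high-out-degree vertex in $F_t$ need not have many in-edges from $S_t$. My plan is to exploit the slack $l \geq 5k$ together with the spread property and the semicomplete identity $\dout(v) + \din(v) \geq n - 1$ to rule out small vertex covers in the bipartite graph of candidate edges, guaranteeing a matching of size $k + 1$ that can be located in $O(k|V(G)|^2)$ time via standard augmenting-path searches. Correctly tuning the threshold at which $|F_t|$ triggers the matching step, and verifying that the complementary threshold still yields a cut of width at most $l + 2k$, is the delicate book-keeping that makes the constant $l \geq 5k$ appear.
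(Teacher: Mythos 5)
The paper does not prove this lemma; it is quoted verbatim as Theorem~32 of \cite{Pilipczuk12}, so there is no internal proof to compare against, only the cited source. Your sketch does mirror that source's strategy (the cited paper is literally ``via degree orderings''): sort by out-degree, scan a window of length $l+2$ for a degree tangle, and otherwise use the resulting spread bound $|\Voutge{d'}\cap\Voutle{d'+k}|\le l+1$ to control prefix cuts. The window scan, the near/far split of $\Nout(S_t)\setminus S_t$, and the conversion of a low-width vertex-separation ordering into a path-decomposition via Lemma~\ref{lem:pathwidth_sc} are all fine and fit the $O(k|V(G)|^2)$ budget.

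The gap is exactly the step you defer, and it is the crux of the theorem, not book-keeping. Nothing in the proposal actually produces a $(k+1,k)$-matching tangle when some $F_t$ is large, and a K\H{o}nig argument at a single prefix does not suffice. If $C=C_1\cup C_2$ were a cover of the bipartite forward-edge graph on $S_t\times F_t$ with $|C|\le k$, the conclusions one gets are only that every $v\in F_t\setminus C_2$ points to all of $S_t\setminus C_1$, hence $\dout(v)\ge t-k$, and that every $u\in S_t\setminus C_1$ has $\Nout(u)$ disjoint from $\Voutge{\dout(v_t)+k+1}\setminus C_2$. None of this contradicts $\dout(u)\le\dout(v_t)$, $\dout(v)\ge\dout(v_t)+k+1$, the spread bound, or $\dout+\din\ge n-1$: in particular $t$ is not bounded below in terms of $k$, so $t-k$ is useless, and nothing forbids all forward edges from $S_t$ into $F_t$ emanating from a single vertex $u\in S_t$ with $\dout(u)\approx 2k\le\dout(v_t)$, giving a vertex cover of size~$1$ even when $|F_t|\ge 2k$. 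You cannot raise the $|F_t|$ threshold either, because the near part already uses up $l+1$ of the $l+2k$ budget. The actual proof has to track structure across a range of prefixes (equivalently, degree thresholds) rather than at one cut, which is where the running time $O(k|V(G)|^2)$ and the specific constant in $l\ge 5k$ come from. As written, the proposal is a correct outline of the method but leaves its central lemma unproved.
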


The following lemma, building on this lemma and previous lemmas,
shows that a semicomplete digraph of large pathwidth has
a tame tangle or a spider.
\begin{lemma}
\label{lem:get-tangle}
Let $K$ be a positive integer and $G$ a semicomplete
digraph with $\pw(G) \geq 128K$. 
Then, $G$ has at least one of the following:
\begin{enumerate}
  \item a tame $(46K, 18K)$-degree tangle;
  \item a $(6K, 18K)$-spider;
  \item a tame $(6K, 18K)$-matching tangle.
\end{enumerate}
\end{lemma}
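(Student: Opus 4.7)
The plan is to apply the algorithmic dichotomy of Lemma~\ref{lem:tangle-Pil} with parameters tuned so that the path-decomposition alternative would contradict $\pw(G) \geq 128K$, leaving only the degree-tangle and matching-tangle alternatives, which are then reduced to the three conclusions using Lemma~\ref{lem:degree-spider} and Lemma~\ref{lem:matching-spider}.

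Concretely, I would invoke Lemma~\ref{lem:tangle-Pil} with $k := 18K$ and $l := 92K - 2$. The hypothesis $l \geq 5k$ reduces to $2K \geq 2$, which holds for every positive integer $K$. The path-decomposition alternative would have width at most $l + 2k = 128K - 2 < 128K$, contradicting $\pw(G) \geq 128K$. Hence Lemma~\ref{lem:tangle-Pil} must return either a $(92K, 18K)$-degree tangle or an $(18K + 1, 18K)$-matching tangle, and I handle these two cases separately.

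In the degree-tangle case, I would feed the $(92K, 18K) = (2 \cdot 46K, 18K)$-degree tangle into Lemma~\ref{lem:degree-spider} with $l := 46K$, $w := 18K$. This returns either a tame $(46K, 18K)$-degree tangle, which is conclusion~1, or a $(46K, 18K)$-spider $(T, L, R)$. In the spider subcase, $|T| \geq 46K \geq 6K$ and $|L_v|, |R_v| \geq 3 \cdot 46K \geq 3 \cdot 6K$ for every $v \in T$, so the same triple (restricting $T$ to any $6K$ of its vertices if one prefers exact sizes) is a $(6K, 18K)$-spider, giving conclusion~2. In the matching-tangle case, I would first shrink the $(18K+1, 18K)$-matching tangle to an $(18K, 18K) = (3 \cdot 6K, 18K)$-matching tangle by deleting one matched pair, then apply Lemma~\ref{lem:matching-spider} with $l := 6K$, $w := 18K$; the outputs are a tame $(6K, 18K)$-matching tangle (conclusion~3) or a $(6K, 18K)$-spider (conclusion~2).

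The substantive combinatorial content already lives in the quoted lemmas, so the work here is essentially arithmetic bookkeeping. The main point of care is to choose $l$ and $k$ in Pilipczuk's lemma so that all four target constants fit simultaneously: $l + 2 = 92K$ is the double of $46K$ as required by Lemma~\ref{lem:degree-spider}; the matching tangle size $k + 1 = 18K + 1$ trims to the triple of $6K$ as required by Lemma~\ref{lem:matching-spider}; $k = 18K$ matches the width parameter in all three target conclusions; and $l + 2k = 128K - 2$ just beats the pathwidth threshold. With $l = 92K - 2$ and $k = 18K$, all four constraints are met, and no further obstacle is anticipated.
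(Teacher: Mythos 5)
Your proposal is correct and follows essentially the same route as the paper: it invokes Lemma~\ref{lem:tangle-Pil} with precisely the parameters $l = 92K - 2$, $k = 18K$, rules out the path-decomposition alternative via $l + 2k = 128K - 2$, and handles the two remaining cases with Lemmas~\ref{lem:degree-spider} and~\ref{lem:matching-spider} exactly as the authors do. The only addition is your explicit check of the hypothesis $l \geq 5k$, which the paper leaves implicit.
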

\begin{proof}
We apply Lemma~\ref{lem:tangle-Pil} to $G$ with
$l = 92K - 2$ and $k = 18K$. 
Since $G$ 
does not have a path-decomposition of width 
$l + 2k = 92K - 2 + 36K = 128K - 2$,
the algorithm finds either a
$(92K, 18K)$-degree tangle of $G$, or an
$(18K + 1, 18K)$-matching tangle of $G$.
In the first case,
by Lemma~\ref{lem:degree-spider}, $G$ has either a
tame $(46K, 18K)$-degree tangle or a $(46K, 18K)$-spider,
which certainly contains a $(6K, 18K)$-spider.
In the second case, $G$ has a $(18K, 18K)$-matching tangle 
and, hence by Lemma~\ref{lem:matching-spider},
either a tame $(6K, 18K)$-matching tangle or a
$(6K, 18K)$-spider.
\qed 
\end{proof}

\begin{lemma}
\label{lem:degree-sample}
Let $h$ be a positive integer. Then, there is
some positive integer $k_h$ such that the following holds.
Let $k \geq k_h$ be an integer and let $K = (h + 1)k$.
Let $G$ be an $h$-semicomplete digraph and
suppose a semicomplete supergraph $G'$ of $G$ with vertex set
$V(G)$ and with $\pw(G') \leq 140K$ 
has a tame $(46K, 18K)$-degree tangle.
Then $G$ has a semicomplete subgraph with
a $(21k, 10k)$-degree tangle.
\end{lemma}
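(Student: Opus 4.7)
The plan is to apply Theorem~\ref{thm:sample-indep} to the complement of $G$'s underlying undirected graph (whose maximum degree is at most $h$), with $p = 1/(2h+1)$, obtaining a random set $I$ that is independent in this complement. Every pair of vertices in $I$ is then adjacent in $G$, so $G[I]$ is a semicomplete subgraph of $G$; moreover $G[I] = G'[I]$, since each edge of $G'$ not in $G$ lies between a pair non-adjacent in $G$, and no such pair is contained in $I$. This lets us compute out-degrees in $G[I]$ entirely from $G'$-neighborhoods.

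Let $T_0$ be the given tame $(46K, 18K)$-degree tangle of $G'$ with $T_0 \subseteq \Voutge{d}(G') \cap \Voutle{d+18K}(G')$, and set $L' = \Voutle{d}(G')$. Tameness gives $|L' \setminus N^+_{G'}(v)| \leq \wild_{G'}(v) \leq 3\cdot 46K + 18K + 2\cdot 140K = 436K$; combined with $|L'| \geq d+1-280K$ from Lemma~\ref{lem:degree-interval} (using $\pw(G') \leq 140K$), one also obtains $|N^+_{G'}(v) \setminus L'| \leq 734K$ for every $v \in T_0$. Decomposing $\dout_{G[I]}(v) = |L' \cap I| - |(L' \setminus N^+_{G'}(v)) \cap I| + |(N^+_{G'}(v) \setminus L') \cap I|$ and using the identity $|N^+_{G'}(v) \setminus L'| - |L' \setminus N^+_{G'}(v)| = \dout_{G'}(v) - |L'|$, a routine computation shows that for any $v_1, v_2 \in T_0 \cap I$,
\begin{eqnarray*}
\dout_{G[I]}(v_1) - \dout_{G[I]}(v_2) = p(\dout_{G'}(v_1) - \dout_{G'}(v_2)) + E_{v_1,v_2},
\end{eqnarray*}
where $|E_{v_1,v_2}|$ is bounded by the sum of four individual deviations of sampled cardinalities $|S \cap I|$ from their expectations $p|S|$, for sets $S$ of size at most $734K$.

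By Theorem~\ref{thm:sample-indep}, each such deviation is at most $t$ with failure probability $\exp(-\Omega(t^2/K))$; choosing $t = k/8$ and $k_h$ a suitable polynomial in $h$, a union bound over the $O(|T_0|)$ relevant events forces $|E_{v_1,v_2}| \leq k/2$ uniformly with positive probability. In this favorable event I sort $T_0$ by $\dout_{G'}$ and select the consecutive window $T_1$ of $m = 22k(2h+1)$ vertices with smallest $\dout_{G'}$-span; this is feasible since $22(2h+1) \leq 46(h+1)$ for all $h \geq 0$, and averaging the $46K$ sorted values over a range of $18K$ gives a $\dout_{G'}$-span of at most $18K\cdot m/(46K) = 18\cdot 22(2h+1)k/46 < 9k(2h+1)$. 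Hence the $\dout_{G[I]}$-span of $T_1 \cap I$ is at most $p\cdot 9k(2h+1) + k/2 < 10k$, and a final application of Theorem~\ref{thm:sample-indep} to $T_1$ itself gives $|T_1 \cap I| \geq p|T_1| - k = 21k$ with positive probability, producing a $(21k, 10k)$-degree tangle in $G[I]$. The main obstacle is balancing the two concentration scales: $t$ must be a small constant fraction of $k$ to keep the $G[I]$-width below $10k$, yet the exponent $t^2/K = t^2/((h+1)k)$ must be large enough (roughly $\log((h+1)k)$) to survive the union bound over all $|T_0|$ events, which jointly force $k_h$ to grow polynomially (essentially linearly up to log factors) in $h$.
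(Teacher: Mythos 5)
There is a gap in the windowing step. You claim that among the $46K$ sorted values $\dout_{G'}(v)$, $v \in T_0$, lying in an interval of length $18K$, some window of $m = 22k(2h+1)$ consecutive values has span at most $18K \cdot m/(46K)$. This averaging bound is false once $m > 23K$: if the $46K$ values split into two equal clusters at the two endpoints of the interval, every window of more than $23K$ consecutive values contains members of both clusters and therefore has span exactly $18K$, which exceeds $18Km/(46K)$ since $m < 46K$. Since $m = 22k(2h+1) > 23(h+1)k = 23K$ for every $h \geq 1$ (equivalently $21h > 1$), the bound never applies; and for $h \leq 4$ even the trivial span bound of $18K$ already gives $p \cdot 18K \geq 10k$, so no $(21k, 10k)$-degree tangle can be extracted with your constants.

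The root cause is that you used $p = 1/(2h+1)$ from the statement of Theorem~\ref{thm:sample-indep}, but that stated probability is a typo: the sampling procedure in Section~\ref{sec:sample-indep} achieves exactly $\prob(v \in I) = \frac{1}{2(d+1)}$, as the corollary there makes explicit, so the correct value is $p = \frac{1}{2(h+1)}$. With this $p$ the arithmetic is clean --- $18pK = 9k$ and $46pK = 23k$ --- so the full set $T_0 \cap I$ already has size at least $21k$ and $\dout_{G[I]}$-span at most $10k$ with high probability, and no windowing is needed; this is exactly what the paper's proof does. Apart from the windowing detour, the remainder of your argument (sample $I$ from the complement, observe $G[I]=G'[I]$ is semicomplete, decompose $\dout_{G[I]}(v)$ via $L'$, $X_v$, $Y_v$, cancel the common term $|L'\cap I|$ when differencing, union-bound over $T_0$) matches the paper's.
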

\begin{proof}
Let $T$ be a tame $(46K, 18K)$-degree tangle of $G'$. 
Let $\hat{G}$ denote the
complement of the undirected graph underlying $G$. The maximum degree of
$\hat{G}$ is $h$ or smaller. We apply Theorem~\ref{thm:sample-indep}
to $\hat{G}$ to obtain a random independent set $I$ of
$\hat{G}$.  The probability of each vertex being in $I$ is 
$p = \frac{1}{2(h + 1)}$. For each $S \subseteq V(G)$,
the expectation of $|S \cap I|$ is $p|S|$ and
the probability of deviations is bounded as in 
Theorem~\ref{thm:sample-indep}.

That $I$ is independent in $\hat{G}$ implies
that $G[I]$, which equals $G'[I]$, is semicomplete. 
We show that $T \cap I$ contains a
$(21k, 10k)$-tangle of $H = G[I]$ with high probability.

We call the event $|T \cap I| < 21k$ {\em the bad event
on $|T \cap I|$}.

Since $\ex[|T \cap I|] = 46pK = 23k$, 
the probability of this bad event is at most 
\begin{eqnarray*}
\prob\left(23k - |T \cap I| > 2k\right) 
& \leq &\exp\left(-\frac{4k^2}{9\cdot 46K}\right)\\
& = & \exp\left(-\frac{2k}{207(h + 1)}\right)
\end{eqnarray*}
by Theorem~\ref{thm:sample-indep}.

Let $d$ be such that $T \subseteq \Voutge{d}(G') \cap \Voutle{d + 18K}(G')$.
For each $v \in T \cap I$, we evaluate $d^+_H(v)$ as follows.
\begin{eqnarray*}
d^+_H(v) & = & |\Nout_{G'}(v) \cap I|\\
& = & |\Voutle{d}(G') \cap I| -
|(\Voutle{d}(G') \setminus \Nout_{G'}(v)) \cap I| +
|(\Voutge{d + 1}(G') \cap \Nout_{G'}(v)) \cap I|.
\end{eqnarray*} 

The deviation of the first term is common
for all $v$:
$\varDelta = |\Voutle{d}(G') \cap I| - 
\ex[|\Voutle{d}(G') \cap I|] = 
|\Voutle{d}(G') \cap I| - p|\Voutle{d}(G')|$.
Therefore, we are concerned with the deviations of other
terms depending on $v$.

Let $X_v = \Voutle{d}(G') \setminus \Nout_{G'}(v)$ and
$Y_v = \Voutge{d + 1}(G') \cap \Nout_{G'}(v)$
for each $v \in T$. As the $(46K, 18K)$-degree tangle 
$T$ of $G'$ is tame,
we have $\wild(v) \leq 3\cdot 46K + 18K + 2\pw(G') 
\leq 436K$ and hence 
\begin{eqnarray*}
|X_v| & \leq & |\Voutle{d^+_{G'}(v)}(G') \setminus \Nout_{G'}(v)| \\
& = & \wild(v) \\
& \leq & 436K.
\end{eqnarray*}
Since
\begin{eqnarray}
|\Voutle{d}(G')| & = & n - |\Voutge{d + 1}(G')| \nonumber\\
& \geq & n - (n - d - 1 + 2\pw(G')) \nonumber\\
& \geq & d + 1 - 280K \label{eqn:XvYv1}
\end{eqnarray}
by Lemma~\ref{lem:degree-interval} and hence 
\begin{eqnarray}
|\Voutle{d}(G')| & \geq & d^+_{G'}(v) - 298K + 1, \label{eqn:XvYv}
\end{eqnarray}
we have 
\begin{eqnarray*}
|Y_v| & =  & d^+_{G'}(v) - (|\Voutle{d}(G')| - |X_v|) \\
& \leq & 298K + 436K\\
& \leq & 734K.
\end{eqnarray*}

Call the event 
$||X_v \cap I| - p|X_v|| > \frac{k}{4}$
{\em the bad event on $X_v$} and the 
event $||Y_v \cap I| - p|Y_v|| > \frac{k}{4}$
{\em the bad event on $Y_v$}.
By Theorem~\ref{thm:sample-indep}, 
the probability of the bad event on $X_v$ is smaller than
\begin{eqnarray*}
2\exp\left(-\frac{k^2}{4^2 \cdot 9|X_v|}\right)
& \leq & 2\exp\left(-\frac{k}{62784(h + 1)}\right)
\end{eqnarray*}
and, similarly, 
the probability of the bad event on $Y_v$ is smaller than 
\begin{eqnarray*}
2\exp\left(-\frac{k}{105696(h + 1)}\right).
\end{eqnarray*}
Therefore, setting say, $k_h = 10^7(h + 1)^2$,
it follows from our assumption $k \geq k_h$ that,
with probability close to 1, none of the
bad events listed above occurs.

Assume none of those bad events occur.
Recall that $\varDelta = |\Voutle{d}(G') \cap I| -
p|\Voutle{d}(G')|$.
Then, for each $v \in T \cap I$, we have
\begin{eqnarray*}
d^+_{H}(v) & = & |\Voutle{d}(G') \cap I| - |X_v \cap I|
+ |Y_v \cap I| \\
& \leq & p|\Voutle{d}(G')| + \varDelta - 
p|X_v| + \frac{k}{4} + 
p|Y_v| + \frac{k}{4} \\
& \leq & p d^+_{G'}(v) + \varDelta + \frac{k}{2}
\end{eqnarray*}
and, similarly, 
\begin{eqnarray*}
d^+_H(v) & \geq & 
p d^+_{G'}(v) + \varDelta - \frac{k}{2}.
\end{eqnarray*}
Therefore, for each $v \in T \cap I$, we have
\begin{eqnarray*}
pd + \varDelta - \frac{k}{2} 
\leq  d^+_H(v)
& \leq & p(d + 18K) + \varDelta + \frac{k}{2} \\ 
& = & pd + \varDelta + \frac{19k}{2}.
\end{eqnarray*}
Therefore, $T \cap I$ contains a $(21k, 10k)$-degree
tangle of $H$.
\qed
\end{proof}

\begin{lemma}
\label{lem:spider-sample}
Let $h$ be a positive integer. Then, there is
some positive integer $k_h$ such that the following holds.
Let $k \geq k_h$ be an integer and let $K = (h + 1)k$.
Let $G$ be an $h$-semicomplete digraph and
suppose a semicomplete supergraph $G'$ of $G$ with vertex set
$V(G)$ and with $\pw(G') \leq 140K$
has a $(6K, 18K)$-spider.
Then $G$ has a semicomplete subgraph with
a $(k, k)$-spider.
\end{lemma}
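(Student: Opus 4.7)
The plan mirrors the proof of Lemma~\ref{lem:degree-sample}, substituting a spider obstacle for a degree tangle. First, apply Lemma~\ref{lem:spider} to upgrade the given $(6K, 18K)$-spider of $G'$ to a tame $(d, 6K, w')$-spider $(T, L, R)$ with some $w' \geq 18K$; a larger out-degree gap only helps. Tameness guarantees $|L_v^{\rm tame}|, |R_v^{\rm tame}| \geq 12K$ for each $v \in T$, so after trimming we may assume $|L_v| = |R_v| = 12K$ and every vertex of $L_v \cup R_v$ is tame. As in Lemma~\ref{lem:degree-sample}, apply Theorem~\ref{thm:sample-indep} to the complement $\hat{G}$ of the underlying graph of $G$ with $p = 1/(2(h+1))$ to obtain a random independent set $I$; then $H := G[I] = G'[I]$ is semicomplete, and the target $(k, k)$-spider of $H$ will be carved out from the intersections $T \cap I$ together with $L_v \cap I$ and $R_v \cap I$ for $v \in T \cap I$.

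Two families of concentration events need to hold with high probability. The size events $|T \cap I| \geq k$, $|L_v \cap I| \geq 3k$, and $|R_v \cap I| \geq 3k$ (for each $v \in T$) have expectations $3k, 6k, 6k$ respectively; each fails with probability $\exp(-\Omega(k/(h+1)))$ by Theorem~\ref{thm:sample-indep} since the underlying sets have size $O(K)$. The per-vertex degree events require that, for each $v \in T$ and each tame $u \in L_v \cup R_v$,
\[
d^+_H(u) = |\Voutle{d}(G') \cap I| - |X_u \cap I| + |Y_u \cap I|
\]
deviates from $p\,\dout_{G'}(u) + \Delta$ by at most $k/4$, where $\Delta = |\Voutle{d}(G') \cap I| - p|\Voutle{d}(G')|$ is the common offset, $X_u = \Voutle{d}(G') \setminus \Nout_{G'}(u)$, and $Y_u = \Voutge{d+1}(G') \cap \Nout_{G'}(u)$. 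Since $\Delta$ cancels when comparing $L$- to $R$-out-degrees, setting $d_H = \lceil pd + \Delta + 9k/2 \rceil$ yields $d^+_H(u) \leq d_H$ for every $u \in L_v \cap I$ and $d^+_H(u) \geq d_H + k$ for every $u \in R_v \cap I$: the raw expected gap is $pw' \geq 9k$, which comfortably dominates the two-sided deviation budget of at most $k/2$.

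The main technical obstacle is obtaining $O(K)$ bounds on $|X_u|$ and $|Y_u|$ for each tame $u$, which is what drives the per-vertex deviation probabilities to $\exp(-\Omega(k/(h+1)))$. For $u \in R_v$, since $\dout(u) \geq d + 1$ implies $\Voutle{d} \subseteq \Voutle{\dout(u)}$, we get $|X_u| \leq \wild(u)$; the spider-tameness inequality $\wild(u) \leq 3l + \dout(u) - d + 2\pw(G')$ must then be combined with the trivial upper bound $|X_u| \leq |\Voutle{d}(G')|$ and the lower bound $|\Voutle{d}(G')| \geq d + 1 - 2\pw(G')$ from Lemma~\ref{lem:degree-interval} (possibly with an additional pigeonhole step restricting $R_v$ to vertices whose out-degrees lie within $O(K)$ of $d + w'$) to keep $|X_u|$ and $|Y_u|$ linear in $K$. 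A symmetric analysis handles $u \in L_v$. Once these bounds are established, a union bound over the $O(K^2) = O((h+1)^2 k^2)$ concentration events is absorbed for $k \geq k_h$ with $k_h$ a sufficiently large polynomial in $h$.
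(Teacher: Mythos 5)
You correctly identify the high-level scaffolding --- tame the spider via Lemma~\ref{lem:spider}, sample with Theorem~\ref{thm:sample-indep}, and carve a $(k,k)$-spider out of the intersections $T\cap I$, $L_v^{\rm tame}\cap I$, $R_v^{\rm tame}\cap I$. The concentration analysis, however, has a genuine gap at exactly the point you flag as ``the main technical obstacle.'' Fixing the per-vertex deviation budget at $k/4$ forces you to bound $|X_u|$ and $|Y_u|$ by $O(K)$, but spider-tameness does not give that. For a tame $u \in L_v$ the wildness bound is $\wild(u) \leq 3l + (d - \doutG(u)) + w' + 2\pw(G')$, and the parameter $w'$ produced by Lemma~\ref{lem:spider} is only known to satisfy $w' \geq 18K$; it can be arbitrarily large, even $\Theta(n)$. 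Your pigeonhole step restricting $R_v$ to a narrow out-degree band does not remove this: for $L_v$ the $w'$ term is always present regardless of $\doutG(u)$, and for $R_v$ the tameness bound involves $\doutG(u) - d \geq w'$, so even out-degrees ``close to $d+w'$'' leave $\wild(u)$, hence $|X_u|$ and $|Y_u|$, of order $w'$. Once $|X_u|$ can be $\Theta(w')$, the typical deviation of $|X_u \cap I|$ is $\Theta(\sqrt{w'})$; for $w'$ sufficiently larger than $k^2$ this exceeds $k/4$, and the failure probability of your fixed-budget event degrades to a constant rather than $\exp(-\Omega(k/h))$.

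The paper's resolution, which a fixed $k/4$ budget misses, is to let the deviation budget for $u$ scale with the same quantities that inflate $|X_u|$ and $|Y_u|$: roughly $\max\{pw'/6,\,\tfrac{p}{2}(d-\doutG(u))\}$ for $L$-vertices and $\tfrac{p}{6}(\doutG(u)-d)$ for $R$-vertices. Then the Azuma-type exponent is $\Omega\bigl(p^2 (\text{budget})^2/|X_u|\bigr) = \Omega(p^2 w') = \Omega(k/h)$ uniformly, and the larger per-vertex deviations are still dominated by the proportionally larger gap $pw'$ between $L$- and $R$-degrees, leaving a net gap of at least $pw'/3 \geq 3k$. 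This is the step the paper explicitly calls out as the ``only essential difference'' from the degree-tangle case --- a vertex with large wildness has its out-degree correspondingly far from the forbidden band, so a large absolute deviation is affordable. Without this scaling, the proof does not go through.
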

\begin{proof}
Since $G'$ has a $(6K, 18K)$-spider, by Lemma~\ref{lem:spider}, 
it has a tame $(6K, w)$-spider for some $w \geq 18K$.
The approach is similar to the proof of Lemma~\ref{lem:degree-sample}.
The only essential difference is that
the wildness of a vertex in the spider
may not be $O(K)$ and the deviation of
its out-degree in the sampled subgraph may
be large. This is not
an essential problem, however, since such a vertex
with large wildness has, by the definition of
tame spiders, the original out-degree 
far away from the range to be avoided
and therefore a large deviation is affordable.

Let $(T, L, R)$ be a tame $(d, 6K, w)$-spider of $G'$,
where $w \geq 18K$.
As in the proof of Lemma~\ref{lem:degree-sample},
let $\hat{G}$ be the undirected graph underlying $G$, 
$p = \frac{1}{2(h + 1)}$, $I$ the set of independent
vertices of $\hat{G}$ sampled with probability $p$
applying Theorem~\ref{thm:sample-indep}, and $H = G'[I] = G[I]$.
Let $T' = T \cap I$ and, 
for each $v \in T'$, let
$L'_v = L_v^{\rm tame} \cap I$ and 
$R'_v = R_v^{\rm tame} \cap I$.
Our goal is to show that 
$(T', L', R')$ is a $(k, k)$-spider of
$H = G'[I]$ with 
high probability.  

For this to happen, we need to have
$|T'| \geq k$ and, for some $d'$
and for each $v \in T'$, 
\begin{enumerate}
     \item $L_v' \subseteq N^-_H(v)$, 
     \item $|L_v'| \geq 3k$,
     \item $d^+_H(u) < d'$ for each $u \in L_v'$, 
     \item $R_v' \subseteq N^+_H(v)$,  
     \item $|R_v'| \geq 3k$, and  
     \item $d^+_H(u) > d' + k$ for each $u \in R_v'$.
\end{enumerate}

We list
``bad'' events below that could prevent
the above conditions from being satisfied.
We show that the probability of each of those events is
$\exp(-\Omega(\frac{k}{h}))$ and, since
the number of those events is obviously $O(kh)$,
the probability is close to 1 that none of these
events occurs under the assumption $k \geq k_h$
if $k_h$ is large enough.
We also confirm that if   
none of those events occurs then 
the above conditions for $(T', L', R')$
being a $(k, k)$-spider are all satisfied.

Since most of the analysis below is
similar to the one we did for Lemma~\ref{lem:degree-sample},
we omit some details, using $\Omega$ notation rather
than giving explicit constants in probability bounds, 
and emphasize what is different.

First consider 
the event that $|T \cap I| < k$.
Since $|T| \geq 6K$ and hence $\ex[|T \cap I|] \geq  
3k$, the probability of this event
is $\exp(-\Omega(\frac{k}{h}))$.
Next consider, for each $v \in T$, the event that
$|L_v \cap I| < 3k$ or $|R_v \cap I| < 3k$.
Since $|L_v| \geq 9K$ and $|R_v| \geq 9K$,
the probability of this event is also
$\exp(-\Omega(\frac{k}{h}))$.
If none of these events occurs, all conditions
enumerated above are satisfied but those on the out-degrees on
vertices in $\bigcup_{u \in T'} L_u'$
and in $\bigcup_{u \in T'} R_u'$.

We proceed to events that may cause intolerable
deviations of the out-degrees of those vertices.

For each $v \in \bigcup_{u \in T} 
(L_u^{\rm tame} \cup R_u^{\rm tame})$,
let $X_v = \Voutle{d}(G') \setminus N^+_{G'}(v)$ and
$Y_v = \Voutge{d + 1}(G') \cap N^+_{G'}(v)$.
As in the proof Lemma~\ref{lem:degree-sample}, we evaluate $d^+_H(v)$ 
(assuming $v \in I$), as follows:
\begin{eqnarray*}
d^+_H(v) & = & |N^+_{G'}(v) \cap I|\\
& = & |\Voutle{d}(G') \cap I| - |X_v \cap I|
+ |Y_v \cap I|.
\end{eqnarray*} 
The deviation of the first term is common
for all $v$:
$\varDelta = |\Voutle{d}(G') \cap I| - 
\ex[|\Voutle{d}(G') \cap I|] = 
|\Voutle{d}(G') \cap I| - p|\Voutle{d}(G')|$.

Therefore, our bad events concern about
the deviations of $|X_v \cap I|$ and 
of $|Y_v \cap I|$ from their expectations.

First consider $v \in \bigcup_{u \in T}L_u^{\rm tame}$. 
From the tameness condition and by 
Lemma~\ref{lem:degree-interval},
we have 
\begin{eqnarray*}
|X_v| & = & |\Voutle{d}(G') \setminus N^+_{G'}(v)| \\
& \leq & |\Voutge{d^+_{G'}(v) + 1}(G') \cap
\Voutle{d}(G')| + 
|\Voutle{d^+_{G'}(v)}(G') \setminus N^+_{G'}(v)| \\
& \leq & d - d^+_{G'}(v) + 2\pw(G') + \wild(v) \\
& \leq & 3(6K) + 2(d - d^+_{G'}(v)) + w + 4\pw(G')\\ 
& \leq & 578K + w + 2(d - d^+_{G'}(v)), 
\end{eqnarray*}
and using (\ref{eqn:XvYv}), 
\begin{eqnarray*}
|Y_v| & =  & d^+_{G'}(v) - (|\Voutle{d}(G')| - |X_v|) \\
& \leq & 298K + |X_v|\\
& \leq & 876K + w + 2(d - d^+_{G'}(v)).
\end{eqnarray*}

Note that neither $w$ nor $d - d^+_{G'}(v)$ is necessarily $O(K)$.
Our bad events on $X_v$ and $Y_v$ here are 
that $|X_v \cap I| < p|X_v| - \max\{\frac{pw}{6}, 
\frac{p}{2}(d - d^+_{G'}(v))\}$ and that
$|Y_v \cap I| > p|Y_v| + \max\{\frac{pw}{6}, 
\frac{p}{2}(d - d^+_{G'}(v))\}$ respectively.

If $\frac{w}{6} \geq \frac{1}{2}(d - d^+_{G'})$, then,
noting that $w \geq 18K$ and hence $|X_v| = O(w)$ and 
$|Y_v| = O(w)$, the probability 
of each of these events is 
\begin{eqnarray*}
\exp\left(-\Omega\left(\frac{p^2 w^2}{|X_v|}\right)\right)
& = &  \exp\left(-\Omega\left(\frac{p^2 w^2}{w}\right)\right)\\
& = & \exp\left(-\Omega\left(p^2 K\right)\right)\\
& = & \exp\left(-\Omega\left(\frac{k}{h}\right)\right).
\end{eqnarray*}
The other case is similar and the probability of
each of these events is $\exp(-\Omega(\frac{k}{h}))$ in either case.
We conclude that, with probability close to 1, none of the
above bad events occurs.

We analyze the out-degree of each vertex 
$v \in \bigcup_{u \in T'} L_u'$
assuming that the bad event on neither $X_v$ nor
$Y_v$ occurs. We have
\begin{eqnarray*}
d^+_H(v) & = & |\Voutle{d}(G') \cap I| - |X_v \cap I|
+ |Y_v \cap I|\\
& = & p(|\Voutle{d}(G')| -|X_v| + |Y_v|) + \varDelta
+ (p|X_v| - |X_v \cap I|)
+ (|Y_v \cap I| - p|Y_v|)\\
& = & p d^+_{G'}(v) + \varDelta +
(p|X_v| - |X_v \cap I|) + 
(|Y_v \cap I| - p|Y_v|).
\end{eqnarray*} 
The sum of the last two terms is
at most $2\max\{\frac{pw}{6}, \frac{p}{2}(d - d^+_{G'}(v))\}
=  \max\{\frac{pw}{3}, p(d - d^+_{G'}(v)) \leq 
\frac{pw}{3} + p(d - d^+_{G'}(v))$.
Therefore, we have 
\begin{eqnarray}
\label{eqn:upper_left}
d^+_H(v) & \leq & pd + \varDelta + \frac{pw}{3}
\end{eqnarray}
for each $v \in \bigcup_{u \in T'} L_u'$.

Next consider a vertex $v \in \bigcup_{u \in T}R_u^{\rm tame}$.
From the tameness condition, 
we have 
\begin{eqnarray*}
|X_v| & = & |\Voutle{d}(G') \setminus N^+_{G'}(v)| \\
& \leq & \wild(v) \\
& \leq & 3\cdot 6K + d^+_{G'}(v) - d  +
2\pw(G')\\
& \leq & 298K + d^+_{G'}(v) - d, 
\end{eqnarray*}
and using (\ref{eqn:XvYv1}), 
\begin{eqnarray*}
|Y_v| & =  & d^+_{G'}(v) - (|\Voutle{d}(G')| - |X_v|) \\
& \leq & 280K + d^+_{G'}(v) - d + |X_v|\\
& \leq & 578K + 2(d^+_{G'}(v) - d).
\end{eqnarray*}

Our bad events on $X_v$ and $Y_v$ here are 
that $|X_v \cap I| > p|X_v| + \frac{p}{6}(d^+_{G'}(v) - d)$ and that
$|Y_v \cap I| < p|Y_v| - \frac{p}{6}(d^+_{G'}(v) - d)$ respectively.
Since $d^+_{G'}(v) - d \geq w \geq 18K$, 
the probability of each of these bad events is $\exp(-\Omega(\frac{k}{h}))$
and therefore, with probability close to 1, none of these bad events occurs
for any $v \in \bigcup_{u \in T}R_u^{\rm tame}$.

We analyze the out-degree of each vertex 
$v \in \bigcup_{u \in T'} R_u'$ assuming that none of the bad events occurs.
We have 
\begin{eqnarray*}
d^+_H(v) & = & 
pd^+_{G'}(v) + \varDelta
- (|X_v \cap I| - p|X_v|) - (p|Y_v| - |Y_v \cap I|)
\end{eqnarray*} 
as before and the sum of the last two terms, neglecting signs,
is at most 
$\frac{2p}{6}(d^+_{G'}(v) - d) =  \frac{p}{3}(d^+_{G'}(v) - d)$.
Therefore, we have 
\begin{eqnarray}
d^+_H(v) & \geq & pd^+_{G'}(v) + \varDelta - \frac{p}{3}(d^+_{G'}(v) -
d)\nonumber\\
& \geq & pd + \varDelta  + \frac{2p}{3}(d^+_{G'}(v) - d)\nonumber\\
& \geq & pd + \varDelta + \frac{2p}{3}w
\label{eqn:lower_right}
\end{eqnarray}
for each $v \in \bigcup_{u \in T'} R_u'$.
From (\ref{eqn:upper_left}) and (\ref{eqn:lower_right}),
we have 
\begin{eqnarray*}
\min\{d^+_H(v) \mid v \in \bigcup_{u \in T'} R_u'\} - 
\max\{d^+_H(v) \mid v \in \bigcup_{u \in T'} L_u'\} 
& \geq & \frac{pw}{3}\\
& \geq & 6pK\\
& = & 3k.
\end{eqnarray*}
Therefore, $(T', L', R')$ is a 
$(d', k, k)$-spider of $H$ for some $d'$.
\qed
\end{proof}

\begin{lemma}
\label{lem:matching-sample}
Let $h$ be a positive integer. Then, there is
some positive integer $k_h$ such that the following holds.
Let $k \geq k_h$ be an integer and let $K = (h + 1)k$.
Let $G$ be an $h$-semicomplete digraph and
suppose a semicomplete supergraph $G'$ of $G$ with vertex set
$V(G)$ and $\pw(G') \leq 140K$ has a tame 
$(6K, 18K)$-matching tangle $(T_1, T_2)$.  Suppose moreover
that the matching bijection $\phi$ of this tangle
is such that the edge $(v, \phi(v))$ of $G'$ for each $v \in T_1$
is in fact an edge of $G$.

Then $G$ has a semicomplete subgraph that has
a $(k, k)$-matching tangle.
\end{lemma}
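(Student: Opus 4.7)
The proof follows the template of Lemma~\ref{lem:spider-sample}, modified to track matched pairs rather than individual spider legs. The plan is to apply Theorem~\ref{thm:sample-indep} to the complement $\hat{G}$ of the undirected graph underlying $G$, whose maximum degree is at most $h$, using $p = 1/(2(h+1))$, obtaining a random independent set $I$; then $H := G[I] = G'[I]$ is semicomplete. The goal is to exhibit a $(d', k, k)$-matching tangle of $H$ using a subset of the original matching $\phi$.

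The first main step is to show that enough matching edges survive sampling: for each $v \in T_1$, the directed edge $(v, \phi(v)) \in E(G)$ makes $v$ and $\phi(v)$ non-adjacent in $\hat{G}$, so both endpoints may simultaneously belong to $I$. Using the concentration bounds of Theorem~\ref{thm:sample-indep} applied to $T_1$, $T_2$, and $T_1 \cup T_2$, combined with this non-adjacency, I argue that $M' := \{v \in T_1 \cap I : \phi(v) \in I\}$ has cardinality at least $2k$ with probability close to $1$ for $k \geq k_h$ sufficiently large. This step depends on the joint survival probability of matched pairs in the sampling procedure underlying Theorem~\ref{thm:sample-indep} being close to the marginal probability; I expect this to follow either from a direct analysis of the sampling scheme, or from a conditional concentration argument exploiting the matching structure.

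The second step recycles the out-degree analysis of Lemma~\ref{lem:spider-sample} almost verbatim. Let $d$ be such that $T_1 \subseteq \Voutle{d}(G')$ and $T_2 \subseteq \Voutge{d + 18K + 1}(G')$. For each $v \in (T_1 \cup T_2) \cap I$, writing $d^+_H(v) = |\Voutle{d}(G') \cap I| - |X_v \cap I| + |Y_v \cap I|$ with $X_v = \Voutle{d}(G') \setminus N^+_{G'}(v)$ and $Y_v = \Voutge{d+1}(G') \cap N^+_{G'}(v)$, the common deviation $\varDelta := |\Voutle{d}(G') \cap I| - p|\Voutle{d}(G')|$ cancels across vertices. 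The tameness condition on the matching tangle together with Lemma~\ref{lem:wild} bounds $|X_v|$ and $|Y_v|$ by $O(K) + O(|d - d^+_{G'}(v)|)$ for $v \in T_1$ and by $O(K) + O(|d^+_{G'}(v) - (d + 18K)|)$ for $v \in T_2$. Applying Theorem~\ref{thm:sample-indep} with deviations scaled proportionally to these sizes, a polynomial-size union bound then gives, with high probability, $d^+_H(v) \leq pd + \varDelta + O(k)$ for all $v \in M' \cap T_1$ and $d^+_H(\phi(v)) \geq p(d + 18K) + \varDelta - O(k)$ for all $\phi(v) \in M' \cap T_2$.

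The final step picks any $T_1' \subseteq M'$ with $|T_1'| = k$ and sets $T_2' := \phi(T_1')$. Since $p \cdot 18K = 9k$, the gap between $\max\{d^+_H(v) : v \in T_1'\}$ and $\min\{d^+_H(v) : v \in T_2'\}$ is at least $9k - O(k) \geq k + 1$ for $k \geq k_h$ large enough; any $d'$ in this gap makes $(T_1', T_2')$ a $(d', k, k)$-matching tangle of $H$. The main obstacle is the first step: unlike the spider case where only individual vertices must survive, here matched pairs must survive jointly, so the argument must leverage both the non-adjacency of matched endpoints in $\hat{G}$ and finer properties of the sampling procedure in Theorem~\ref{thm:sample-indep} than are captured by the bare concentration of individual vertex counts.
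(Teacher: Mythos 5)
Your step one---showing that at least $2k$ matched pairs survive jointly---is a genuine gap, and you essentially admit it yourself when you say it ``depends on the joint survival probability of matched pairs\ldots being close to the marginal probability.'' Theorem~\ref{thm:sample-indep} gives only marginal probabilities $\prob(v\in I)=p$ and concentration of $|S\cap I|$ for each fixed $S$; it says nothing about the joint distribution of $\{v\in I\}$ and $\{\phi(v)\in I\}$. The non-adjacency of $v$ and $\phi(v)$ in $\hat{G}$ merely means it is \emph{possible} for both to be sampled; it does not make the events nearly independent, and indeed the concentration of $|T_1\cap I|$, $|T_2\cap I|$, and $|(T_1\cup T_2)\cap I|$ around $3k$, $3k$, $6k$ is entirely consistent with $M'=\emptyset$ (e.g., the surviving vertices of $T_1$ and $T_2$ could all be matched to non-surviving partners). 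So there is no union-bound-style argument that forces $|M'|\geq 2k$ from the statement of Theorem~\ref{thm:sample-indep} alone.

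The paper closes this gap structurally rather than probabilistically: it does not sample from $\hat{G}$, but from the graph $\tilde{G}$ obtained from $\hat{G}$ by contracting each doubleton $\{v,\phi(v)\}$, $v\in T_1$, into a single vertex $t_v$. Sampling the contracted vertex $t_v$ then injects \emph{both} endpoints $v$ and $\phi(v)$ into $I$, so the survival of a matched pair is a single-vertex event with the usual marginal probability, and the set $T=\{t_v\}$ of contracted vertices has $|T\cap I'|$ concentrated by Theorem~\ref{thm:sample-indep} directly. The price is that $\tilde{G}$ may have maximum degree up to $2h$, so the sampling probability drops to $1/(2(2h+1))$, but this only changes constants. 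Your second and third steps (the out-degree analysis via $X_v$, $Y_v$ and the choice of $T_1',T_2'$) track the paper's argument closely and would go through once the contraction idea is used in place of your first step.
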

\begin{proof}

Let $\hat{G}$ be
the complement of the undirected graph underlying
$G$.  Let $\tilde{G}$ be obtained from $\hat{G}$ by 
contracting the doubleton $\{v, \phi(v)\}$ into
a vertex, say $t_v$, for each $v \in T_1$. 
Let $T = \{t_v \mid v \in T_1\}$.
Note that the maximum degree of $\tilde{G}$ is
$2h$ or smaller. Similarly to Lemma~\ref{lem:spider-sample}, 
we use Theorem~\ref{thm:sample-indep} to obtain
an independent set $I'$ of $\tilde{G}$ where
the probability of each $v \in V(\tilde{G})$
belonging to $I'$ is $\frac{1}{2(2h + 1)}$. 
Let $H = G[I]$ where
$I = 
((V(\hat{G}') \setminus T) \cap I')
\cup \{v , \phi(v) \mid v \in T_1, t_v \in I'\}$.
As $I'$ is independent in $\hat{G}'$, 
$I$ is independent in $\hat{G}$ and hence 
$G'[I] = G[I]$ is semicomplete.
By an analysis similar to the one in Lemma~\ref{lem:spider-sample}, 
setting $k_h$ large enough, 
we have 
$|T \cap I| \geq k$ and
$\min_{v \in T_2 \cap I} d^+_H(v) - 
\max_{v \in T_1 \cap I} d^+_H(v) \geq k$ with
probability close to 1.
When this happens, $(T_1 \cap I, T_2 \cap I)$ contains
a $(k, k)$-matching tangle of $H$.
\qed
\end{proof}

We are now ready to prove Theorem~\ref{thm:comb}.
Fix positive integer $h$. Let $k_h$ be a constant large
enough as required in Lemmas~\ref{lem:degree-sample}, \ref{lem:spider-sample},
and \ref{lem:matching-sample}.
We set $f(k, h) = 128(h + 1)k$ for $k \geq k_h$ and
$f(k, h) = f(k_h, h)$ for $k < k_h$.

Let $G$ be an $h$-semicomplete
digraph of pathwidth at least $f(k, h)$.
In the following proof that $G$ contains
a semicomplete subgraph of pathwidth at least $k$,
we assume $k \geq k_h$; otherwise we would
prove that $G$ contains a semicomplete subgraph
of pathwidth at least $k_h \geq k$.
We set $K = (h + 1)k$ for readability.

List the vertices of $G$ as $v_1$, \ldots, $v_n$,
in the non-decreasing order of out-degrees. Let $G'$ be
the semicomplete digraph obtained from $G$ by adding
edge $(v_i, v_j)$ for each pair $i > j$ such that
neither $(v_i, v_j)$ nor $(v_j, v_i)$ is an edge of $G$.
By our assumption, $\pw(G') \geq \pw(G)$ is at least $128K$.
We assume below that $\pw(G') \leq 140K$;
if this assumption does not hold, we choose $k' \geq k$ such that
$128(h + 1)k' \leq \pw(G') \leq 140(h + 1)k'$ and prove that
$G$ has a semicomplete subgraph of pathwidth $\geq k'$.

Applying Lemma~\ref{lem:get-tangle}, we obtain 
a tame $(46K, 18K)$-degree tangle, a tame
$(6K, w)$-spider for some $w \geq 18K$, or a tame $(6K, 18K)$-matching
tangle of $G'$.  

If $G'$ has a tame $(46K, 18K)$-degree tangle, then 
$G$ has a semicomplete subgraph that 
contains $(21k, 10k)$-degree tangle, by
Lemma~\ref{lem:degree-sample}.
If $G'$ has a tame $(6K, w)$-spider for $w \geq 18K$, then 
$G$ has a semicomplete subgraph that 
contains a $(k, k)$-spider, by Lemma~\ref{lem:spider-sample}.
Finally, suppose $G'$ has a $(6K, 18K)$-matching tangle $(T_1, T_2)$
with matching bijection $\phi$. We observe that, for each $v \in T_1$,
the edge $(v, \phi(v))$ of $G'$ is in fact an edge
of $G$, since 
\begin{eqnarray*}
\doutG(\phi(v)) &\geq & d^+_{G'}(\phi(v)) - h
 \geq  d^+_{G'}(v) + 18K - h
 >  d^+_{G'}(v)\\
&\geq & \doutG(v)
\end{eqnarray*}
and the edge addition rule for constructing $G'$ from $G$ dictates
that if an edge between $v$ and $\phi(v)$ is added 
then it must be from 
$\phi(v)$ to $v$. Therefore, 
Lemma~\ref{lem:matching-sample} applies and $G$ has a  
semicomplete subgraph with a $(k, k)$-matching tangle.

In either case, we conclude that $G$ contains a semicomplete
subgraph of pathwidth at least $k$. This completes the
proof of Theorem~\ref{thm:comb}.

\section{Proof of Theorem~\ref{thm:sample-indep}}
\label{sec:sample-indep}
The goal of this section is to prove Theorem~\ref{thm:sample-indep}, 
which we restate below.
Graphs are undirected in this section and we use
the following notation.
For each $v \in V(G)$,
$N_G(v)$ is the set of neighbors of $v$ and 
$N_G[v] = N_G(v) \cup \{v\}$;
for each $U \subseteq V(G)$, 
$N_G[U] = \bigcup_{u \in U} N_G[u]$ 
and $N_G(U) = N_G[U] \setminus U$.

\begingroup
\def\thetheorem{\ref{thm:sample-indep}}
\begin{theorem}
Let $G$ be an undirected graph on $n$ vertices 
with maximum degree $d$ or smaller.
Let $p = \frac{1}{2d + 1}$.
Then, it is possible to sample a set $I$ of independent vertices of $G$ so that
$\prob(v \in I) = p$ for each $v \in V(G)$ and,   
for each $S \subseteq V(G)$,  
we have  
\begin{eqnarray*}
 \prob(|S \cap I| > p|S|+ t) <
 \exp\left(-\frac{t^2}{9|S|}\right)
 \end{eqnarray*}
 and 
 \begin{eqnarray*}
 \prob(|S \cap I| <p|S| - t) <
 \exp\left(-\frac{t^2}{9|S|}\right).
 \end{eqnarray*}
\end{theorem}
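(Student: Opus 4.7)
The approach I would take is a local, competing-minima construction combined with a negative-association argument for concentration. Assign each vertex $v$ an independent uniform random variable $r_v \in [0,1]$, and additionally give $v$ a private bank of $d - \deg_G(v)$ independent ``phantom-neighbor'' uniforms $r_v^{(1)}, \ldots, r_v^{(d - \deg_G(v))}$, so that every vertex effectively competes against exactly $d$ other variables. Fix a threshold $\beta \in (0,1)$ and declare $v \in I$ precisely when $r_v \leq \beta$, $r_v < r_u$ for every real neighbor $u \in N_G(v)$, and $r_v < r_v^{(i)}$ for every phantom $i$. A direct one-line integration gives
\[
\Pr(v \in I) \;=\; \int_0^\beta (1-r)^{d}\,dr \;=\; \frac{1 - (1-\beta)^{d+1}}{d+1},
\]
which is independent of $\deg_G(v)$. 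Choosing $\beta = 1 - \bigl(d/(2d+1)\bigr)^{1/(d+1)}$ makes this equal to $p = 1/(2d+1)$. Two adjacent vertices cannot both be strict minima of their overlapping neighborhoods, so $I$ is automatically independent in $G$.

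For the concentration, I would exploit locality: the indicator $\mathbf{1}[v \in I]$ depends only on $\{r_u : u \in N_G[v]\}$ together with $v$'s own phantoms, so indicators for vertices at graph-distance $\geq 3$ are outright independent. The central step is to upgrade this into negative association (NA) of the whole family $\{\mathbf{1}[v \in I] : v \in V(G)\}$. The intuition is standard for ``random-minima'' fields: having $v$ win its neighborhood tournament only makes it harder for a nearby $w$ to win its overlapping tournament. Once NA is established, the Hoeffding--Chernoff bound for bounded negatively-associated random variables applies verbatim to $|S \cap I| = \sum_{v \in S} \mathbf{1}[v \in I]$ and yields
\[
\Pr\bigl(\bigl||S \cap I| - p|S|\bigr| > t\bigr) \;\leq\; 2\exp\!\bigl(-2t^2/|S|\bigr),
\]
which is comfortably stronger than the stated $\exp(-t^2/(9|S|))$ (the slack allowing for one-sided versus two-sided formulations).

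The main obstacle is establishing NA cleanly in the presence of the phantom neighbors and the threshold $\beta$, since the textbook ``random-minima'' NA arguments are usually phrased for symmetric setups and one must verify that truncation by $\beta$ and asymmetric phantom counts preserve the property (via the standard closure of NA under monotone coordinate-disjoint functionals). Should the NA route prove awkward, a viable backup is a Doob-martingale argument: reveal the $r_v$ in a fixed order and apply Freedman's inequality. The $v$-th increment is bounded by $|N_G[v] \cap S|$, but crucially its conditional variance is only $O(p \cdot |N_G[v] \cap S|^2)$ because the increment is nonzero only on the event that $v$'s randomness is actually decisive for some $u \in N_G[v] \cap S$, which has probability $O(p)$. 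Summing gives a total conditional variance of $O\bigl(p(d+1)|S|\bigr) = O(|S|)$ thanks to $p = 1/(2d+1)$, and Freedman then delivers the claimed sub-Gaussian bound with an absolute constant that can be brought down to $9$.
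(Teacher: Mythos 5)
Your phantom-neighbor sampling scheme is valid: it gives the right marginal $\prob(v\in I)=p$ and an independent set of $G$, and it is genuinely different from the paper's construction, which samples sequentially from a shrinking candidate set by repeatedly picking a uniform vertex of a $d$-regular supergraph (built via Erd\H{o}s--Kelly regular completion) and deleting its closed neighborhood. Both schemes equalize marginals by ``padding to degree $d$''---you with private phantom uniforms, the paper with an actual regular supergraph---so the two are close in spirit on that front.

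The concentration argument is where the gap lies, and both of your proposed routes fail as stated. Negative association does \emph{not} hold for this random-minima field: for non-adjacent $v,w$ with a common neighbor $u$, the events $\{v\in I\}$ and $\{w\in I\}$ are conditionally independent given $r_u$, with $\Pr(v\in I\mid r_u)=\Pr(w\in I\mid r_u)=g(r_u)$ for a non-constant increasing function $g$, so $\Pr(v\in I,\,w\in I)=\ex[g(r_u)^2]>\ex[g(r_u)]^2=\Pr(v\in I)\Pr(w\in I)$, i.e.\ \emph{positive} correlation---the opposite of your intuition that ``having $v$ win makes it harder for $w$.'' For the Freedman backup, your per-step conditional-variance bound $O(p\,c_i^2)$ with $c_i=|N_G[v_i]\cap S|$ is plausible, but the summation is miscounted: $\sum_i c_i^2$ can be as large as $(d+1)^2|S|$ (take $S$ a clique on $d+1$ vertices, where $c_i=d+1$ for every $i\in S$), so $\sum_i p\,c_i^2=\Theta(d\,|S|)$ rather than the $O(|S|)$ you wrote, and Freedman then yields only $\exp\bigl(-\Omega(t^2/(d|S|))\bigr)$---a factor $d$ weaker than the target $\exp(-t^2/(9|S|))$. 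The paper sidesteps exactly this difficulty by arranging for its Doob martingale to have few steps with $O(1)$ increments: the sequential sampling removes a closed $d$-regular neighborhood ($d+1$ vertices) per step, so there are only $s=\lceil n/(d+1)\rceil$ steps, and a further coarsening of the filtration to $m=3|S|$ steps handles the regime $|S|<s/2$. To close your gap you would need an analogous device---revealing the $r_v$ in structured batches, or a per-step variance bound that genuinely sums to $O(|S|)$---neither of which follows from what you wrote.
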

\addtocounter{theorem}{-1}
\endgroup

\smallskip
A naive sampling method is to keep a set $V$ of candidate vertices and
repeatedly pick a random vertex from $V$ to add to $I$,   
removing the selected vertex and all of its neighbors from $V$.
This procedure would 
produce an independent set of cardinality at least $n / (d + 1)$. 
The exact probability of each vertex being in $I$, however, would depend
on the structure of $G$. To achieve the uniform probability as claimed
in the above theorem, we sample, at each step, from a $d$-regular
supergraph of $G[V]$ rather than from $G[V]$ itself.

We need the following theorem on regular completion of graphs due to 
Erd\H{o}s and Kelly.

\begin{theorem}
\label{thm:reg_comp}
\cite{EK63}
Let $G$ be an undirected graph on $n$ vertices
and $d$ an integer such that $d_G(v) \leq d$
for every $v \in V(G)$. Let $t = \sum_{v \in V(G)} (d - d_G(v))$.
Then, there is a $d$-regular graph on
$n + m$ vertices that has $G$ as an induced subgraph
if and only if $m$ satisfies all of the following four conditions:
\newline\noindent (1) $md \geq t$;
\newline\noindent (2) $m^2 - m(d + 1) + t \geq 0$;
\newline\noindent (3) $m \geq d - d_G(v)$ for every $v \in V(G)$; and
\newline\noindent (4) $(n + m)d$ is an even integer.
\end{theorem}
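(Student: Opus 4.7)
I handle the two directions of the ``iff'' separately. \emph{Necessity:} Suppose $H$ is a $d$-regular graph on $n+m$ vertices containing $G$ as an induced subgraph, and set $U = V(H) \setminus V(G)$. Each $v \in V(G)$ has exactly $d - d_G(v)$ neighbors in $U$, forcing $m \geq d - d_G(v)$ and yielding (3). Summing, the number of edges between $V(G)$ and $U$ equals $t$; since each of the $m$ vertices of $U$ contributes at most $d$ such edges, $md \geq t$, giving (1). The handshake lemma on $H$ gives $(n+m)d$ even, which is (4). Writing $a_u$ for the number of edges from $u \in U$ to $V(G)$, the sum of degrees inside $H[U]$ equals $\sum_u (d - a_u) = md - t$, which is at most $m(m-1)$; rearranging yields (2).

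\emph{Sufficiency:} Assume (1)--(4), introduce a disjoint vertex set $U$ with $|U| = m$, and write $b_v = d - d_G(v)$ for $v \in V(G)$, so $\sum_v b_v = t$. The task reduces to choosing integers $a_u \geq 0$ for $u \in U$ with $\sum_u a_u = t$, and then realizing (i) a bipartite graph between $V(G)$ and $U$ in which $v$ has degree $b_v$ and $u$ has degree $a_u$, and (ii) a simple graph on $U$ in which $u$ has degree $d - a_u$. My plan is to take the $a_u$'s as balanced as possible, each equal to $\lfloor t/m \rfloor$ or $\lceil t/m \rceil$ with $\sum_u a_u = t$. Then (1) gives $\lceil t/m \rceil \leq d$, so $a_u \leq d$; (2), rearranged as $t/m \geq d+1-m$, gives $\lfloor t/m \rfloor \geq d+1-m$, so $d - a_u \leq m - 1$, the feasibility bound for a simple graph on $|U| = m$ vertices; (3) gives $b_v \leq m$, the feasibility bound for degrees into $U$; and (4) ensures the internal degree sum $md - t$ is even. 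Once these four inequalities hold, realizability of the bipartite half will follow from Gale--Ryser and realizability of the near-regular sequence $(d - a_u)$ on $U$ from Erd\H{o}s--Gallai.

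The delicate step will be the sufficiency direction: verifying that the balanced choice of $(a_u)$ simultaneously satisfies both the Gale--Ryser inequalities for the bipartite part and the Erd\H{o}s--Gallai inequalities for the internal part. Boundary regimes warrant careful case analysis -- when $m \leq d$ the lower bound from (2) is binding, when some $b_v$ equals $m$ the bound from (3) is binding, and parity clashes are exactly what (4) is calibrated to eliminate. Writing out these cases and confirming that the four stated conditions are tight at precisely the points where realizability could fail is the technical heart of the argument.
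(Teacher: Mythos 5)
The paper does not prove this statement at all: it is quoted verbatim from Erd\H{o}s and Kelly \cite{EK63} and used as a black box in Lemma~\ref{lem:paving}, so there is no in-paper argument to compare against. Judged on its own terms, your necessity direction is complete and correct: with $U = V(H)\setminus V(G)$ and $a_u$ the number of $G$-neighbors of $u\in U$, inducedness gives each $v$ exactly $d-d_G(v)$ neighbors in $U$, and conditions (1)--(4) fall out of counting cross edges, the handshake lemma, and the bound $\sum_u(d-a_u)=md-t\le m(m-1)$ on the internal degree sum of $H[U]$. Your sufficiency plan is also sound, and the two verifications you defer do go through with the balanced choice $a_u\in\{\lfloor t/m\rfloor,\lceil t/m\rceil\}$: for the bipartite half, the Gale--Ryser inequalities $\sum_{i\le k}b_{(i)}\le\sum_u\min(a_u,k)$ split into exactly two cases --- if $k\le\lfloor t/m\rfloor$ the right side is $mk$ and the left side is at most $km$ by condition (3), while if $k\ge\lceil t/m\rceil$ the right side is $t=\sum_v b_v$ --- and no intermediate $k$ exists since the two thresholds differ by at most one; for the internal half, a degree sequence on $m$ vertices taking only the two values $d-\lfloor t/m\rfloor$ and $d-\lceil t/m\rceil$, bounded between $0$ and $m-1$ by (1) and (2) and with even sum by (4) (note $md-t\equiv(n+m)d\pmod 2$ since $nd-t=2|E(G)|$), is a standard graphical case of Erd\H{o}s--Gallai. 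The only criticism is that you have labelled these checks ``the technical heart'' and then not performed them; they should be written out, but neither hides a failure, and indeed the Gale--Ryser step can even be replaced by a direct round-robin assignment of the $t$ cross edges. So: no genuine gap, just an unfinished (and correctly scoped) second half of an argument the paper itself omits.
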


Akiyama {\it et al.} \cite{AEH83} proved that, 
for every graph $G$ on $n$ vertices with maximal degree $d$ or smaller,
there is a $d$-regular graph on $N \leq n + d + 2$ vertices ($N \leq n + d + 1$
if $nd$ is even) that contains $G$ as a (not necessarily induced) subgraph.
The following lemma states that every integer $N \geq n + d + 1$ with 
$Nd$ even has that property.
The proof is, naturally, analogous to the one in \cite{AEH83}.
\begin{lemma}
\label{lem:paving}
Let $G$ be a graph on $n$ vertices with maximum degree $d$ or smaller
and $N$ an arbitrary integer such that $N \geq n + d + 1$ and $Nd$ is even.  
Then, there is a $d$-regular graph on $N$ vertices that contains $G$ as a 
subgraph.
\end{lemma}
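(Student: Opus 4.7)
The plan is to reduce to the regular-completion theorem of Erd\H{o}s and Kelly (Theorem~\ref{thm:reg_comp}), which gives necessary and sufficient conditions for a $d$-bounded graph to extend to a $d$-regular \emph{induced} supergraph on a prescribed number of vertices. Since the lemma only requires $G$ to be contained as a subgraph, not induced, we have the freedom to first augment $G$ with additional edges (keeping max degree at most $d$) and then invoke Theorem~\ref{thm:reg_comp} on the augmented graph.

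Set $m = N - n$, so $m \ge d + 1$. I would first observe that three of the four conditions in Theorem~\ref{thm:reg_comp} are automatic: condition (3), $m \ge d - d_G(v)$, follows from $m \ge d + 1$; condition (2), $m^2 - m(d+1) + t \ge 0$, follows because $m(m - d - 1) \ge 0$ and $t \ge 0$; and condition (4), that $Nd$ is even, is a hypothesis of the lemma. Only condition (1), namely $md \ge t$ with $t = \sum_{v} (d - d_G(v))$, can fail. If it already holds, applying Theorem~\ref{thm:reg_comp} to $G$ finishes the proof.

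If instead $t > md$, the plan is to repeatedly add an edge between two non-adjacent vertices, each of degree strictly less than $d$; any such addition decreases $t$ by $2$ while preserving the max-degree bound. Iterating until the deficiency drops to $t' \le md$ produces a graph $G'$ on $V(G)$ with $G \subseteq G'$, maximum degree at most $d$, and $t' \le md$; the other three conditions for $G'$ are verified exactly as for $G$, so Theorem~\ref{thm:reg_comp} then yields a $d$-regular graph on $N$ vertices containing $G'$, and hence $G$, as a subgraph.

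The hard part will be verifying that such an edge always exists while $t > md$; this is precisely where the hypothesis $N \ge n + d + 1$ enters. If no candidate edge existed, the set $V_< = \{v : d_G(v) < d\}$ would induce a clique in $G$, forcing every $v \in V_<$ to satisfy $d_G(v) \ge |V_<| - 1$ and hence $|V_<| \le d$; this would give $t \le d \cdot |V_<| \le d^2 < d(d+1) \le md$, contradicting $t > md$. So the iteration never gets stuck before reducing the deficiency sufficiently, and the plan goes through.
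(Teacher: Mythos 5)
Your proof is correct and takes essentially the same approach as the paper: the paper passes directly to a \emph{maximal} $d$-bounded supergraph $H$ of $G$, observes that the set $D$ of deficient vertices must be a clique of size at most $d$ (so $t \le d^2 < md$), and then checks all four Erd\H{o}s--Kelly conditions for $H$; your iterative edge-addition with the ``stuck implies $V_<$ is a clique'' contradiction is just the same observation unrolled one edge at a time.
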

\begin{proof}
Let $H$ be a maximal graph on $V(G)$ with
maximum degree $d$ that contains all the edges of $G$.
Let $D = \{v \in V(G) \mid d_{H}(v) < d\}$.
From the maximality of $H$, $D$ must be a clique of $H$ and
hence $|D| \leq d$. It trivially follows that 
$t = \sum_{v \in V(G)} (d - d_{H}(v)) \leq d^2$.
Setting $m = N - n \geq d + 1$, conditions (1), (2) and
(3) of Theorem~\ref{thm:reg_comp} are trivially satisfied. Condition
(4) is also satisfied as we are assuming $Nd$ is even.
Thus, we may apply Theorem~\ref{thm:reg_comp} to $H$ to have 
a $d$-regular graph that contains $H$ and hence $G$ as
a subgraph.
\qed 
\end{proof}

We now describe the sampling procedure of Theorem~\ref{thm:sample-indep}.
Fix a graph $G$ on
$n$ vertices with maximum degree $d$ or smaller.
Let $s = \lceil n / (d + 1) \rceil$.
We construct a sequence of pairs $(I_i, V_i)$ for
$0 \leq i \leq s$, where $\emptyset = I_0 \subseteq I_1 \subseteq \ldots
\subseteq I_s$ and $V(G) = V_0 \supseteq V_1 \supseteq \ldots
\supseteq V_s$. Our independent set $I$ is $I_s$.

Fix $i$, $0 \leq i < s$ and suppose we have constructed
$I_i$ and $V_i$. We construct 
$I_{i + 1}$ and $V_{i + 1}$ as follows.
Let $n_i = (2s - i)(d + 1)$. 
Since $i < s$, we have $n_i \geq n + d + 1 \geq |V_i| + d + 1$. Moreover, 
$n_i d$ is even as $d + 1$ divides $n_i$.
Therefore, Lemma~\ref{lem:paving} applies and there is a $d$-regular
supergraph $H_i$ of $G[V_i]$ on $n_i$ vertices. 
We pick a vertex $v$ of $H_i$ uniformly at random. 
If $v \in V_i$ then we set $I_{i + 1} = I_i \cup \{v\}$;
otherwise, we set $I_{i + 1} = I_i$.
In either case, we set 
$V_{i + 1} = V_i \setminus (\{v\} \cup N_{H_i}(v))$.
Since $H_i$ is a supergraph of $G[V_i]$, this ensures 
that $v$ is independent, in $G$, of all vertices in $V_{i + 1}$.
By a straightforward induction, $I_i$ is an independent set 
of $G$, $V_i \subseteq V(G) \setminus I_i$, and
there is no edge of $G$ between $I_i$ and $V_i$, 
for $0 \leq i \leq s$.

\begin{remark}
To make $I_i$ and $V_i$ well-defined random variables for
$0 \leq i \leq s$, we assume that the $d$-regular supergraph $H_i$
of $G[V_i]$ used above is uniquely determined from $V_i$ and $n_i$ by
some deterministic procedure relying on some predefined total order on
$V(G)$ for tie-breaking.
\end{remark}

\begin{lemma}
For each $v \in V(G)$ and $0 \leq i \leq s$,  
\begin{eqnarray*}
\prob(v \in I \mid v \in V_i) = \frac{s - i}{n_i}.
\end{eqnarray*}
\end{lemma}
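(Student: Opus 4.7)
The plan is to prove the identity by backward induction on $i$, starting from $i = s$ and descending. The key observation that makes this clean is the identity $n_i - (d+1) = n_{i+1}$, which is baked into the definition $n_i = (2s-i)(d+1)$.

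For the base case $i = s$, we use the fact, already established just before the lemma, that $V_s \subseteq V(G) \setminus I_s = V(G) \setminus I$. Hence if $v \in V_s$ then $v \notin I$, so $\prob(v \in I \mid v \in V_s) = 0$, matching $(s-s)/n_s = 0$.

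For the inductive step, assume the claim holds for $i+1$ and condition on $v \in V_i$. Let $u$ be the vertex drawn uniformly at random from $V(H_i)$ at step $i$. Because $H_i$ is a $d$-regular graph on $n_i$ vertices containing $v$, exactly three (disjoint and exhaustive) events can occur: (a) $u = v$, with probability $1/n_i$, in which case $v$ is placed in $I_{i+1}$ and hence in $I$; (b) $u \in N_{H_i}(v)$, with probability $d/n_i$, in which case $v$ is removed from $V_{i+1}$ and never enters $I$; (c) $u \notin N_{H_i}[v]$, with probability $(n_i - d - 1)/n_i$, in which case $v \in V_{i+1}$. In case (c) the inductive hypothesis gives $\prob(v \in I \mid v \in V_{i+1}) = (s-i-1)/n_{i+1}$. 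Combining,
\[
\prob(v \in I \mid v \in V_i) = \frac{1}{n_i} + \frac{n_i - d - 1}{n_i}\cdot \frac{s-i-1}{n_{i+1}},
\]
and since $n_i - d - 1 = (2s - i - 1)(d+1) = n_{i+1}$, the middle factor simplifies to $1$, yielding $(1 + (s - i - 1))/n_i = (s - i)/n_i$, as required.

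I do not expect a real obstacle here; the content of the argument is just that the $d$-regular paving makes the three cases above symmetric enough for the ``remove a neighbor'' and ``remain in $V_{i+1}$'' fractions to telescope. The only point requiring a little care is to check that conditioning on the event \emph{``$v \in V_{i+1}$''} truly reduces to the situation addressed by the inductive hypothesis, which holds because $(I_j, V_j)$ for $j > i$ is generated from $V_{i+1}$ by an identically structured procedure, independent of the step-$i$ draw once $V_{i+1}$ is fixed.
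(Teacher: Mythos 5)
Your proof is correct and follows essentially the same route as the paper: backward induction on $i$, with the telescoping identity $n_i - (d+1) = n_{i+1}$ doing the work in the inductive step. The only (cosmetic) difference is that you explicitly spell out the zero-contribution case $u \in N_{H_i}(v)$ and the base case, both of which the paper leaves implicit.
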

\begin{proof}
The proof is by induction on $s - i$. 
The base case $i = s$ is trivial.
For the induction step, suppose $i < s$. 
Using the induction hypothesis, we have 
\begin{eqnarray*}
\label{eqn:1}
\prob(v \in I \mid v \in V_i) & = &
\prob(v \in I_{i + 1} \mid v \in V_i) +
\prob(v \in V_{i + 1} \mid v \in V_i) \prob(v \in I \mid v \in V_{i +
1}) \\
& = & \frac{1}{n_i} + \frac{n_i - (d + 1)}{n_i}\cdot\frac{s - i - 1}{n_{i +
1}}\\
& = & \frac{1}{n_i} + \frac{n_{i + 1}}{n_i}\cdot\frac{s - i - 1}{n_{i +
1}}\\
& = & \frac{s - i}{n_i}.
\end{eqnarray*}
\qed
\end{proof}
\begin{corollary}
For each $v \in V(G)$, we have
\begin{eqnarray*}
\prob(v \in I) = \frac{1}{2(d + 1)}.
\end{eqnarray*}
\end{corollary}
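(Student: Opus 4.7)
The plan is a direct specialization of the preceding lemma to the initial index $i = 0$. By construction of the sampling procedure, the initial candidate set is $V_0 = V(G)$, so the event $\{v \in V_0\}$ occurs with probability $1$ for every vertex $v$. Consequently the conditional probability appearing in the lemma collapses, at $i = 0$, to the unconditional probability $\prob(v \in I)$ that the corollary asks about. It then remains only to substitute $i = 0$ into the closed form $(s - i)/n_i$: since $n_0 = (2s - 0)(d+1) = 2s(d+1)$, we obtain
\begin{eqnarray*}
\prob(v \in I) \;=\; \prob(v \in I \mid v \in V_0) \;=\; \frac{s}{n_0} \;=\; \frac{s}{2s(d+1)} \;=\; \frac{1}{2(d+1)},
\end{eqnarray*}
which is the claim.

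I do not expect any obstacle: the inductive heart of the argument is already carried out in the preceding lemma, and the corollary is just the readout at the starting index. The only sanity check is that $s = \lceil n/(d+1) \rceil \geq 1$, so that $n_0$ is well defined and the cancellation of $s$ is legitimate, which holds whenever $G$ is nonempty. One might remark, finally, that the factor of $2$ in the denominator is an artifact of the deliberate over-padding $n_i = (2s - i)(d+1)$ used to build the regular supergraphs $H_i$ via Lemma~\ref{lem:paving}: this padding is generous enough to guarantee the hypotheses of that lemma at every step, while simultaneously producing a marginal probability that depends only on $d$, independent of $n$ and of the finer structure of $G$, which is precisely the uniformity required for the concentration half of Theorem~\ref{thm:sample-indep}.
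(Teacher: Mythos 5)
Your proof is correct and is exactly the intended derivation: the paper omits the proof precisely because the corollary is the $i=0$ specialization of the preceding lemma, using $V_0 = V(G)$ and $n_0 = 2s(d+1)$. Nothing more is needed.
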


Therefore, we have, for each vertex set $S \subseteq V(G)$,
\begin{eqnarray*}
\ex[|S \cap I|] = \frac{|S|}{2(d + 1)}.
\end{eqnarray*} 
We show that the value $|S \cap I|$ is
sharply concentrated around its expectation, to establish
Theorem~\ref{thm:sample-indep}.
We assume $d \geq 1$ in the following analysis: the case
$d = 0$ is trivial.

Fix $S \subseteq V(G)$.
We first consider the case where $|S| \geq \frac{s}{2}$.
We define a random variable $Y_i$ for $0 \leq i \leq s$ by
\begin{eqnarray*}
Y_i = \ex[|S \cap I| \mid (I_0, V_0), (I_1, V_1), \ldots, 
(I_i, V_i)],
\end{eqnarray*}
where the expectation is conditioned on the partial outcome of the 
experiment up to the construction of $I_i$ and $V_i$.
We have 
\begin{eqnarray*}
Y_s & = &|S \cap I|,\\
Y_0 & = & \ex[|S \cap I|] = \frac{|S|}{2(d + 1)}, 
\end{eqnarray*}
and, for $0 \leq i < s$,
\begin{eqnarray*}
Y_i = \ex[Y_{i + 1} \mid (I_0, V_0), (I_1, V_1), \ldots, 
(I_i, V_i)],
\end{eqnarray*}
where the expectation is conditioned similarly to the above.
Therefore, the sequence $Y_0$, \ldots, $Y_s$ is a martingale.

We show that
\begin{eqnarray}
\label{eqn:bounded_diff1}
|Y_i - Y_{i-1}| \leq \frac{3}{2}
\end{eqnarray}
holds for $0 < i \leq s$. 
We have 
\begin{eqnarray*}
Y_i & = & |S \cap I_i| + 
\sum_{v \in S \cap V_i} \prob(v \in I \mid v \in V_i)\\
& = & |S \cap I_i| + \frac{|S \cap V_i|(s - i)}{n_i}.
\end{eqnarray*}
Since both $|S \cap V_i|$ and the fraction $(s - i) / n_i$ 
are monotone non-increasing in $i$ and 
$|S \cap I_i| - |S \cap I_{i-1}|
\leq 1$, we have $Y_i -
Y_{i-1} \leq 1$.
We also have 
\begin{eqnarray*}
Y_{i-1} - Y_i & \leq &\frac{|S \cap V_{i-1}|(s - (i-1))}{n_{i-1}}
- \frac{|S \cap V_i|(s - i)}{n_i} \\
& = & \frac{(|S \cap V_{i-1}|- 
|S \cap V_i|)(s - (i-1))}{n_{i - 1}}\\
&& + 
|S \cap V_i|\left(\frac{s - (i-1)}{n_{i-1}} - \frac{s -
i}{n_i}\right)\\
& \leq & \frac{(d + 1)(s - (i-1))}{n_{i-1}} + 
\frac{|S \cap V_i|}{n_i}\\
& \leq & \frac{s - (i-1)}{2s - (i-1)} + 
\frac{|S|}{n}\\
& \leq & \frac{3}{2}
\end{eqnarray*}
and hence (\ref{eqn:bounded_diff1}).

We use the following form of Azuma's inequality \cite{AS92}.
Let $X_0$, $X_1$, \ldots, $X_m$ be a martingale with
\begin{eqnarray*}
|X_{i + 1} - X_i| \leq 1
\end{eqnarray*}
for all $0 \leq i < m$. Let $\lambda > 0$ be arbitrary.
Then, 
\begin{eqnarray}
   \prob(X_m > X_0 + \lambda \sqrt{m}) < \exp(-\lambda^2 / 2)
\end{eqnarray}
and 
\begin{eqnarray}
   \prob(X_m < X_0 - \lambda \sqrt{m}) < \exp(-\lambda^2 / 2)
\end{eqnarray}

Applying this inequality for martingale 
$Y'_i = \frac{2}{3} Y_i$, $0 \leq i \leq s = m$,
with $\lambda = \frac{2t}{3\sqrt{s}}$, 
we have 
\begin{eqnarray*}
\prob(Y_s > Y_0 + t) & = & 
\prob(Y'_s > Y'_0 + \frac{2t}{3}) \\
& < &  \exp\left(-\frac{4t^2}{9\cdot 2s}\right)\\
& \leq & \exp\left(-\frac{t^2}{9|S|}\right)
\end{eqnarray*}
and, similarly,
\begin{eqnarray*}
\prob(Y_s < Y_0 - t) & < &  \exp\left(-\frac{t^2}{9|S|}\right),
\end{eqnarray*}
finishing the case where $|S| \geq \frac{s}{2}$.

We turn to the case where $|S| < \frac{s}{2}$.
We define a sequence $i_0$, $i_1$, \ldots, $i_m$ of indices,
where $m = 3|S|$, that depends on the outcome of the sampling,
inductively as follows.
\begin{enumerate}
  \item $i_0 = 0$.
  \item For $j > 0$, $i_j$ is the smallest $i \geq i_{j - 1}$ that satisfies
  either of the following conditions:
  \newline(1) $i = s$;
  \newline(2) $V_i \cap S \neq V_{i_{j - 1}} \cap S$;
  \newline(3) $i - i_{j - 1} \geq \frac{s}{2|S|}$.
\end{enumerate}
Note that if $i_j = s$ for some $j$, then we have
$i_{j'} = s$ for $j \leq j' \leq m$. We also note
that $i_m = s$, since, in determining $i_j$ for $1 \leq j \leq m$, 
the second condition may apply at most $|S|$ times 
and the third condition at most
$2|S|$ times, but at most $2|S| - 1$ times if 
the second condition applies at all.

We define a random variable $Z_j$ for $0 \leq j \leq m$ by
\begin{eqnarray*}
Z_j = \ex[|S \cap I| \mid (I_0, V_0), (I_1, V_1), \ldots, 
(I_{i_j}, V_{i_j})],
\end{eqnarray*}
where the expectation is conditioned on the partial outcome of the 
experiment up to the construction of $I_{i_j}$ and $V_{i_j}$.
We have 
\begin{eqnarray*}
Z_m & = &|S \cap I|,\\
Z_0 & = & \ex[|S \cap I|] = \frac{|S|}{2(d + 1)}, 
\end{eqnarray*}
and, for $0 \leq j < s$,
\begin{eqnarray*}
Z_j = \ex[Z_{j + 1} \mid (I_0, V_0), (I_1, V_1), \ldots, 
(I_{i_j}, V_{i_j})],
\end{eqnarray*}
where the expectation is conditioned similarly to the above.
Therefore, the sequence $Z_0$, \ldots, $Z_m$ is a martingale.

We show that
\begin{eqnarray}
\label{eqn:bounded_diff}
|Z_j - Z_{j-1}| \leq 1
\end{eqnarray}
holds for $0 < j \leq m$. 
We have 
\begin{eqnarray*}
Z_j & = & |S \cap I_{i_j}| + 
\sum_{v \in S \cap V_{i_j}} \prob(v \in I \mid v \in V_{i_j} )\\
& = & |S \cap I_{i_j}| + \frac{|S \cap V_{i_j}|(s - i_j)}{n_{i_j}}.
\end{eqnarray*}
Since both $|S \cap V_{i_j}|$ and the fraction $(s - i_j) / n_{i_j}$ 
are monotone non-increasing in $j$ and $|S \cap I_{i_j}| - |S \cap I_{i_{j-1}}|
\leq 1$ by the second condition in the definition of $i_j$, we have $Z_j -
Z_{j-1} \leq 1$.
We also have 
\begin{eqnarray*}
\frac{s - i_{j-1}}{n_{i_{j - 1}}} -
\frac{s - i_j}{n_{i_j}} & \leq &  
\frac{i_j - i_{j-1}}{n_s} \\
& \leq &  \left(\frac{s}{2|S|} + 1\right)\frac{1}{(d + 1)s}\\
& \leq & \frac{1}{2(d + 1)|S|} + \frac{1}{(d + 1)s} \\
& \leq & \frac{1}{2(d + 1)|S|} + \frac{1}{2(d + 1)|S|} \\
& \leq & \frac{1}{2|S|}\\
\end{eqnarray*}
by the third condition in the definition of $i_j$ and
\begin{eqnarray*}
\label{eqn:diffx}
|S \cap V_{i_{j-1}}| - |S \cap V_{i_j}| \leq d + 1
\end{eqnarray*}
by the second condition. Therefore, we have
\begin{eqnarray*}
Z_{j-1} - Z_j & \leq &\frac{|S \cap V_{i_{j-1}}|(s - i_{j-1})}{n_{i_{j-1}}}
- \frac{|S \cap V_{i_j}|(s - i_j)}{n_{i_j}} \\
& = & \frac{(|S \cap V_{i_{j-1}}|- 
|S \cap V_{i_j}|)(s - i_{j-1})}{n_{i_{j - 1}}}\\
&& + 
|S \cap V_{i_j}|\left(\frac{s - i_{j-1}}{n_{i_{j -1}}} - \frac{s -
i_j}{n_{i_j}}\right)\\
& \leq & \frac{(d + 1)(s - i_{j-1})}{n_{i_{j-1}}} + 
\frac{|S \cap V_{i_j}|}{2|S|}\\
& \leq & \frac{s - i_{j - 1}}{2s - i_{j - 1}} + 
\frac{|S \cap V_{i_j}|}{2|S|}\\
& \leq & 1
\end{eqnarray*}
and hence (\ref{eqn:bounded_diff}).

Applying Azuma's inequality for this martingale with 
$\lambda = t / \sqrt{m}$, we have 
\begin{eqnarray*}
\prob(Z_m > Z_0 + t) & < &  \exp\left(-\frac{t^2}{2m}\right)\\
& \leq & \exp\left(-\frac{t^2}{6|S|}\right)
\end{eqnarray*}
and
\begin{eqnarray*}
\prob(Z_m < Z_0 - t) & < &  \exp\left(-\frac{t^2}{6|S|}\right),
\end{eqnarray*}
finishing the proof of Theorem~\ref{thm:sample-indep}.

\end{document}